\keywords{Geometric Network Design, Network Creation Games, Algorithmic Game Theory, Price of Anarchy, Approximate Equilibrium} 
\title{Efficiency and Stability in Euclidean Network Design} 
\author{Wilhelm Friedemann}
\affiliation{
	\institution{Hasso Plattner Institute\\ University of Potsdam}
	\city{Potsdam}
	\country{Germany}
}
\author{Tobias Friedrich}
\affiliation{
	\institution{Hasso Plattner Institute\\ University of Potsdam}
	\city{Potsdam}
	\country{Germany}
}
\author{Hans Gawendowicz}
\affiliation{
	\institution{Hasso Plattner Institute\\ University of Potsdam}
	\city{Potsdam}
	\country{Germany}
}
\author{Pascal Lenzner}
\affiliation{
	\institution{Hasso Plattner Institute\\ University of Potsdam}
	\city{Potsdam}
	\country{Germany}
}
\author{Anna Melnichenko}
\affiliation{
	\institution{Hasso Plattner Institute\\University of Potsdam}
	\city{Potsdam}
	\country{Germany}
}
\author{Jannik Peters}
\affiliation{
	\institution{Hasso Plattner Institute\\University of Potsdam}
	\city{Potsdam}
	\country{Germany}
}
\author{Daniel Stephan}
\affiliation{
	\institution{Hasso Plattner Institute\\University of Potsdam}
	\city{Potsdam}
	\country{Germany}
}
\author{Michael Vaichenker}
\affiliation{
	\institution{Hasso Plattner Institute\\University of Potsdam}
	\city{Potsdam}
	\country{Germany}
}
\newcolumntype{m}{>{\hsize=.5\hsize}X}
\pgfplotsset{compat=1.10}
\newcommand{\PG}[2]{\pi_{#1}(#2)}
\newcommand{\dist}{{\mathit d}} 
\newcommand{\SC}{\mathrm{SC}} 
\newcommand{\game}{$\mathbb{R}^d$--GNCG} 
\newcommand{\opt}[1]{\mathit {OPT_{#1}}} 
\newcommand{\apxNE}[1]{#1-NE\xspace} 
\newcommand{\apxNEopt}[2]{(#1, #2)\text{-network}}
\newcommand{\maxDist}{\mathit w_{max}}
\newcommand{\minDist}{\mathit w_{min}}
\newcommand{\Rd}{\mathbb{R}^d}
\DeclareMathOperator{\cost}{cost}
\newcommand{\norm}[1]{\left\lVert#1\right\rVert}
\newtheorem{conjecture}{Conjecture}
\newcommand{\excellone}[1]{
\renewcommand{\arraystretch}{1.5}
  \begin{tabular}[t]{@{}p{4.2cm}@{}}#1\end{tabular}
}
\newcommand{\excelltwo}[1]{
\renewcommand{\arraystretch}{1.5}
  \begin{tabular}[t]{@{}p{5.4cm}@{}}#1\end{tabular}
}
\newcommand{\excellthree}[1]{
\renewcommand{\arraystretch}{1.5}
  \begin{tabular}[t]{@{}p{4.2cm}@{}}#1\end{tabular}
}
\newcommand\footnoteref[1]{\protected@xdef\@thefnmark{\ref{#1}}\@footnotemark}
\begin{document}

\begin{abstract}
Network Design problems typically ask for a minimum cost sub-network from a given host network. This classical point-of-view assumes a central authority enforcing the optimum solution. But how should networks be designed to cope with selfish agents that own parts of the network? In this setting, minimum cost networks may be very unstable in that agents will deviate from a proposed solution if this decreases their individual cost. Hence, designed networks should be both efficient in terms of total cost and stable in terms of the agents' willingness to accept the network.  

We study this novel type of Network Design problem by investigating the creation of $(\beta,\gamma)$-networks, that are in $\beta$-approximate Nash equilibrium and have a total cost of at most $\gamma$ times the optimal cost, for the recently proposed Euclidean Generalized Network Creation Game by Bilò et al.~\cite{bilo2019geometric}. There, $n$ agents corresponding to points in Euclidean space create costly edges among themselves to optimize their centrality in the created network. Our main result is a simple $\mathcal{O}(n^2)$-time algorithm that computes a $(\beta,\beta)$-network with low $\beta$ for any given set of points. Moreover, on integer grid point sets or random point sets our algorithm achieves a low constant~$\beta$. Besides these results for the Euclidean model, we discuss a generalization of our algorithm to instances with arbitrary, even non-metric, edge lengths. 
Moreover, in contrast to these algorithmic results, we show that no such positive results are possible when focusing on either optimal networks, i.e., $(\beta,1)$-networks, or perfectly stable networks, i.e., $(1,\gamma)$-networks, as in both cases NP-hard problems arise, there exist instances with very unstable optimal networks, and there are instances for perfectly stable networks with high total cost.  
Along the way, we significantly improve several results from Bilò et al. and we asymptotically resolve their conjecture about the Price of Anarchy by providing a tight bound. 	
\end{abstract}

\maketitle

\section{Introduction}
Network Design is a classical and rich research area in Operations Research and Theoretical Computer Science. Core questions in the Network Design literature target how to construct networks with favorable properties like low total cost, high robustness, and good usability. Typically these questions have been addressed as combinatorial optimization problems. Many special cases like the Steiner Tree Problem~\cite{karp1972reducibility}, the Optimum Communication Spanning Tree Problem~(ND7 in~\cite{GJ02}), the creation of Geometric Spanner Networks~\cite{narasimhan2007geometric} or variants of the Network Design Problem~\cite{JLK78} have been thoroughly studied~\cite{MW84,gupta2011approximation,GairingHK14}.

However, almost all previous work on these problems simply assumes that a central authority exists that enforces the optimal network structure obtained by combinatorial optimization. This approach is obviously infeasible for networks with no central governing authority. In such networks the cost of maintaining the network is typically distributed among the network participants, and it can happen that the network with the minimum total cost is not stable, i.e., that selfish participants who own parts of the network prefer to restructure their part. This may be favorable for them individually, but not for the whole network, because this might yield a network with significantly higher total cost. Hence, stability corresponds to reaching an equilibrium in the strategic game that models the interaction of the selfish participants. It can happen that the total cost of any equilibrium of such a game is much higher than the total cost of the optimum network. 

Hence, ideally, in the realm of Network Design, we aim for networks that are efficient in terms of total cost and, at the same time, are as stable as possible. This naturally corresponds to a bi-criteria optimization problem, i.e., finding networks that are $(\beta,\gamma)$-approximate solutions, where $\beta$ is the approximation factor for the cost of the individual agent compared to the induced cost by her best possible strategy and $\gamma$ is the total cost ratio with the minimum possible total cost. Hence in a $(\beta,\gamma)$-network no agent can improve her cost via a strategy change by more than a factor of $\beta$, and the total cost is at most $\gamma$ times the cost of the social optimum state. Extreme cases are $(1,\gamma)$-networks, whose analysis yields bounds for the Price of Anarchy (PoA)~\cite{KP99} or the Price of Stability (PoS)~\cite{ADKTWR,CSSM04}, and $(\beta,1)$-networks that indicate how tolerant the agents have to be to accept the social optimum network~\cite{AL10}. 

In this work we set out to explore the design of $(\beta,\gamma)$-networks for a natural strategic Network Design setting, where a network between nodes that correspond to points in Euclidean space must be established. Each node represents
a selfish agent that strives for centrality in the created network and edges have a cost that is proportional to the Euclidean distance between their endpoints.

\subsection{Related Work}
We will focus our discussion on game-theoretic network formation models. For an overview over classical Network Design from combinatorial optimization, we refer to the surveys by Magnanti and Wong~\cite{MW84} and Gupta and Könemann~\cite{gupta2011approximation}. Also related to our work are results about geometric spanners, in particular, Euclidean spanners, that recently have drawn much attention~\cite{ENS15,CW18,FS20,LS19}. For a good overview over geometric spanners, we refer to the excellent book by Narasimhan and Smid~\cite{narasimhan2007geometric}.

Game-theoretic models for network formation can be divided into variants of the \emph{Network Design Game (NDG)} as introduced by Anshelevich et al.\cite{ADTW08,ADKTWR} and variants of the \emph{Network Creation Game (NCG)} as introduced by Fabrikant et al.~\cite{Fab03}. 

In the NDG a weighted host network is fixed, and agents want to connect subsets of nodes, called terminals, from the network. To do this, the agents decide on payments for the edges of the host network. Thus, the agents buy a sub-network of the host network such that all desired terminal connections are established. If agents can buy arbitrary cost-shares of the edges~\cite{ADTW08},  the PoA is~$n$, where $n$ is the number of agents. Moreover, $(3,1)$-networks exist, i.e., the cost of the social optimum network can be shared among the agents to achieve a $3$-approximate Nash equilibrium. Similar results have been achieved for tailored cost-sharing protocols~\cite{CRV08}, for the node weighted version~\cite{CR09}, and for a version that guarantees connectivity of the formed network~\cite{H09}. For the NDG with fair cost sharing~\cite{ADKTWR} the PoA is $n$, but the PoS is $H_n$, i.e., the $n$-th harmonic number, and the latter bound is tight for the version on directed host networks. Also, Albers and Lenzner~\cite{AL10} showed that instances with only $(\Omega(\log n),1)$-networks exist if all agents want to connect to the same terminal. For the general case instances with only $(\Omega(n),1)$-networks exist. 
Hoefer and Krysta~\cite{HK05} analyzed a geometric version of the NDG, where agents correspond to points in the Euclidean plane. They find that the PoA is $n$ and that the PoS for two agents with two terminals each is $1$. Recently, also a variant with fair cost sharing and topology dependent edge-cost was studied~\cite{BFLMM20}.

Much closer to our model is research on the NCG~\cite{Fab03}, where the agents correspond to nodes of a network and any node can establish undirected links to other agents for the cost of $\alpha$ per link, where $\alpha >0$ is a fixed parameter.
The created network then consists of the union of all links created by the agents. The goal of the agents is to minimize the sum of their cost for creating edges and their average distance to all other agents in the created network, i.e., their closeness centrality~\cite{newman10}. A long line of research~\cite{Fab03,Al06,De07,MS13,MMM15,AM17,BL18,AM18,AM19}
has established that the PoA of the NCG is constant for almost all~$\alpha >0$ and it is conjectured that this holds for all~$\alpha$ \cite{Fab03,ADHL13,MS12,MMM15}. Computing the best possible strategy of an agent in the NCG was shown to be NP-hard~\cite{Fab03} and this also holds for many NCG variants~\cite{MS12,CL15,CLMM16,CLMM17}. However, restricted variants with efficient best response computation also exist~\cite{BG00,ADHL13,Len12,Fried17}. 
Regarding the dynamics, it has been shown~\cite{L11,KL13} that many NCG variants do not have the \emph{finite improvement property (FIP)}~\cite{MS96}. Thus, natural convergence protocols, such as iterated best response dynamics, have no convergence guarantee.

There are only a few works that investigate NCG variants in a geometric setting. Eidenbenz et al.~\cite{EKZ06} consider agents corresponding to points in the Euclidean plane that strategically buy incident edges to create a connected network. Another related geometric game was proposed by Moscibroda et al.~\cite{MSW11}, where the agents pay a fixed price $\alpha >0$ for each edge and aim at minimizing their total stretch, where the stretch is the ratio of the shortest path length in the network and the geometric distance. Guylás et al.~\cite{Gul15} considered a NCG variant where agents correspond to uniformly sampled points in the hyperbolic plane that strive for maximum navigability. Bilò et al.~\cite{BFLLM20} considered a variant with a dynamically changing underlying geometry.
Finally, and closest to our work, Bilò et al.~\cite{bilo2019geometric} recently introduced the \emph{Generalized Network Creation Game (GNCG)} with a given weighted host network. It generalizes the NCG since edges can have weights, and the cost of an edge is defined as $\alpha$ times its weight, for $\alpha > 0$. Besides the version with arbitrary edge weights, variants with arbitrary metric weights, metric weights defined by a tree metric, and metric weights defined by Euclidean distance were proposed. The latter of which is the most natural setting for the creation of communication networks and will be our main focus. The authors of~\cite{bilo2019geometric} prove an upper bound on the PoA of $\left(\frac{\alpha+2}{2}\right)^2$ for the most general version and give a better PoA upper bound of $\frac{\alpha+2}{2}$ for metric weights that is conjectured to hold for the general case as well. For the version with weights defined by a tree metric the bound of $\frac{\alpha+2}{2}$ was shown to be tight. 
For the Euclidean version using the $2$-norm, the setting that is at the heart of our paper, the shown results are far from being tight: a constant lower bound on the PoA that is slightly above $3$ was provided while the upper bound is $\frac{\alpha+2}{2}$. Furthermore, it was shown that the PoA lower bound using the $1$-norm approaches $\frac{\alpha+2}{2}$ if the number of dimensions tends to infinity. Non-trivial bounds on the PoS are given only for the tree metric variant where the PoS is $1$. 
Moreover, it was shown that computing the best possible strategy for an agent is NP-hard for all versions. Regarding the FIP, the status for the Euclidean version with $p$-norm for $p\geq 2$ was left open. For all the other variants, it was shown that the FIP does not hold. Finally, concerning the existence of pure Nash equilibria, it is shown that equilibria exist for weights induced by a tree metric and for suitable $\alpha$ values if all weights are either $1$ or $2$. For the other cases with metric weights, only the existence of $3(\alpha+1)$-approximate equilibria is claimed.    

Besides NCG variants where agents minimize edge costs and distance costs, also models where the cost of an agent depends on her local clustering coefficient~\cite{BK11} or on the edge-connectivity of the created network~\cite{EFLM20} have been proposed. 

\subsection{Model and Notation}
We consider the recently introduced model by Bilò et al.~\cite{bilo2019geometric} for the distributed creation of a network by selfish agents with an underlying geometry, called the \emph{Generalized Network Creation Game (GNCG)}. Our main focus will be on the natural special case of the GNCG where the underlying geometry is Euclidean, called the \emph{Euclidean Generalized Network Creation Game $($\game$)$}.
In this model, a set of $n$ points $P = \{p_1,\dots,p_n\}$ in the $d$-dimensional Euclidean space $\mathbb{R}^d$ is considered, where each point $p_i = (p_{i,1},\dots,p_{i,d})$, for $1\leq i \leq n$, corresponds to a selfish agent. As we will see, all agents will jointly form an undirected weighted network $G = (P,E)$ among themselves, where $P$ is the set of nodes and $E$ is the edge set of $G$. Hence, we will use point, agent and node interchangeably. Moreover, we will use the shorthand $uv$ for denoting the edge $\{u,v\} \in E$. The weight of an edge $uv$ is denoted by $\norm{u,v}$ and it is defined as the metric induced by the $2$-norm\footnote{Our results can be adapted to any $p$-norm. We focus on the $2$-norm for the sake of presentation.}, i.e., for any two points $u=(u_1,\ldots, u_d), v=(v_1, \ldots, v_d)$, we have $\norm{u,v}\coloneqq\norm{u-v}_2= \sqrt{\left(\sum\limits_{k=1}^d(u_k-v_k)^2\right)}$. We will call $\norm{u,v}$ the \emph{length} of the edge $uv$. Let $\maxDist \coloneqq \max_{p_i,p_j \in P}\norm{p_i,p_j}$ and $\minDist \coloneqq \min_{p_i,p_j \in P, p_i \neq p_j}\norm{p_i,p_j}$ denote the longest and shortest distance between any two points in the point set $P$ and let $r \coloneqq \frac{\maxDist}{\minDist}$ be the \emph{aspect ratio} of $P$.
Furthermore for any network $G = (P,E)$ and any $u,v\in P$, let $\PG{G}{u,v} \subseteq E$ be a shortest path from $u$ to $v$ in $G$.

Each agent strategically decides which incident edges to buy in order to minimize her total distance to all other agents in the created network. More precisely, the strategy of agent $u$, denoted as $S_u$, is a subset of nodes in $P\setminus\{u\}$ to which agent $u$ wants to create an undirected edge.
For each $v\in S_u$, we say that $u$ is the \textit{owner} of the undirected edge $uv$. Edges are costly and we assume that the edge cost is proportional to the Euclidean distance between the endpoints of the respective edge. Formally, the price of an edge $uv$ that must be paid by its owner is equal to $\alpha\cdot \norm{u,v}$, where $\alpha>0$ is a fixed parameter of the game.

Any vector $\mathbf{s}=(S_{p_1},\dots,S_{p_n})$, is called a strategy profile. Every strategy profile $\mathbf{s}$ uniquely determines a created network $G(\mathbf{s}) = (P,E(\mathbf{s}))$, where $E(\mathbf{s})=\{p_ip_j\mid p_i\in P, p_j\in S_{p_i}\}$.\footnote{Note that if both $p_i \in S_{p_j}$ and $p_j \in S_{p_i}$ holds, then both agents $p_i$ and $p_j$ would pay $\alpha \norm{u,v}$ for the edge $uv$. As we will see, this cannot happen in any equilibrium.} We will omit the reference to $\mathbf{s}$ when it is clear from the context.

Let $d_G(u,v)$ denote the distance between two nodes $u, v$ in a network $G$, where $d_G(u,v)$ is the sum of the edge lengths of the edges in the shortest $u$-$v$ path in $G$, i.e., $d_G(u,v) \coloneqq \sum_{xy \in \PG{G}{u,v}}\norm{x,y}$. If there is no $u$-$v$ path in $G$, then $d_G(u,v)\coloneqq +\infty$.
We use $d_G(u,U)$ to denote the sum of distances from $u$ to all nodes in $U\subseteq P$ in $G$, and we use $\norm{u,U}$ to denote the sum of the lengths of the edges between $u$ and $U$. 
We call $d_G(u,P)$ the \textit{distance cost} and $\alpha\cdot \norm{u,S_u}$ the \textit{edge cost} of agent~$u$. 
Each agent $u$ aims at minimizing her cost $cost(u,G(\mathbf{s}))$, that is the sum of the agent's edge cost and her distance cost:
$$cost(u,G(\mathbf{s}))\coloneqq \alpha\cdot \norm{u,S_u} + d_{G(\mathbf{s})}(u,P).$$ Note that the parameter $\alpha$ expresses the agents' relative importance of edge costs versus distance costs. 

We measure the \emph{efficiency} of a network $G(\mathbf{s})$ by its \textit{social cost} $\SC(G(\mathbf{s})) \coloneqq \sum_{u \in P} \cost{(u,G(\mathbf{s}))}$. 
The strategy profile $\mathbf{s}^*$ that minimizes $SC(G(\mathbf{s}^*))$ for given points $P$ is called the \textit{social optimum}. We refer to the network $G(\mathbf{s}^*)$ also as the \textit{social optimum network for the point set $P$}, denoted as $\opt{P}$.

An \textit{improving move} for an agent $u$ is a strategy change, that decreases her cost. Agent $u$ plays her \textit{best response} if agent $u$ has no improving move.
A strategy profile $\mathbf{s}$ is a \textit{pure Nash equilibrium (NE)} if all agents play a best response in $G(\mathbf{s})$. 
We say that $\mathbf{s}$ is a \emph{$\beta$-approximate NE (\apxNE{$\beta$})}, if no agent can change her strategy to improve her cost by more than a factor of $\beta$. Moreover, we call strategy profile $\mathbf{s}$ a $(\beta,\gamma)$-NE if $\SC(G(\mathbf{s})) \leq \gamma\cdot \SC(\opt{P})$ and it is a \apxNE{$\beta$}. Strategy profiles induce networks, thus we call $G(\mathbf{s})$ a $(\beta,\gamma)$-network if $\mathbf{s}$ is a $(\beta,\gamma)$-NE. Moreover, if the edge ownership of some network $G$ is not specified, then we assume arbitrary edge ownership when calling $G$ a $(\beta,\gamma)$-network.

We measure the loss of efficiency due to selfishness via the \textit{Price of Anarchy (PoA)}~\cite{KP99} and the \textit{Price of Stability (PoS)}~\cite{ADKTWR,CSSM04}.
Let $worst_P$ (respectively $best_P$) be the highest (respectively the lowest) social cost of any NE on the point set $P$ and let $\mathcal{P}$ be the set of all possible finite point sets in $\mathbb{R}^d$.
Then the PoA is defined as $\sup_{P \in \mathcal{P}}\frac{worst_P}{\SC(\opt{P})}$ and the PoS is $\sup_{P \in \mathcal{P}}\frac{best_P}{\SC(\opt{P})}$.

\subsection{Our Contribution}
We explore the design of networks that at the same time should be efficient in terms of total cost and stable in terms of local changes to the network infrastructure by selfish agents. This very natural focus on approximating both efficiency and stability seems to be novel in the literature on variants of Network Creation Games. Moreover, also in the wider Network Design literature, we are only aware of the works on Network Design Games with cost-sharing on the edges~\cite{ADTW08,CRV08,CR09,H09,AL10} that take a similar point-of-view. 

We study the creation of $(\beta,\gamma)$-networks for the \game~\cite{bilo2019geometric} using the $2$-norm. Such networks are in $\beta$-approximate Nash equilibrium and at the same time have a total cost that is at most $\gamma$ times the optimal total cost. See Table~\ref{table:overviewResults} for a result overview.
\begin{table*}[h]%
	\begin{center}
		\caption{Overview of our results on $(\beta,\gamma)$-networks for the \game \xspace using the $2$-norm.}
		\label{table:overviewResults}
		\renewcommand{\arraystretch}{2.0}
		\begin{tabular}{@{}ccc@{}}\toprule
			\textbf{Socially optimal}	&\centering\textbf{Apx. optimal and apx. stable}					& \textbf{Perfectly stable}  \\
			
			$(\beta, 1)$-networks			&\centering $(\beta,\gamma)$-networks with $\beta, \gamma >1$	&$(1,\gamma)$-networks	\\
			
			\multicolumn{3}{c}{$\xleftrightarrow{\text{\bf efficiency} \hspace*{12.5cm}\text{\bf stability}}$}\\
						
			\excellone{Social optimum computation NP-hard for metric instances [Thm.~\ref{np-alpha-metric-opt}] \\ 
			Instances with only $(\Omega(\sqrt{\alpha}),1)$-networks \mbox{exist} [Thm.~\ref{thm:social_improv}]}
			
			&\excelltwo{Simple $\mathcal{O}(n^2)$ algorithm for computing $\left(\beta,\beta\right)$-networks [Thm.~\ref{thm:apx_graph}]: \\
			 \ $\beta\in\mathcal{O}(1)$ for $\alpha\leq \sqrt[3]{n}$\\
			  \ $\beta\in\mathcal{O}\big(\alpha^{\frac{2x-1}{2x}}+1\big)$ for $\alpha\leq n^x, x\geq 1$ \\
			  \ $\beta\in\mathcal{O}\big(\alpha^{\frac{3x-1}{4x}}+1\big)$ for $\alpha\leq n^x, x\leq 1$\\
			  MST is $(n-1,n-1)$-network [Thm.~\ref{thm:mst-ane}]	\\
			  $(1+\varepsilon,1+\varepsilon)$-network, a.a.s., for random points in $[0,1]^2$, $\alpha\in o(n)$  [Thm.~\ref{thm:apxNE_random}] \\
			  $(2d,2d)$-networks for integer grid point sets in $\Rd$ [Thm.~\ref{theo:d_dim_grids}]}	
			
			&\excellthree{Stability check via computing best responses NP-hard ~\cite{bilo2019geometric}\\
			Existence of stable networks open, no FIP [Thm.~\ref{thm:best-response}]\\
			PoS $>1$ [Thm.~\ref{thm:PoS}]\\			
			PoA $\in \Theta(\alpha)$, $d\to\infty$ [Thm.~\ref{theorem:d_to_infty}]\\
			PoA$\in\Omega\big(\alpha^\frac{2}{3}\big)$, $d\geq 1$ [Thm.~\ref{thm:R1_PoA_LB}]\\
			GNCG $\text{PoA}\in\Theta(\alpha)$ [Cor.~\ref{cor:PoA_GNCG}]	}	\\				
			\bottomrule
		\end{tabular}
	\end{center}
\end{table*}

Our main result is a simple $\mathcal{O}(n^2)$ algorithm that computes a $(\beta,\beta)$-network for $\alpha \leq n^x$ with $\beta\in\mathcal{O}\left(\alpha^{1-\frac{1}{2x}}+1\right)$ for $x \geq 1$ and
$\beta\in\mathcal{O}\left(\alpha^{\frac{3x-1}{4x}}+1\right)$ for $0< x < 1$. See Figure~\ref{fig:apx_plot} for a graphical illustration of these bounds. 
For $\alpha \leq \sqrt[3]{n}$, i.e., if edges are cheap or for large networks, this implies that our algorithm constructs a $(\mathcal{O}(1),\mathcal{O}(1))$-network. We further demonstrate the power of our algorithm by investigating special instance types: grid point sets and uniform random point sets. For them we obtain particularly low constant values for $\beta$ and $\gamma$. Additionally, we also provide $(\beta,\gamma)$-networks with an even simpler construction: we show that any MST on the point set $P$ is a $(n-1,n-1)$-network. 
Using the better outcome of either the MST or the network obtained by our algorithm then yields a $\big(\mathcal{O}(\alpha^{\frac{2}{3}}),\mathcal{O}(\alpha^{\frac{2}{3}})\big)$-network for arbitrary~$\alpha$.
Moreover, we show that the complete network on $P$ is a $(\alpha+1,\frac{\alpha}{2}+1)$-network. This even holds for the GNCG with arbitrary, even non-metric, edge lengths and hence proves that $(\alpha+1)$-approximate NEs always exist. This is an improvement over the claimed bound of $3(\alpha+1)$ in~\cite{bilo2019geometric} and it resolves an open problem from that paper as no bound for the GNCG was provided. 

In contrast to these positive results, we provide negative results for the extreme cases of $(\beta,\gamma)$-networks, i.e., $(\beta,1)$-networks with the optimal total cost and $(1,\gamma)$-networks with perfect stability. For both of these extremes, our results indicate that, unless P=NP, such networks cannot be computed efficiently. In particular, we show that computing a $(\beta,1)$-network is NP-hard for a generalization of the model with metric edge lengths and that $(1,\gamma)$-networks cannot be found via improving response dynamics since the finite improvement property does not hold. The latter was left open in~\cite{bilo2019geometric}. Also, there it was already shown that computing a best possible strategy is NP-hard, which indicates that also deciding stability is a hard problem. 
Moreover, we provide an instance having very unstable networks with optimal total cost, i.e., only $(\Omega(\sqrt{\alpha}),1)$-networks exist, and this instance also shows that the PoS is larger than $1$, i.e., $(1,1)$-networks cannot exist in general. Moreover, we show that $(1,\gamma)$-networks with $\gamma \in \Omega(\alpha^{\frac{2}{3}})$ exist for all $d\geq 1$, i.e., a $\Omega(\alpha^{\frac{2}{3}})$ lower bound for the PoA that significantly improves over the known constant lower bound and which is close to the known $\mathcal{O}(\alpha)$ upper bound. Additionally, we show that for $d$ tending to infinity the PoA is in $\Theta(\alpha)$.
Besides these PoA results for the Euclidean version, we also prove an upper bound of $2(\alpha+1)$ on the PoA for the GNCG with arbitrary, even non-metric, edge lengths. This asymptotically matches the lower bound of $\frac{\alpha+2}{2}$, and it proves a conjecture from~\cite{bilo2019geometric} up to constant factors.

\section{Social Optimum}\label{sec:opt}
We show that minimum cost networks can be rather unstable. 
\begin{restatable}{theorem}{thmSocialImprov}\label{thm:social_improv}
	There exists a set of points where in the unique social optimum network an agent can improve by a factor of at least $\frac{\sqrt{\alpha}}{3}$, i.e., it is a $\big(\frac{\sqrt{\alpha}}{3},1\big)$-network. (See Figure~\ref{img:best-response-cycle}~(left).)
\end{restatable}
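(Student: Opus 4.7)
The plan is to construct an explicit small point set $P\subset\mathbb{R}^2$ depending on $\alpha$, argue that its social optimum network $G^*=\opt{P}$ is uniquely determined, and then show that for any ownership of the edges of $G^*$ at least one agent has a single-move deviation which improves her cost by a factor of at least $\sqrt\alpha/3$. The construction I would use employs $\mathcal{O}(1)$ many points arranged on two length scales — a short scale of order~$1$ and a long scale of order~$\sqrt\alpha$; the coordinates are tuned so that the unique social optimum $G^*$ features one long connector together with a handful of short incident edges, while all structurally distinct alternative spanning networks have strictly larger social cost.

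The argument then splits into four pieces. First, I would verify uniqueness of $G^*$ by a direct case analysis comparing $\SC(G^*)$ against the social cost of every candidate network obtainable by a single edge addition, removal, or swap; because only finitely many candidates and only two length scales appear, each comparison reduces to a one-line inequality in $\sqrt\alpha$ whose strict gap implies \emph{unique} optimality. Second, for any ownership on $G^*$ I would identify a specific agent~$u$ whose total cost is necessarily large: either $u$ pays for the long edge and contributes an edge cost of $\Theta(\alpha\sqrt\alpha)$, or she is a distinguished cluster agent whose distance cost routed through the long edge accumulates to $\Omega(\alpha)$ across the other agents. Third, I would exhibit an explicit single-move deviation by $u$ — for example, swapping her incident long edge for a short shortcut inside the cluster or buying a cheaper alternative connection — and verify by direct computation that her post-deviation cost is at most $O(\sqrt\alpha)$, using the two-scale geometry to keep her distance cost small. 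Fourth, dividing her cost in $G^*$ by her post-deviation cost then yields the claimed improvement factor of $\sqrt\alpha/3$ once the implicit constants in the construction are fixed.

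The main obstacle I anticipate is the first step: because many near-optimal alternatives to $G^*$ typically have social cost within an additive $O(1)$ of $\SC(G^*)$, the ratio between the two length scales must be carefully tuned to provide a strict uniqueness gap and simultaneously preserve the $\Omega(\alpha)$ lower bound on $u$'s cost regardless of ownership. Once uniqueness of the social optimum is secured, the cost bookkeeping in the remaining steps is routine arithmetic and the claimed $\sqrt\alpha/3$ bound drops out immediately; note also that the statement is trivial for $\alpha<9$, so asymptotic inequalities in $\alpha$ suffice throughout.
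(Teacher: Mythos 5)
Your high-level strategy (a point set whose social optimum contains an edge that its owner regrets by a factor of $\sqrt{\alpha}/3$) is the same as the paper's, but the construction you commit to --- $\mathcal{O}(1)$ points on two length scales $1$ and $\sqrt{\alpha}$ --- cannot deliver the bound, because with a constant number of points no agent can regret her strategy in a social optimum by more than a constant factor. Both of your routes to an $\Omega(\alpha)$ cost for the deviating agent $u$ fail. For the ``large distance cost'' route: in any connected network on $N$ points every shortest path is simple, so $d_{G^*}(u,P)\le (N-1)^2\, w_{\max}\in\mathcal{O}(\sqrt{\alpha})$ when $N\in\mathcal{O}(1)$ and all pairwise distances are $\mathcal{O}(\sqrt{\alpha})$; this is nowhere near $\Omega(\alpha)$. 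For the ``owns the long connector'' route: she indeed pays $\Theta(\alpha^{3/2})$, but she cannot escape it --- selling the connector disconnects her from the far cluster, and any replacement edge across the gap has length $\sqrt{\alpha}-\mathcal{O}(1)$, so her improvement factor is $1+o(1)$. The structural reason is quantitative: for a non-bridge edge of length $\ell$ to survive in a social optimum, its removal must raise the total distance cost by at least $\alpha\ell$, and that increase is spread over at most $N^2$ ordered pairs, while the owner personally benefits from only the $\mathcal{O}(N)$ pairs involving her. So the ratio between social necessity and individual benefit of an owned edge --- which is exactly the regret you are trying to create --- is at most $\mathcal{O}(N)$, forcing $N\in\Omega(\sqrt{\alpha})$ for a $\sqrt{\alpha}/3$ bound.

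The paper's construction lets the \emph{number of agents}, not a second length scale, carry the dependence on $\alpha$: three clusters of $n/3$ co-located (or arbitrarily close) points at the corners of a unit equilateral triangle, with $n=3\lfloor\sqrt{\alpha}+1\rfloor$. Each unit side shortens the distance for $\approx(n/3)^2\approx\alpha$ inter-cluster pairs by $1$, so all three sides lie in the optimum whenever $\alpha<2(n/3)^2$; yet the owner of one side saves only $\approx n/3\approx\sqrt{\alpha}$ in her own distances and may simply sell it (the other two sides keep the network connected), dropping her cost from $\alpha+2n/3$ to $n$, a factor of at least $\sqrt{\alpha}/3$. If you want to keep your outline, replace the $\mathcal{O}(1)$-point, two-scale gadget by clusters of $\Theta(\sqrt{\alpha})$ agents; the rest of your bookkeeping then goes through essentially as in the paper.
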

\begin{proof}[Proofsketch]
For simplification we allow co-located points. Note that the result still holds asymptotically without co-location, since we can place all co-located points arbitrarily close together.

Consider the three corners of an equilateral triangle with side length $1$, see Figure~\ref{img:best-response-cycle}~(left). 
We place $\frac{n}{3}$ points on each corner. Note that it is without cost to buy all the edges of length $0$ and that two length-1-edges need to be bought or the network will be disconnected. We observe that buying all three length-1-edges gives us a social optimum if  $2 \cdot 2 \cdot \left(\frac{n}{3}\right)^2 > \alpha +  2 \cdot \left(\frac{n}{3}\right)^2$, which is equivalent to $\alpha < 2 \cdot \left(\frac{n}{3}\right)^2$. Given this, we set $n = 3 \lfloor \sqrt{\alpha}+1\rfloor$ and consider the social optimum where every agent buys at most one length-1-edge. Finally, we compute the improvement factor for one agent by selling a length-1-edge resulting in
\[
	\frac{\alpha+2\frac{n}{3}}{3\frac{n}{3}} \geq \frac{\alpha+2\sqrt{\alpha}}{3\sqrt{\alpha}+3} \geq \frac{\sqrt{\alpha}}{3}.\qedhere
\]
\end{proof}
\noindent Next, we show that computing the social optimum network is NP-hard for any fixed $\alpha$ in the more general metric version of the GNCG (M-GNCG)~\cite{bilo2019geometric}.
There, agents are nodes of a given complete weighted host network $H=(V,E(H))$ with edge weights $w\colon V\times V\rightarrow \mathbb{R}_+$ satisfying the triangle inequality. Hence, the edge price of $uv\in E(H)$ is $\alpha\cdot w(u,v)$.

\begin{restatable}{theorem}{thmOPT}\label{np-alpha-metric-opt}
	For any $\alpha > 0$, computing a social optimum  in the M-GNCG is NP-hard.
\end{restatable}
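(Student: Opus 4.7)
\medskip
\noindent\textbf{Proof plan.} I plan to prove this by polynomial-time reduction from an NP-hard problem on metric graphs. A natural choice is the Minimum Routing Cost Spanning Tree problem (MRCT), which asks for a spanning tree minimizing the sum of pairwise distances and is NP-hard even on metric inputs; alternatives include a direct reduction from 3-SAT or Hamiltonian Path using a tailored gadget.

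Given an MRCT instance $(H,w)$ and the fixed parameter $\alpha > 0$, the plan is to construct an M-GNCG instance on an augmented metric host network. The augmentation adds auxiliary vertices whose number, distances, and topology depend on $\alpha$ (for example, clusters of near-colocated copies of each original vertex at pairwise distance $\varepsilon$, or distant ``anchor'' nodes attached only via a single cheap edge). Their role is twofold: first, to force the socially optimal subgraph to have a particular structural form on the original vertices, such as a spanning tree together with a canonical set of auxiliary edges; and second, to make the resulting social cost a monotone function of the routing cost of that spanning tree. I would then establish the equivalence: MRCT has a solution of value at most $B$ if and only if the constructed M-GNCG instance has a social optimum of value at most a computed threshold. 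The forward direction lifts an MRCT tree to a canonical M-GNCG network, while the backward direction extracts a spanning tree from the social optimum and argues that it must be MRCT-optimal, else the social cost could be strictly improved.

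The main obstacle is managing the trade-off between the edge-cost term $\alpha W(G)$ and the distance-cost term $2 \sum_{u<v} d_G(u,v)$ for a \emph{fixed} $\alpha$. Since multiplying all weights by a constant scales both terms equally, pure numerical scaling preserves their ratio and cannot be used to shift the balance; the reduction must instead rely on combinatorial features of the augmentation (how many auxiliary vertices there are, their relative distances, and their attachment pattern) to break this scaling symmetry. Concretely, one must rule out that the social optimum contains ``shortcut'' edges beyond the intended tree structure: each candidate shortcut $uv$ incurs price $\alpha\,w(u,v)$ but saves distance cost proportional to the number of vertex pairs whose shortest paths it shortens, so the gadget parameters have to be tuned, as a function of $\alpha$ and the instance size, so that this comparison unambiguously favors the intended structure. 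Carrying out this calibration — and showing that the encoded tree-choice subproblem remains NP-hard even in the restricted metric arising from the augmentation — is the technical heart of the proof.
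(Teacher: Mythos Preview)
Your plan has a genuine conceptual gap in the direction it pushes the trade-off. To make the encoded problem be MRCT you need the optimum to be a spanning tree on (the representatives of) the original vertices, yet every augmentation you suggest---replicating each vertex into a cluster of $k$ near-colocated copies, or attaching distant anchors---amplifies the distance term relative to the edge term. Concretely, with $k$-clusters the inter-cluster distance cost scales like $k^2\sum_{u,v} d_G(u,v)$ while the inter-cluster edge cost stays at $\alpha W(G)$; for $k$ large the social optimum therefore tends toward the complete graph on cluster representatives, not a tree. Since, as you correctly note, uniform weight scaling cannot shift the balance either, there is no obvious knob in your construction that makes trees optimal for an arbitrary fixed $\alpha$. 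Without that, the backward direction of the reduction (extracting an MRCT tree from the M-GNCG optimum) fails.

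The paper's proof goes in the opposite direction: rather than trying to force a sparse (tree) optimum, it reduces from \textsc{Hitting Set} and forces a \emph{dense} backbone. Every vertex is inflated into a $q$-star (with $q$ chosen as a function of $\alpha$) so that all length-$1$ edges of a fixed bipartite gadget are provably present in any optimum; the only remaining freedom is which of the longer length-$x$ edges incident to a single distinguished vertex $s$ to include. Duplicating each set-vertex $c$ times guarantees every set is ``hit'' by at least one such edge, and the social cost then reduces to an expression of the form $2k\alpha + \text{const}$, where $k$ is exactly the number of chosen edges, i.e., the hitting-set size. The key structural idea you are missing is to make the variable part of the optimum a \emph{selection of a few expensive edges over a fixed dense backbone}, not a \emph{tree over the whole vertex set}.
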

\begin{proof}
	\begin{figure}[h]
		\centering
\includegraphics[scale=0.65]{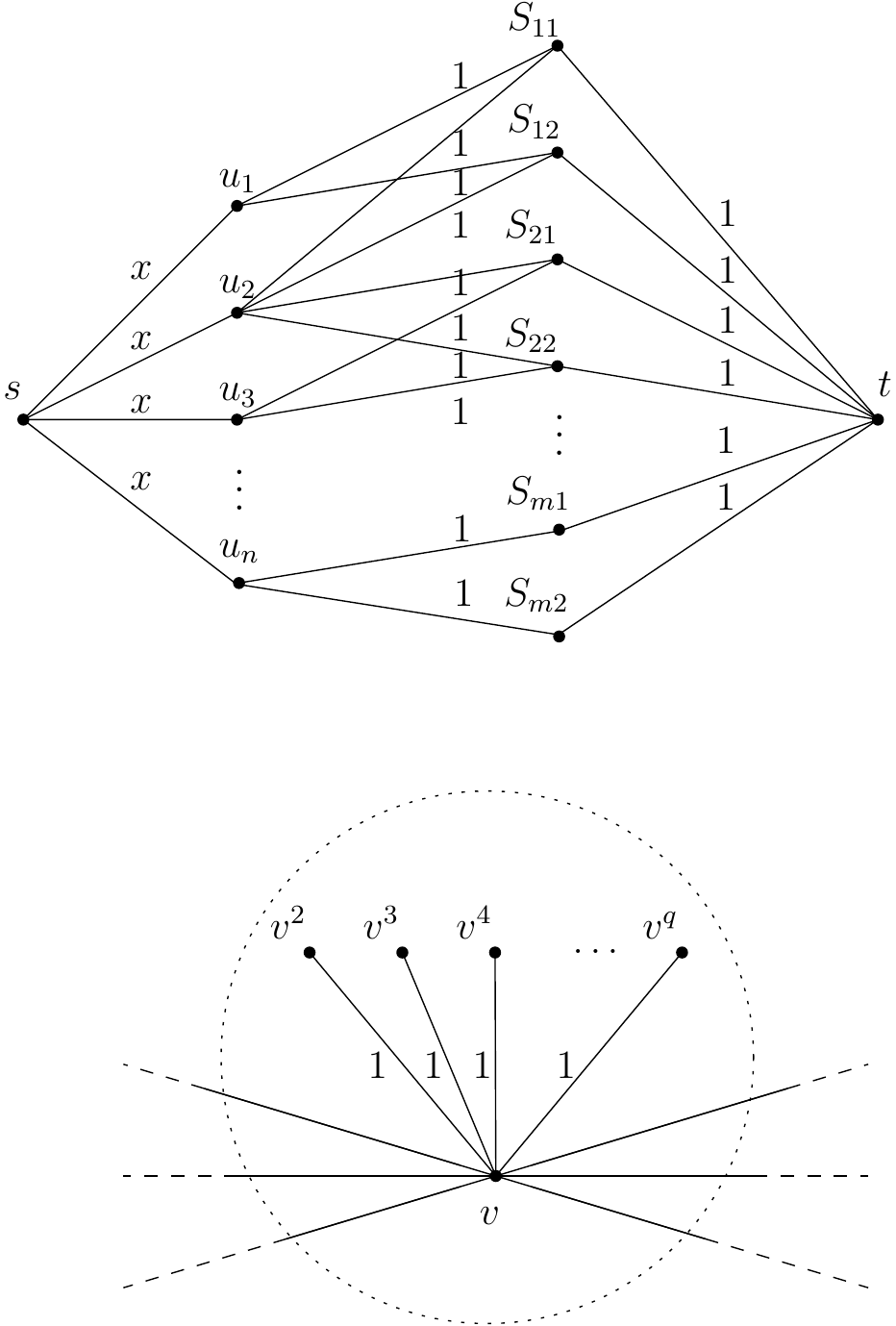}		
		\caption{Illustration of the reduction for $c=2$. Each node $v\in V$ in the top network forms a star with the corresponding nodes $v_2, \dots v_q$ as depicted on the bottom figure. 
		}
		\label{img:np_social_opt}
	\end{figure}

	We perform the reduction from the \textsc{HittingSet(HS)} problem:
	  Given a set of elements $U = \{u_1, \cdots , u_n\}$ and a collection of sets $\mathcal{S}=\{S_1,\dots,S_m\} \subseteq \mathcal{P}(U)$, the problem is to find a minimum hitting set $\mathcal{H} \subseteq U$, such that each set is hit, i.e.,  $\forall S \in \mathcal{S}: S \cap \mathcal{H} \neq \emptyset$.
	
	Consider the corresponding instance of the social optimum problem. 
	We define a host network $H=(V,E)$ such that each element $u \in U$ corresponds to a one node in $V$,  and there is one node for every set $S \in \mathcal{S}$. We connect set nodes with element nodes if their corresponding set contains the corresponding element. We create two other nodes $s$ and $t$ in $V$ that are adjacent to element nodes (resp. set nodes).  Hence, element nodes adjacent to $s$ indicates that the corresponding elements are in the hitting set. To handle all $\alpha$, we duplicate set nodes $c$ times and connect each node to $q-1$ additional leafs $v^2,\cdots,v^q$, i.e., inflate all nodes to stars, for some integer $c$ and $q$. 
	See Figure~\ref{img:np_social_opt} for an illustration.

	More formally, 
	let $H=(V,E)$ be a complete host network such that $V=V_1\cup V_2$, $E=E_1\cup E_2$, where \[V_1= \{s,t\}\cup\bigcup\limits_{i=1}^n \{u_i\}\cup\bigcup\limits_{i=1}^{m}\bigcup\limits_{j=1}^c \{s_{ij}\}, \ \ V_2 = \bigcup\limits_{v\in V_1}\bigcup\limits_{i=2}^{q} \{v^i\}\]
	\[
	E_1=\bigcup\limits_{i=1}^n \{su_i\} \cup \bigcup\limits_{i=1}^n \bigcup\limits_{j=1}^c\{u_is_{pj}\mid u_i\in S_p\}\cup \bigcup\limits_{i=1}^m\bigcup\limits_{j=1}^c \{s_{ij}t\} \cup \bigcup\limits_{v\in V_1}\bigcup\limits_{i=2}^{q} \{vv^i\}\]
	\[
	 E_2 = \{V\times V\} \setminus E_1,
	\]
	where $p$ is an index between $1$ and $m$, $q\in\mathbb{N}$ the number of nodes in each star, and $c\in\mathbb{N}$, the number of set nodes duplications, will be specified later.
	We assume that the weight of each edge between $s$ and an element node $u_i$ is $w(s,u_i)=x$, while all other edges in $E_1$ are of length 1. All edges in $E_2$ are the metric closure for the subnetwork $(V,E_1)$, i.e., for any $xy\in E_2$, $w(x,y)=d_{(V,E_1)}(x,y)$. 
	We choose $q=1+\big\lceil \frac{\sqrt{\alpha}}{2}\big\rceil$, $x=2+\frac{4q^2}{\alpha}$ and $c = 1+\big\lceil\frac{\alpha x}{4q^2}\big\rceil$. In the following, we show that the edges in the optimum network $\opt{V}$ incident to $s$ induce a minimal \textsc{HS}.

	We start by proving that all edges of length $1$ are in the optimum. 
Note that $\opt{V}$ does not contain any edges from $E_2$ since it is always beneficial to have edges from a shortest path rather than one edge that is its metric closure. 
Hence, all star edges, i.e., edges connecting $V_1$ and $V_2$, are in  $\opt{V}$.
For the other edges, we observe that $(V,E_1)$ is bipartite. Hence, if some length-1-edge $xy\in V_1\times V_1$ is not in $\opt{V}$, then adding $xy$ improves the distance between $2(q-1)$ leafs adjacent to the star centers $x$ and $y$ by at least 2, i.e., the total distance in $\opt{V}$ would increase by at least $4q^2$. Since $4q^2 > 4\frac{\alpha}{4} = \alpha$, the edge $xy$ is in $\opt{V}$.

	Next, we prove by contradiction that every set node will be hit. 
	This means, that for every set $S_i$, network $\opt{V}$ contains at least one of the length-$x$-edges $(s,u_j)$ such that $u_j$ is an element of $S_i$. Let $S_i$ be a set, which is not hit and $u_j\in S_i$ is one of its elements. Note that at least one set is hit, otherwise the network would not be connected.
	Then adding $su_j$ costs $\alpha \cdot x$ but shortens the distance between $2cq^2$ nodes by 2:  between the $(q-1)$ leaf nodes adjacent to $s$ and the $c\cdot (q-1)$ leafs adjacent to each node $s_{i1},\ldots,s_{ic}$, as well as between the star centers. Since $\alpha x =4q^2\frac{\alpha x}{4q^2} < 4cq^2$, it is beneficial to add the edge.
	
	Finally, we need to show that the number of edges between $s$ and the element nodes is minimal, i.e., that $\opt{V}$ corresponds to the minimum  hitting set. 	Denote the number of length-$x$-edges in $\opt{V}$ as $k$. 	We calculate the social cost of the optimum.
	
	Let $\Delta$ be the sum of costs of all length-1-edges and the distances between all nodes except the distances between $s$, elements from $U$, and their corresponding leaf nodes.
	Note that all sets in $\mathcal{S}$ are hit by the construction, and that a shortest path between two nodes $x, y\in V$ does not include node $s$ unless $s$ is one of the two nodes $x$ or $y$, since $x > 2$.
	Thus, $\Delta$ depends only on the instance and not on $k$.
	The distance between $s$ and the element nodes is either $x+2$ or $x$ if nodes are directly connected. Thus, the social cost of the network is 
	\begin{align*}
		&= k\alpha x+2kq^2x+2(n-k)q^2(x+2)+\Delta  \\
		&=\alpha kx+ 2nq^2(x+2)-4kq^2+\Delta                                 \\
		 &=\alpha k\left(2+\frac{4q^2}{\alpha}\right)-4kq^2+2nq^2(x+2)+y  =2k\alpha + 2nq^2(x+2)+\Delta
	\end{align*}
	Clearly, the social cost is minimal when $k$, the size of the hitting set, is minimal. 
\end{proof}

\noindent 
Clearly, hardness for a problem on metric instances does not imply hardness for Euclidean instances. However, given that many variants of minimum weight Euclidean t-spanner problems are also NP-hard (e.g., see \cite{carmi2013minimum} and the references therein), and since these problems seem to be very close to computing a social optimum network, we think it could be possible to either adapt our reduction for the metric case via suitable gadgets or to reduce from hard minimum weight t-spanner problems directly. Therefore, we conjecture the following:
\begin{conjecture}
 Computing a social optimum network in the \game \xspace is NP-hard.
\end{conjecture}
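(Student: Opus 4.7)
The plan is to adapt the Hitting-Set reduction used in the proof of Theorem~\ref{np-alpha-metric-opt} by realising its host network as a point set in Euclidean space. The metric host network used there is essentially bipartite (element nodes $u_i$ versus duplicated set nodes $s_{ij}$) with two distinguished nodes $s,t$, and inflated into stars by attaching $q-1$ leaves to every centre. The only ``long'' edges are the $su_i$ connections of length $x=2+4q^2/\alpha$; all other backbone edges have length $1$, and the remaining pairwise distances are defined by metric closure. This structure is promising for a Euclidean realisation because it has comparatively few rigid constraints on actual edges and a lot of slack on the non-edges.

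Concretely, I would work in $\mathbb{R}^d$ with $d$ growing polynomially in $n$ and $m$ and assign each ``role'' ($s$, $t$, the elements, and the set-copies) its own block of mutually orthogonal coordinates. The orthogonal blocks are scaled so that the prescribed backbone distances hold exactly: every $u_i s_{ij}$ with $u_i\in S_i$ has Euclidean distance $1$, every $s_{ij}t$ has distance $1$, and every $su_i$ has distance $x$. The $q-1$ star leaves attached to a centre $v$ are then placed in a tiny cluster around $v$, displaced along a private set of coordinates orthogonal to everything else, so that each leaf lies at distance $1$ from its centre and strictly further from every other point than the corresponding metric closure value. The orthogonal layout is precisely what is needed to push all non-backbone pairwise distances above the shortest-path distances of the abstract construction, which is the key property that made the metric reduction work.

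Given such an embedding, the correctness proof from Theorem~\ref{np-alpha-metric-opt} should transfer almost verbatim: the same inflation-of-stars argument forces every backbone edge into the social optimum, the same $4cq^2$ counting argument shows that every set must be hit by some $su_i$ edge, and minimising the social cost still reduces to minimising the number of $su_i$ edges, i.e., to \textsc{HittingSet}. The main obstacle is the geometric rigidity: unlike in the metric case, we cannot prescribe distances independently, so guaranteeing that \emph{every} non-backbone Euclidean distance meets or exceeds its metric-closure target, while the backbone distances hit their prescribed values exactly, will require a careful coordinate-by-coordinate construction together with a perturbation argument to handle the unavoidable small slack. Should the direct embedding prove too delicate, a natural fallback is to reduce instead from a known NP-hard Euclidean optimisation problem such as minimum-weight Euclidean $t$-spanner~\cite{carmi2013minimum}, whose objective of minimising total edge length subject to a stretch constraint closely mirrors the edge-cost/distance-cost trade-off governing the social optimum in the \game.
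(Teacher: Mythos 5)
First, note that the paper itself does not prove this statement: it is explicitly stated as a conjecture, preceded by the authors' own remark that hardness for metric instances (Theorem~\ref{np-alpha-metric-opt}) does not transfer to Euclidean instances, and that the two plausible routes are adapting the metric reduction via gadgets or reducing from minimum-weight Euclidean $t$-spanner problems. Your proposal is essentially a restatement of those same two routes, so it does not go beyond what the paper already offers as motivation for the conjecture, and it is a plan rather than a proof.

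The central step of your plan also fails for a concrete reason. You want an embedding in which every non-backbone pair is ``strictly further from every other point than the corresponding metric closure value.'' This is impossible by the triangle inequality: for any Euclidean embedding realizing the backbone edge lengths, the distance between two points is at most the length of any backbone path between them, i.e., at most the metric-closure value, with equality only under collinearity — which cannot hold simultaneously for the branching star-and-bipartite structure of the Hitting Set gadget. Consequently, every non-backbone pair is generically \emph{strictly closer} in Euclidean space than its metric-closure target. In the metric proof, the fact that $E_2$ edges give zero distance improvement is exactly what forces the optimum to use only backbone edges; in any Euclidean realization, shortcut edges do yield a strict distance improvement, so the ``transfers almost verbatim'' claim is where the argument breaks. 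To salvage it you would have to quantify the shortcut gain for every non-backbone pair and show it is dominated by $\alpha$ times the shortcut's length uniformly over the construction — that quantitative control is precisely the missing content, and it is why the paper leaves the statement as a conjecture. The fallback reduction from Euclidean $t$-spanners is likewise only named, not carried out; the social-cost objective mixes total edge length with sum-of-distances rather than imposing a hard stretch constraint, so that reduction is not immediate either.
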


\section{Nash Equilibrium}\label{sec:NE}
We investigate the existence and the computation of (approximate) NEs. We show that iteratively playing best responses is not guaranteed to lead to a NE and give sufficient conditions for their existence. Then follows the main result of our paper: a simple and efficient algorithm for computing a $\apxNEopt{\mathcal{O}(\alpha^{2/3})}{\mathcal{O}(\alpha^{2/3})}.$
\subsection{Existence}\label{sec:NE_existence}
An obvious way of finding a NE would be to iteratively play best responses. However, the following theorem shows that doing so does not necessarily lead to a NE.
\begin{restatable}{theorem}{thm:best-response}
	The \game \xspace with $d \geq 2$ does not have the finite improvement property.
	\label{thm:best-response}
\end{restatable}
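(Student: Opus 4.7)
The plan is to refute the finite improvement property by exhibiting a cyclic sequence of strategy profiles $\mathbf{s}_0 \to \mathbf{s}_1 \to \dots \to \mathbf{s}_k = \mathbf{s}_0$ in which each consecutive pair differs in exactly one agent's strategy, and that agent strictly decreases her cost. Any such cycle can be repeated indefinitely, producing an infinite strictly improving play and thereby directly contradicting FIP. So the whole theorem reduces to producing one explicit bad example.

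To build the cycle I would start from a small, highly symmetric point configuration in $\mathbb{R}^2$, for instance four to six points placed on (or near) the vertices of a regular polygon, possibly supplemented by a central hub point. The role of the symmetry is to cut the verification work down to a single step: if a rotation (or reflection) of the point set maps profile $\mathbf{s}_i$ to $\mathbf{s}_{i+1}$, then the strictly improving deviation that takes $\mathbf{s}_0$ to $\mathbf{s}_1$ automatically propagates around the cycle, and after $k$ such rotated deviations we return to $\mathbf{s}_0$. The parameter $\alpha$ has to be chosen in an intermediate regime, small enough that long edges still appear in the cycle but large enough that agents have a strict incentive to hand these edges off rather than continue paying for them.

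Each single step of the cycle should then consist of one agent performing a local swap, typically dropping one edge she currently owns and possibly buying a different, shorter incident edge. The key inequality at that step is that the edge-cost saving from the swap strictly exceeds the induced increase in her total distance cost $d_{G(\mathbf{s})}(u, P)$. By the symmetry argument, it suffices to check this inequality for one representative transition rather than for all $k$.

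The main obstacle is the tuning. The coordinates of the points and the value of $\alpha$ must be chosen so that the improvement inequality holds strictly for the acting agent at every stage, while simultaneously ensuring that the intermediate networks $G(\mathbf{s}_i)$ remain connected and realize the pairwise distances used in the estimate. Passing edge ownership around can reroute several shortest paths at once, so the verification boils down to a finite but careful accounting of all pairwise distances and edge costs in each of the $k$ configurations. Once the instance is fixed, computing $\cost(p_i, G(\mathbf{s}_i))$ and $\cost(p_i, G(\mathbf{s}_{i+1}))$ for the acting agent at each transition is a direct, mechanical check that closes the argument.
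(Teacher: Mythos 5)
Your overall strategy is exactly the one the paper uses: exhibit an explicit improving-response cycle in $\mathbb{R}^2$ (which embeds into $\mathbb{R}^d$ for all $d\ge 2$), and close the cycle via a symmetry of the point set so that only one representative transition needs to be verified. The paper's construction is a triangle-based configuration with $\alpha=1$ in which, after three best-response moves, the resulting network is a mirror image of the initial one, so reflection symmetry plays precisely the role you assign to rotation/reflection.

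However, as written your argument has a genuine gap: the entire content of the theorem is the existence of such a cycle, and you never actually produce one. You correctly reduce the theorem to ``find one explicit bad example'' and correctly describe what properties the example must have (a single agent's swap whose edge-cost saving strictly exceeds her distance-cost increase, connectivity of all intermediate networks, a symmetry mapping $\mathbf{s}_i$ to $\mathbf{s}_{i+1}$), but you defer the choice of coordinates, of $\alpha$, and of the specific moves to future ``tuning.'' It is not a priori clear that the tuning succeeds --- indeed, for several related network creation variants the existence of such cycles required nontrivial constructions, and for the Euclidean $p$-norm case with $p\ge 2$ this was explicitly left open by Bil\`o et al. before this paper. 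Without a concrete instance and the accompanying finite verification of the acting agent's cost before and after each step, the proof is a plan rather than a proof. To complete it you would need to fix a specific point set (the paper uses a small asymmetric perturbation of a triangle-like configuration), specify edge ownership in each of the $k$ profiles, and carry out the pairwise-distance accounting you describe for the one representative transition.
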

\begin{proof}[Proofsketch]
	We prove this statement by providing a best response cycle, i.e., a cyclic sequence of networks obtained by iterative strategy changes to best responses, in $\mathbb{R}^2$ for $\alpha=1$. See  Figure \ref{img:best-response-cycle}~(right) for illustrations of the steps of the cycle.
	\begin{figure}[t]
		\centering
		\includegraphics[width=\linewidth]{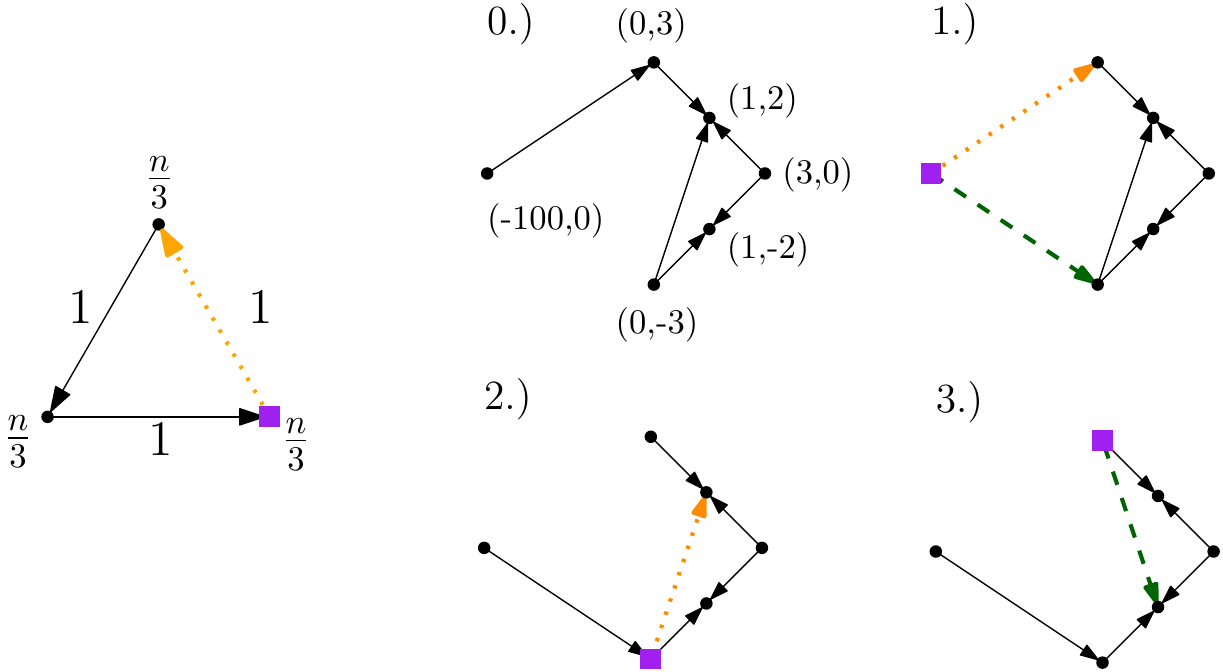}
		\caption{Left: Sketch of the optimum network from Theorem~\ref{thm:social_improv}. In each corner there is a node-cluster of size $\frac{n}{3}$ arbitrarily close together. The improving move removes the dotted orange length-1-edge.
		Right: Best response cycle for the \game \xspace for $d=2$ and $\alpha = 1$. In each step the marked agent may buy edges (dashed green line) and/or sell edges (dotted orange line). The network after step $3$ is the same as the initial network, but mirrored. Edge directions indicate ownership. Edges point away from their owners.}
		\label{img:best-response-cycle}
	\end{figure}
\end{proof}
\noindent As a first step towards showing that in some cases Nash equilibria exist, we show that, if $\alpha$ is large enough, any center sponsored star, where the center buys all edges, is a NE. 
\begin{restatable}{lemma}{lemStartNE}\label{lem:start_NE}
	Let $T$ be the star of a point set $P$ centered at the node $n_0$, such that $n_0$ owns all edges. 
	If $\alpha \ge \max_{u,v \in P, u \neq v} \frac{\norm{u, n_0} + \norm{n_0, v}}{\norm{u,v}} - 1$, then $T$ is a NE.
	
\end{restatable}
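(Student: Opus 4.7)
The plan is to split the analysis by the two kinds of agents in the star: the center $n_0$ and the leaves. For the center, the current strategy $S_{n_0} = P \setminus \{n_0\}$ is maximal, so any deviation drops some subset of edges. Since every leaf has empty strategy in the star profile, removing any edge $n_0 v$ disconnects $v$ from the rest of the network, making $n_0$'s distance cost infinite. Hence $n_0$ has no improving move regardless of $\alpha$.

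The substantive case is a leaf $u$, whose current strategy is empty, so any deviation amounts to buying a set $S \subseteq P \setminus \{u\}$ of fresh edges. I would first describe the distances in the resulting network $T'$. For $v \in S$ the direct edge gives $d_{T'}(u,v) = \norm{u,v}$. For any $w \in P \setminus (S \cup \{u\})$ (including $w = n_0$) I claim $d_{T'}(u,w) = \norm{u,n_0} + \norm{n_0,w}$. This is because any alternative path must leave $u$ through some edge to a leaf $v \in S$ and then continue onward; in the rest of $T'$ every leaf is still only connected to $n_0$, so such a path has length at least $\norm{u,v} + \norm{v,n_0} + \norm{n_0,w}$, which is $\ge \norm{u,n_0} + \norm{n_0,w}$ by the Euclidean triangle inequality $\norm{u,v} + \norm{v,n_0} \ge \norm{u,n_0}$.

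Given these distances, the cost change for $u$ from the deviation telescopes cleanly over $S$. The added edge cost is $\alpha \sum_{v \in S} \norm{u,v}$, while the decrease in distance cost is exactly $\sum_{v \in S} \bigl[(\norm{u,n_0} + \norm{n_0,v}) - \norm{u,v}\bigr]$, since the only distances that change are those to points $v \in S$, each of which improves from $\norm{u,n_0} + \norm{n_0,v}$ to $\norm{u,v}$. So the net change in $\cost(u,T')$ is
\[
 \sum_{v \in S} \bigl[(\alpha+1)\norm{u,v} - \norm{u,n_0} - \norm{n_0,v}\bigr],
\]
and each summand is nonnegative precisely when $\alpha \ge \tfrac{\norm{u,n_0} + \norm{n_0,v}}{\norm{u,v}} - 1$. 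Taking the maximum over all pairs $u,v$ gives the lemma's hypothesis and rules out every profitable leaf deviation.

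The main delicate point, and the one where the triangle inequality is crucial, is the distance computation when $u$ buys many edges at once: one must rule out a ``compound'' shortcut where routing via an intermediate leaf $v \in S$ is shorter than via $n_0$. The inequality $\norm{u,v} + \norm{v,n_0} \ge \norm{u,n_0}$ is exactly what forces the savings to remain additive across $S$, which is what allows the per-pair condition on $\alpha$ to suffice.
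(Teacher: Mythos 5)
Your proof is correct and follows essentially the same route as the paper's: rule out the center (who already owns every possible edge and would disconnect the network by selling), then show that a leaf $u$ buying an edge to $v$ saves exactly $\norm{u,n_0}+\norm{n_0,v}-\norm{u,v}$ in distance cost, which the hypothesis on $\alpha$ makes unprofitable. You are in fact slightly more careful than the paper, which only analyzes a single-edge purchase; your verification that the savings stay additive when a leaf buys several edges at once (no compound shortcuts, by the triangle inequality) fills in a detail the paper leaves implicit.
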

\begin{proof} 
	Since $n_0$ owns all edges, she does not want to buy, sell, or swap any additional edges. Therefore we only need to check, whether any of the non-center nodes wants to buy an edge.
	
	We consider the case, where node $v$ buys an edge towards node $u$. Due to the triangle inequality, this would only improve the distance towards $u$ and it would not change any of the other distances. However, since, by assumption, 
	$(\alpha + 1)\norm{u,v} \ge \norm{u, n_0} + \norm{n_0, v}$
	 buying this edge does not decrease the cost. 
	Therefore $T$ is a NE.
\end{proof}

\begin{corollary}\label{lemma:star-NE}
	If $\alpha \ge 2r-1$, for aspect ratio $r$, then every center sponsored star is a NE.
\end{corollary}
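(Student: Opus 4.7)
The plan is to show that the hypothesis $\alpha \ge 2r-1$ implies the more refined condition of \Cref{lem:start_NE}, and then invoke that lemma. Concretely, it suffices to bound the quantity
\[
\max_{u,v \in P,\, u \neq v} \frac{\norm{u, n_0} + \norm{n_0, v}}{\norm{u,v}}
\]
by $2r$, since then $\alpha \ge 2r-1$ directly gives the premise of \Cref{lem:start_NE}.

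The main step is just a worst-case bound on numerator and denominator separately using the definitions of $\maxDist$ and $\minDist$. First I would observe that for any two distinct points $u, v \in P$, the Euclidean distances $\norm{u, n_0}$ and $\norm{n_0, v}$ are each at most $\maxDist$, so the numerator is at most $2\maxDist$. Then $\norm{u,v} \ge \minDist$ for any pair of distinct points, so the denominator is at least $\minDist$. Dividing gives the upper bound $\frac{2\maxDist}{\minDist} = 2r$, as required.

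Chaining this with \Cref{lem:start_NE} completes the argument: the hypothesis $\alpha \ge 2r-1$ implies $\alpha \ge \max_{u,v} \tfrac{\norm{u,n_0}+\norm{n_0,v}}{\norm{u,v}} - 1$, so any center sponsored star is a NE regardless of which point $n_0 \in P$ is chosen as the center. There is essentially no obstacle here; the corollary is a clean specialization of the preceding lemma using only the definitions of the aspect ratio and the fact that the center $n_0$ itself lies in $P$, so that the maximal distance from $n_0$ to any other point is bounded by $\maxDist$.
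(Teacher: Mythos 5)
Your argument is correct and is exactly the intended derivation: the paper states the corollary without a separate proof precisely because bounding the numerator by $2\maxDist$ and the denominator by $\minDist$ reduces it immediately to Lemma~\ref{lem:start_NE}. Nothing is missing.
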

\noindent Using this corollary we can now show that a uniform random point set asymptotically almost surely has a Nash equilibrium if $\alpha$ is asymptotically larger than $n$. 
\begin{restatable}{theorem}{thmRandSqNE}
	Let $P_n =\lbrace v_1, \dots, v_n\rbrace\subseteq [0,1]\times [0,1]$ be  $n$ points chosen uniformly at random and let $(\alpha_n)_{n \in \mathbb{N}}$ be a sequence of positive real numbers. If $\alpha_n \in \omega(n)$, then $P_n$ asymptotically almost surely (a.a.s.) has a NE for any $\alpha \ge \alpha_n$.
\end{restatable}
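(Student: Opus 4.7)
The plan is to invoke Corollary~\ref{lemma:star-NE}: any center-sponsored star is a NE whenever $\alpha \ge 2r - 1$, where $r = \maxDist/\minDist$ is the aspect ratio of $P_n$. Since $P_n \subseteq [0,1]^2$, we have the deterministic bound $\maxDist \le \sqrt{2}$, so it suffices to prove that $\minDist$ is not too small, a.a.s. Concretely, it is enough to show that a.a.s.
\[
    \minDist \;>\; \frac{2\sqrt{2}}{\alpha_n+1},
\]
because this rearranges to $\alpha_n \ge 2\sqrt{2}/\minDist - 1 \ge 2\maxDist/\minDist - 1 = 2r-1$, and the same inequality then holds for every $\alpha \ge \alpha_n$.

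The probabilistic step is a straightforward union bound. For two independent uniform points $v_i, v_j \in [0,1]^2$, conditioning on $v_i$, the event $\norm{v_i,v_j}\le \varepsilon$ has probability at most $\pi\varepsilon^2$ (the area of a disk of radius $\varepsilon$), hence unconditionally $\Pr[\norm{v_i,v_j}\le \varepsilon] \le \pi\varepsilon^2$. Taking a union bound over the $\binom{n}{2}$ pairs,
\[
    \Pr[\minDist \le \varepsilon] \;\le\; \binom{n}{2}\pi\varepsilon^2 \;\le\; \tfrac{\pi}{2}\,n^2\varepsilon^2.
\]

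Plug in $\varepsilon_n := 2\sqrt{2}/(\alpha_n+1)$. The assumption $\alpha_n\in\omega(n)$ gives $n/\alpha_n\to 0$, hence $n\varepsilon_n\to 0$ and $n^2\varepsilon_n^2\to 0$. Consequently $\Pr[\minDist \le \varepsilon_n]\to 0$, which is exactly the required a.a.s.\ lower bound on $\minDist$. Combining with Corollary~\ref{lemma:star-NE} finishes the argument: a.a.s., any center-sponsored star on $P_n$ is a NE for every $\alpha \ge \alpha_n$.

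No real obstacle arises; the only subtlety is matching the scale $\varepsilon_n$ to the hypothesis on $\alpha_n$ so that the union bound converges to zero, and this is immediate from $\alpha_n\in\omega(n)$. One could alternatively phrase this more cleanly using the well-known fact that for $n$ uniform random points in the unit square, $n^2\minDist^2$ converges in distribution to an exponential random variable, but the direct union bound above is the shortest route.
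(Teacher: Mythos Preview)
Your proof is correct and follows essentially the same approach as the paper: both invoke Corollary~\ref{lemma:star-NE}, reduce to showing $\minDist > 2\sqrt{2}/(\alpha_n+1)$ a.a.s., and finish with a first-moment/union-bound argument giving a probability bound of order $n^2/\alpha_n^2 \to 0$. The only cosmetic difference is that the paper sums indicator variables $X_i$ over points and applies Markov plus Bernoulli, whereas you union-bound directly over the $\binom{n}{2}$ pairs; your version is marginally tighter (by a factor of~$2$) but the argument is the same.
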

\begin{proof}
	By Corollary~\ref{lemma:star-NE} it suffices to show that the aspect ratio of $P_n$ is upper bounded by $\frac{\alpha + 1}{2}$. Since the maximum possible distance is $\sqrt{2}$, it is enough to show that the closest pair of points are at least $\frac{2\sqrt 2}{\alpha + 1}\eqqcolon d_\alpha$ apart.
	For any $i = 1, \dots, n$ we define the random variables 
	\[
	X_i = \begin{cases}
		1, \text{if } \min_{v_j \in P_n\setminus\lbrace v_i \rbrace} \norm{v_i,v_j} \le d_\alpha \\
		0, \text{ otherwise. }
	\end{cases}
	\]
	Also, let $X = \sum_{i=1}^n X_i.$ We thus want to show that a.a.s. $X = 0$. 
	
	First, we observe that the probability of a single point falling into the $d_\alpha$ neighbourhood of $v_i$ is at most $\pi d_\alpha^2$. 
	Thus, $\mathbb P[X_i = 0] \ge (1- \pi d_\alpha^2)^n$ and \sloppy $\mathbb E[X_i] = \mathbb P[X_i = 1] \le 1-(1- \pi d_\alpha^2)^n$. 
	
	Now we can apply Markov's and Bernoulli's inequalities to bound the probability: 
	\begin{align*}
	\mathbb P[X \neq 0] &= \mathbb P[X \ge 1] \le \mathbb E[X] \le n(1-(1- \pi d_\alpha^2)^n)\\
	&\le n^2\pi d_\alpha^2	= \frac{8 \pi n^2}{(\alpha + 1)^2}.
	\end{align*}
	Since $\alpha \ge \alpha_n$ and $\alpha_n \in \omega(n)$, this probability approaches zero as $n$ tends to infinity. Therefore, $P_n$ asymptotically almost surely has a Nash equilibrium. 
\end{proof}

\subsection{Approximation}\label{sec:NE_apx}
We show how to construct $(\beta,\gamma)$-networks for different values of $\alpha$.
First, we prove a general upper bound on the approximation factor.
\begin{restatable}{theorem}{thmApxNEClique}\label{thm:apxNE_clique}
	Let $P$ be a set of points in $\Rd$. Any complete network $K = (P, P\times P)$ is a $\apxNEopt{\alpha+1}{\frac{\alpha}{2}+1}$ in the \game.
\end{restatable}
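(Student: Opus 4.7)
The plan is to orient the edges of $K$ so that each edge has exactly one owner (any tiebreaking rule works) and then to bound $\beta$ and $\gamma$ separately against this fixed ownership. Both bounds rest on the Euclidean triangle inequality, which guarantees that in any network $G$ on $P$, the shortest-path distance $d_G(u,v)$ is at least the direct Euclidean distance $\norm{u,v}$, and in particular $d_K(u,v)=\norm{u,v}$.

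For the approximation factor, I would fix an arbitrary agent $u$ with strategy $S_u$ in $K$ and compare her cost to her cost under any deviation $S'$. Since $K$ is complete, $d_K(u,P)=\norm{u,P}$, so $\cost(u,K)=\alpha\norm{u,S_u}+\norm{u,P}$. Any deviation produces a network $G'$ in which the triangle inequality forces $d_{G'}(u,v)\geq \norm{u,v}$, hence $\cost(u,G')\geq d_{G'}(u,P)\geq \norm{u,P}$. Since $S_u\subseteq P\setminus\{u\}$ gives $\norm{u,S_u}\leq \norm{u,P}$, the ratio $\cost(u,K)/\cost(u,G')$ is at most $(\alpha\norm{u,P}+\norm{u,P})/\norm{u,P}=\alpha+1$, which is exactly the claimed $\beta$.

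For the efficiency ratio, single ownership of each edge makes the total edge cost equal $\alpha\sum_{\{u,v\}\in E(K)}\norm{u,v}=\frac{\alpha}{2}\sum_u\norm{u,P}$ via the handshake identity $\sum_u\norm{u,P}=2\sum_{\{u,v\}}\norm{u,v}$. Combined with the total distance cost $\sum_u d_K(u,P)=\sum_u\norm{u,P}$, this gives $\SC(K)=\left(\frac{\alpha}{2}+1\right)\sum_u\norm{u,P}$. For the lower bound on the optimum, any finite-cost network $G^*$ on $P$ satisfies $\SC(G^*)\geq\sum_u d_{G^*}(u,P)\geq\sum_u\norm{u,P}$ by the same triangle inequality, so $\SC(K)\leq\left(\frac{\alpha}{2}+1\right)\SC(\opt{P})$.

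The only delicate point is the ownership choice: if both endpoints of some edge own it, the edge-cost contribution can double and the $\gamma$ bound would fail, so single-ownership must be enforced at the outset. Beyond this, the argument reduces to elementary accounting layered on top of the triangle inequality, and notably both bounds are independent of $d$ and of the specific point configuration.
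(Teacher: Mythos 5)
Your proof is correct and follows essentially the same route as the paper's: bound the deviating agent's cost from below by $\norm{u,P}$ via the triangle inequality and her cost in $K$ from above by $(\alpha+1)\norm{u,P}$, then compare $\SC(K)=\left(\frac{\alpha}{2}+1\right)\sum_u\norm{u,P}$ against the distance-cost lower bound on the optimum. Your explicit remark about single edge ownership is a fair observation (the paper handles it implicitly by summing each edge once), but it does not change the argument.
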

\begin{proof}
	Since $K$ is a complete network, every agent $u$ can improve its strategy only by deleting its edges.
	Let $K'$ be a network obtained after a strategy change.
	Since the deletion of edges increases the distance cost,  $\cost(u,K')\geq \dist_{K'}(u,P)\geq \dist_K(u,P)$.
	In the worst case, $u$ owns all its incident edges in $K$.
	Hence, $\cost(u,K)\leq \alpha \cdot  d_K(u,P) + d_K(u,P)$, and we get
	\[
		\frac{\cost(u,K)}{\cost(u,K')} \leq \frac{\alpha \cdot d_K(u,P)+d_K(u,P)}{d_K(u,P)} = \alpha + 1.
	\]

	In a similar way we can prove that the social cost of $K$ is at most $\left(\frac{1}{2}\alpha+1\right)$ times the social cost of $\opt{P}$.
	By the triangle inequality, for any edge $uv$ in $K$, $\norm{u,v}\leq d_{\opt{P}}(u,v)$.
	Hence, $\SC(K) =  \frac{1}{2}\alpha\sum\limits_{u,v\in V}\norm{u,v}+\sum\limits_{u,v\in V}\norm{u,v}\leq \left(\frac{1}{2}\alpha+1\right)\sum\limits_{u\in P}d_{\opt{P}}(u,P)$, while the social cost of the social optimum is at least its distance cost $\sum\limits_{u\in P}d_{\opt{P}}(u,P)$.
	Therefore, $\SC(K)\leq \left(\frac{1}{2}\alpha+1\right)\SC(\opt{P})$.\qedhere
\end{proof}
\noindent In the following we show that it is possible to construct a $(\beta,\gamma)$-network with $\beta,\gamma \in o(\alpha)$.
We call $S$ a \emph{$k$-degree $t$-spanner on $P$} if all its nodes have degree at most $k$ and for any two points $p_1, p_2\in P$, $d_S(p_1,p_2)\leq t\cdot\norm{p_1,p_2}$.\footnote{\label{fn:k_distr_spanner} Algorithm~\ref{alg:apxNE} and all the following results hold for a more general setting. We call $S$ a \emph{$k$-distributable} $t$-spanner if it is possible to assign all edges of $S$ to agents such that each agent owns at most $k$ edges. Then we can generalize Algorithm~\ref{alg:apxNE} by constructing a $k$-distributable $t$-spanner in step 4 and 9.} Consider Algorithm~\ref{alg:apxNE}.
\begin{algorithm2e}[ht]
	\SetKwInOut{Input}{input}
	\Input{ $n$ points $P$ in $\mathbb{R}^d$, parameters $k\in \mathbb{N}, t\in\mathbb{R}_{> 1}, b\in\mathbb{R}_{\ge 1}$, $0\le c\le n-1$}
	\For{$v\in P$}{
		$B_v\gets \lbrace u \in P \mid \norm{u,v} \le \frac{\maxDist}{b} \rbrace $ and $C_v\gets \lbrace u \in P \mid \norm{u,v} \le 2\frac{\maxDist}{b}  \rbrace $\; 
	}
	\uIf {there is a node $v \in P$ with $\lvert P\setminus B_v\rvert < c $}{
		construct a $k$-degree $t$-spanner $G$ on $C_v$\;
		assign the edge ownership such that each agent owns at most $k$ edges in $G$\;
		\For {$u \in P\setminus C_v$} {
			Find node $u'\in C_v$ closest to $u$ and add $u'$ to the strategy of agent $u$\;
		}
	}
	\Else{
		construct a $k$-degree $t$-spanner $G$ on $P$\;
		assign the edge ownership such that each agent owns at most $k$ edges in $G$\;
	}
	\caption{$\mathcal{O}(n^2)$ algorithm for computing a $(\beta,\beta)$-network in the \game. }\label{alg:apxNE}
\end{algorithm2e}
The idea of this algorithm is simple: if $P$ has a large cluster of closely located points, a spanner for the cluster points (set $B_v$) and all close points (set $C_v$) is built. The rest of the points is conneced with the shortest edges to $C_v$. (See Figure~\ref{fig:apxNE_alg}~(left).) 
If the set of points $P$ is sparsely distributed, i.e., there is no cluster, a spanner for the entire set of points is constructed. (See Figure~\ref{fig:apxNE_alg}~(right).) 
\begin{figure}[h]
	\centering
	\includegraphics[width=\linewidth]{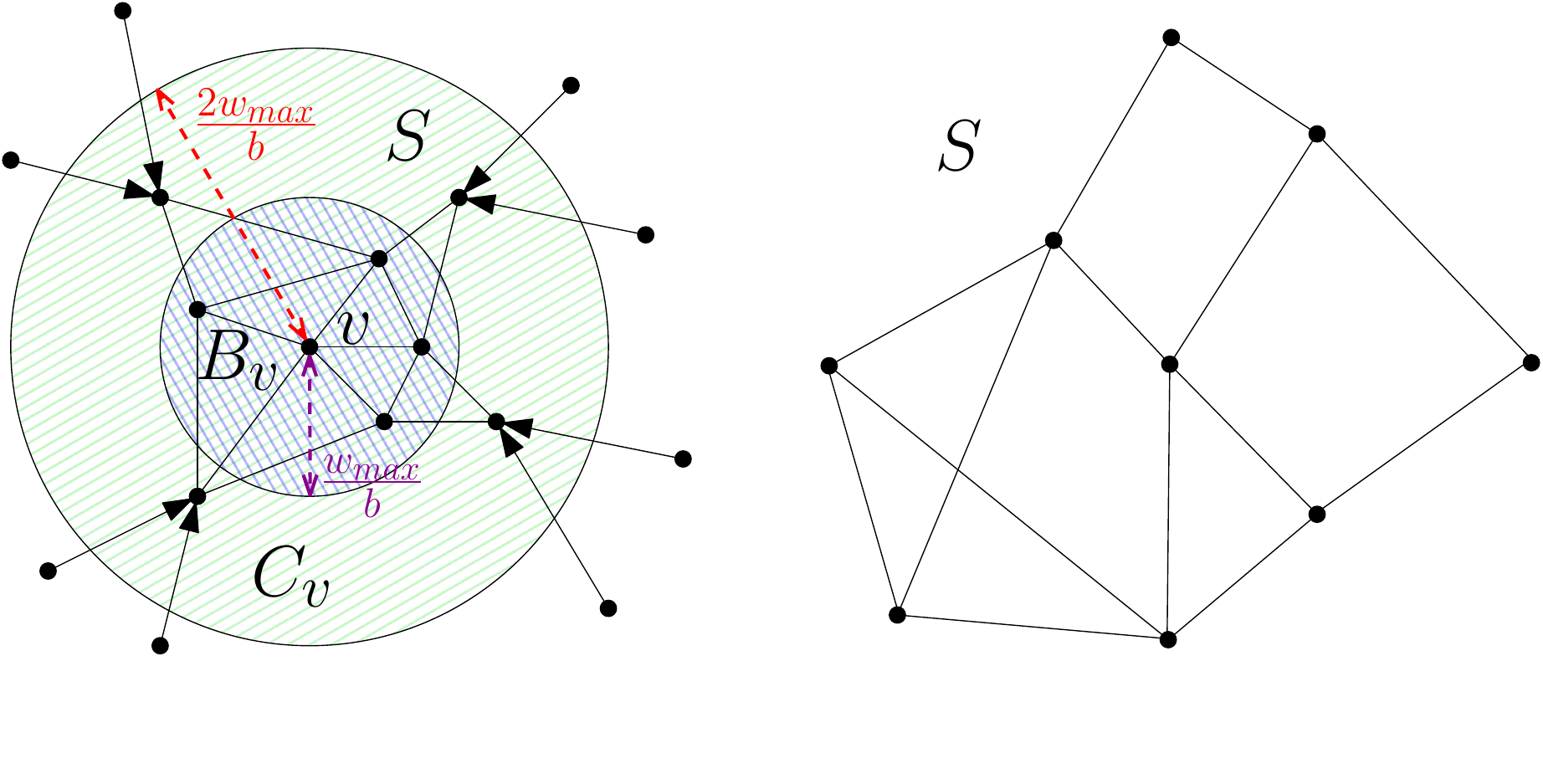}	
	\caption{Illustration of the network computed by Algorithm~\ref{alg:apxNE}. Left: The network if there exists a node $v$ with at most $c$ points with distance at least $\frac{\maxDist}{b}$.  Right: a bounded degree spanner produced by the algorithm in case all points are sparsely distributed.
	Edges point away from their owners. Undirected edges can be assigned arbitrarily as long as every agent owns at most $k$ edges.}
	\label{fig:apxNE_alg}
\end{figure}

\noindent We now prove our main results, i.e., the approximation bounds achieved by constructing the network via Algorithm~\ref{alg:apxNE}.
\begin{restatable}{theorem}{thm:apx_NE}\label{thm:apx_NE}
	Let $b\in \mathbb{R}_{\geq 1},  0\le c\leq n-1$, and let $k\in \mathbb{N}, t\in \mathbb{R}_{>1}$ be parameters such that we can construct a $k$-degree $t$-spanner for $P$.
	Algorithm~\ref{alg:apxNE} computes a $\beta$-NE with $$\beta =\max\left\{ \frac{k b}{c}\alpha +t, \frac{4 k}{b}\alpha  + 2t+1, \frac{2\alpha}{n-c}+2, \frac{4c(b + 2t)}{n-c}+6t\right\}.$$
\end{restatable}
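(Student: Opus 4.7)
I would organize the proof as a case analysis following the two branches of Algorithm~\ref{alg:apxNE} and, inside the if-branch, the location of the deviating agent $u$. Throughout, I would rely on two universal lower bounds on the cost of any best response $G'$ of $u$: the triangle inequality gives $\cost(u,G')\ge\sum_{x\ne u}\norm{u,x}$, and because two extremal points $p,q\in P$ satisfy $\norm{p,q}=\maxDist$, the triangle inequality applied to $\{u,p,q\}$ yields $\sum_{x\ne u}\norm{u,x}\ge\norm{u,p}+\norm{u,q}\ge\maxDist$. Each of the four expressions inside the maximum will be produced by an application of these lower bounds in one of the three cases below.

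\textbf{Case 2 (else-branch).} The negation of the if-condition yields $|P\setminus B_u|\ge c$ for every agent $u$, so at least $c$ points lie at Euclidean distance $\ge\maxDist/b$ from $u$, giving $\sum_{x\ne u}\norm{u,x}\ge c\cdot\maxDist/b$. Since $G$ is a $k$-degree $t$-spanner, $u$ owns at most $k$ edges of length $\le\maxDist$, so its edge cost is $\le k\alpha\maxDist$ and its distance cost is $\le t\sum_{x\ne u}\norm{u,x}$. Dividing yields $\frac{k\alpha\maxDist}{c\maxDist/b}+t=\frac{kb}{c}\alpha+t$, the first term of $\beta$.

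\textbf{Case 1a ($u\in C_v$).} Since $C_v$ has Euclidean diameter at most $4\maxDist/b$, every spanner edge incident to $u$ has length $\le 4\maxDist/b$, so the edge cost is $\le 4k\alpha\maxDist/b$. For distances inside $C_v$ the spanner property gives $d_G(u,x)\le t\norm{u,x}$. For a node $y\notin C_v$ with bridge $y_1\in C_v$, I use $d_G(u,y)\le t\norm{u,y_1}+\norm{y_1,y}$; since $u\in C_v$ and $y_1$ is closest to $y$ in $C_v$, $\norm{y_1,y}\le\norm{u,y}$ and $\norm{u,y_1}\le\norm{u,y}+\norm{y,y_1}\le 2\norm{u,y}$, hence $d_G(u,y)\le(2t+1)\norm{u,y}$. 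Dividing by the lower bound $\maxDist$ gives $\frac{4k\alpha\maxDist/b}{\maxDist}+(2t+1)=\frac{4k}{b}\alpha+2t+1$, matching the second term of $\beta$.

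\textbf{Case 1b ($u\notin C_v$).} The agent owns the single bridge edge $uu'$, and since $u'$ is closest to $u$ in $C_v$, $\norm{u,u'}\le\norm{u,x}$ for every $x\in C_v$. As $|C_v|\ge n-c$, this produces the critical lower bound $\sum_{x\ne u}\norm{u,x}\ge(n-c)\norm{u,u'}$, against which the edge cost $\alpha\norm{u,u'}$ contributes $\frac{\alpha}{n-c}$ after the analogue of the Case~1a distance-cost analysis yields the $+2$ constant, giving the third expression $\frac{2\alpha}{n-c}+2$. The fourth expression comes from the distance cost to the at most $c$ external nodes: the detour $u\to u'\to y_1\to y$ has length at most $\norm{u,u'}+t\cdot(4\maxDist/b)+\norm{y_1,y}$, and summing this over all $y\in P\setminus C_v$, using $\norm{y_1,y}\le\norm{u,y}+\norm{u,u'}$, yields a contribution whose $\norm{u,u'}$-part is charged against the $(n-c)\norm{u,u'}$ lower bound and whose $\maxDist/b$ and $\maxDist$-scale parts are charged against $\maxDist$; this bookkeeping produces $\frac{4c(b+2t)}{n-c}+6t$.

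The main technical obstacle is Case 1b, where the two lower bounds $(n-c)\norm{u,u'}$ and $\maxDist$ must be carefully matched to the different pieces of $u$'s cost: every bridge-path detour introduces one $\norm{u,u'}$-term (charged against the former) and one $\maxDist/b$- or $\maxDist$-scale term (charged against the latter), and it is only by summing these up across the at most $c$ external nodes that the factor $4c(b+2t)$ in the fourth expression emerges. The other two cases amount to a single clean application each of the $t$-spanner property together with the respective defining inequality of the branch.
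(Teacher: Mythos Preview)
Your overall architecture matches the paper exactly: the same three-way case split (else-branch; if-branch with $u\in C_v$; if-branch with $u\notin C_v$), and your arguments for the first two cases are essentially the paper's. The gap is in Case~1b.

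The two lower bounds you name, $(n-c)\norm{u,u'}$ and $\maxDist$, are not enough to produce the third and fourth terms. The paper's controlling lower bound in this case is
\[
\cost(u,G')\ \ge\ d_{G'}(u,B_v)\ \ge\ |B_v|\cdot\tfrac{1}{2}\bigl(\norm{u,u'}+\tfrac{\maxDist}{b}\bigr),
\]
obtained because every $x\in B_v$ satisfies both $\norm{u,x}\ge\norm{u,u'}$ (as $u'$ is closest in $C_v\supseteq B_v$) and $\norm{u,x}\ge\maxDist/b$ (as $u\notin C_v$ and $x\in B_v$). The second summand $|B_v|\cdot\frac{\maxDist}{2b}\ge(n-c)\frac{\maxDist}{2b}$ is what absorbs the $\maxDist$- and $\maxDist/b$-scale pieces of the numerator and produces the $(n-c)$ in the denominator of $\frac{4c(b+2t)}{n-c}$ as well as the clean~$6t$. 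Charging those pieces against $\maxDist$ alone, as you propose, would give terms of order $c$ and $ct/b$ rather than $\frac{cb}{n-c}$ and $\frac{ct}{n-c}$. Relatedly, the $+6t$ does not come from the external nodes at all; it arises from the distance cost to the $|B_v|$ \emph{internal} nodes, namely the $3t\maxDist/b$ per node divided by $\frac{\maxDist}{2b}$.

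Two further points you should incorporate. First, the paper partitions the distance cost by $B_v$ versus $P\setminus B_v$ (not $C_v$ versus $P\setminus C_v$); this is what makes the constant come out to $3t$ (hence $6t$) in the internal part, since for $x\in B_v$ one has $\norm{y,x}\le\norm{y,v}+\norm{v,x}\le 3\maxDist/b$. Second, the two expressions $\frac{2\alpha}{n-c}+2$ and $\frac{4c(b+2t)}{n-c}+6t$ are obtained simultaneously via the elementary split $\frac{a+b}{c+d}\le\max\{a/c,\,b/d\}$ applied to the two summands of the lower bound above; this is the mechanism that turns a single ratio into the maximum of two terms, and it is missing from your sketch.
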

\begin{proof}
 
We divide the proof into two parts corresponding to each if-case of the algorithm. 
First, we prove that if all nodes have at least $c$ nodes at distance of at least $\frac{\maxDist}{b}$, i.e., for all $v\in P$, $\lvert P\setminus B_v\rvert\geq c$, then the $k$-degree $t$-spanner $G$ is a $\left(\frac{kb\alpha}{c}+t\right)$-NE.

Consider an agent $u\in P$ playing a strategy $S_u$ in $G$. 
We need to evaluate the maximal improvement of the cost function that can be made by $u$ after changing its strategy from $S_u$ to $S'_u$. 
Let $G'=(P,E')$ be the network obtained after the improving move.  
Since $G$ is a $t$-spanner and $u$ owns at most $k$ edges of weight $\maxDist$, the cost of agent $u$ before the move is 
\begin{align*}
\cost(u, G) &= \alpha\cdot \norm{u,S_u}+ \dist_G(u,P)\leq \alpha k\cdot\maxDist + t\cdot \norm{u,P}\\
&\leq \alpha k\cdot\maxDist + t\cdot \dist_{G'}(u,P).
\end{align*}
By construction, $u$ has at least $c$ nodes at distance at least $\frac{\maxDist}{b}$. 
Hence, $\frac{b}{c\cdot \maxDist}\dist_{G'}(u,P)\geq  1$.
Combining with the inequality above, we get $\cost(u,G)\leq \frac{\alpha k b}{c}\cdot\dist_{G'}(u,P) + t\cdot\dist_{G'}(u,P)$. 
With this we can evaluate the maximal improvement made by $u$:
\begin{align}
\frac{\cost(u,G)}{\cost(u,G')}
&\leq \frac{\frac{\alpha k b}{c}\cdot\dist_{G'}(u,P) + t\cdot\dist_{G'}(u,P)}{\alpha\cdot \norm{u,S'_u}+\dist_{G'}(u,P)}\nonumber\\
&\leq \frac{\left(\frac{\alpha k b}{c} + t\right)\cdot\dist_{G'}(u,P)}{\dist_{G'}(u,P)}
= \frac{\alpha k b}{c }+t.
\label{ineq:apxNE_case1}
\end{align}

Now we analyze the case when there is a point $v\in P$ such that $|P\setminus B_v| < c$. 
In this situation, the algorithm computes a network $G=(P,E)$ that contains a spanner on $C_v$ with attached leaf nodes from $P\setminus C_v$ 
(see Figure~\ref{fig:apxNE_alg}~(left) for an illustration). 
In the following part of the proof we show that the improvement factor of every agent $u\in P$ is bounded as well.
We distinguish two cases depending on whether $u\in C_v$ or $u\in P\setminus C_v$.

If $u\in C_v$, we observe that, by construction, $u$ owns at most $k$ edges in $G$. 
Let $S_u$ be the strategy of $u$ in $G$, let $S'_u$ be the new improving strategy, and let $G'$ be the network obtained after the improving move. 
Since $u$ only buys edges to the nodes from $C_v$ in $G$, each of her edges has length of at most $\frac{4\maxDist}{b}$. 
 Then the total edge cost of $u$ is at most $4\alpha k\frac{\maxDist}{b}$. 
 Note that $\dist_{G'}(u,P)\geq \sum_{w\in P}\norm{u,w}\geq\maxDist$. Thus, the edge cost of $u$ is at most $\frac{4\alpha k}{b}\cdot \dist_{G'}(u,P)$. 
 
 The distance cost for $u$ is $d_G(u,P)= \dist_G(u,C_v) + \dist_G(u, P\setminus C_v)$. 
 Clearly, $\dist_G(u,C_v)\leq t\cdot \norm{u,C_v}\leq t\cdot \dist_{G'}(u,C_v)$. 
 To analyze the second term $\dist_G(u, P\setminus C_v)$, consider a node $x\in P\setminus C_v$.
 Let $y\in C_v$ be a node to which $x$ buys an edge. The existence of $y$ follows from the construction of $G$. 
 By the triangle inequality, $\dist_G(u,x)\le \dist_G(u,y)+\norm{y,x}$. 
 Since $G$ is a $t$-spanner for $C_v$, this implies that $\dist_G(u,y) \le t\cdot \norm{u,y} \leq t(\norm{u,x}+\norm{y,x})\leq 2t\cdot \norm{u,x}$, 
  where the last inequality holds since, by construction, $\norm{x,y}\leq \norm{x,u'}$ for any $u'\in C_v$.  
 Therefore, $\dist_G(u,x) \le 2t\cdot \norm{u,x}+\norm{y,x}\le (2t+1)\cdot \norm{u,x}$. 
 Finally, we can evaluate the maximum improvement that can be made by agent $u$: 
\begin{align}
\frac{\cost(u,G)}{\cost(u,G')} &\leq \frac{\alpha \norm{u,S_u} + \dist_G(u,P)}{\alpha \norm{u,S'_u} + \dist_{G'}(u,P)}\nonumber\\
&\le \frac{\frac{4\alpha k}{b}\dist_{G'}(u,P) +(2t+1) \dist_{G'}(u,P)}{ \dist_{G'}(u,P)}\nonumber\\
&=\frac{4\alpha k}{b} + 2t+1.
\label{ineq:apxNE_case2}
\end{align}

In case $u\in P\setminus C_v$, consider the vertex $y\in C_v$ to which $u$ buys its only edge. 
Since the edge is the only one owned by $u$, the agent's edge cost in $G$ is $\alpha\cdot \norm{u,y}$. 
Next we evaluate the distance cost. 
  Every path from $u$ to a node $x\in P$ goes over $y$ in $G$. 
  If $x\in B_v$, by the triangle inequality and since $G$ is a $t$-spanner on $C_v$, \begin{align*}\dist_G(u,x)&=\norm{u,y}+\dist_G(y,x)\leq \norm{u,y}+ t\cdot \norm{y,x} \\ &\le \norm{u,y}+ t\cdot (\norm{y,v}+\norm{v,x})\leq \norm{u,y}+3t\frac{\maxDist}{b}.\end{align*} 
  
  If $x\notin B_v$, the shortest $u$-$x$ path goes over the $t$-spanner and contains at most two edges to nodes outside of $C_v$. 
  Since the length of any shortest path in $C_v$ is at most $2\cdot \frac{2\maxDist}{b}\cdot t$, then $\dist_G(u,x)\leq 2\maxDist + 4t\cdot\frac{\maxDist}{b}$.  
  So, we can evaluate the maximum improvement for $u$ as follows:
  \begin{align*}
   &\frac{\cost(u,G)}{\cost(u,G')} 
  \leq \frac{\alpha\cdot \norm{u,y} + \dist_G(u,B_v) + \dist_G(u,P\setminus B_v)}{\alpha\cdot \norm{u,S'_u} + \dist_{G'}(u,P)}\\
  &\leq\frac{\alpha\cdot \norm{u,y} + |B_v|\left(\norm{u,y}+\frac{3t\maxDist}{b}\right) + |P\setminus B_v|\left(2\maxDist+\frac{4t\maxDist}{b}\right)}{\dist_{G'}(u,B_v)}
  \end{align*}
	  
  In the denominator we observe that  $\dist_{G'}(u,B_v)\geq |B_v|\cdot\frac{1}{2}(\norm{u,y}+\frac{\maxDist}{b})$. 
  Indeed, for any $x\in B_v\subseteq C_v$, $d_{G'}(u,x)\geq \norm{u,x}\geq \norm{u,y}$ because $y$ is the closest node to $u$ in $C_v$. 
  Also, $u$ has distance at least $\frac{\maxDist}{b}$ to all nodes in $B_v$, since $u \notin C_v$. This yields that the ratio between   $\cost(u,G)$ and $\cost(u,G')$ is at most
   \begin{align}
  &\frac{\alpha\cdot \norm{u,y} + |B_v|\left(\norm{u,y}+\frac{3t\maxDist}{b}\right) + |P\setminus B_v|\left(2\maxDist+\frac{4t\maxDist}{b}\right)}{\frac{1}{2}\norm{u,y}\cdot |B_v| + \frac{\maxDist}{2b}\cdot |B_v|} 
  \label{PoA:metric:case:22:2}
  \end{align}
   
   Since $|B_v| > n-c$, we obtain $ \frac{(\alpha + |B_v|)\norm{u,y}}{\frac{1}{2}\norm{u,y}|B_v|} \le \frac{2\alpha}{n-c}+2$. 
   For the remaining part we get:
   \begin{align}
		\frac{\maxDist\left(|P\setminus B_v|(2+\frac{4t}{b})+|B_v|\frac{3t}{b}\right)}{ \frac{\maxDist}{2b}\cdot|B_v|}
		&\le \frac{\left(2+\frac{4t}{b}\right) |P\setminus B_v|}{\frac{1}{2b}\cdot|B_v|}+6t\nonumber\\
		&\le\frac{4c(b + 2t)}{n-c}+6t.\label{PoA:metric:case22:4}
	\end{align}
	Combining the two inequalities above we obtain\footnote{The implication follows from the following observation: for any $a, b, c, d\in \mathbb{R}_{> 0}$, $\frac{a+b}{c+d}\leq \max\big\lbrace\frac{a}{c}, \frac{b}{d}\big\rbrace$ } an upper bound for Inequality~\eqref{PoA:metric:case:22:2} equal to $\max\big\lbrace \frac{2\alpha}{n-c}+2, \frac{4c(b + 2t)}{n-c}+6t\big\rbrace$. 
	Together with Inequality \eqref{ineq:apxNE_case1} and Inequality \eqref{ineq:apxNE_case2} we obtain the upper bound of $$\max\left\{ \frac{k b}{c}\alpha +t, \frac{4 k}{b}\alpha  + 2t+1, \frac{2\alpha}{n-c}+2, \frac{4c(b + 2t)}{n-c}+6t\right\}. \qedhere$$
\end{proof}
\noindent The next result shows that the proof of Theorem~\ref{thm:apx_NE} also provides an upper bound for the social cost of the network computed by Algorithm~\ref{alg:apxNE}. This yields a $\apxNEopt{\beta}{\beta}$.

\begin{restatable}{theorem}{thmApxGraph}\label{thm:apx_graph}
	Let $b\in \mathbb{R}_{\geq 1}, k, t\in \mathbb{R}_{>1}, c\in \{0,\ldots, n-1\}$.
	Algorithm~\ref{alg:apxNE} computes  a $\apxNEopt{\beta}{\beta}$ with $$\beta = \max\left\{ \frac{k b}{c}\alpha +t, \frac{4 k}{b}\alpha  + 2t+1, \frac{2\alpha}{n-c}+2, \frac{4c(b + 2t)}{n-c}+6t\right\}.$$
\end{restatable}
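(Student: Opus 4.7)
The plan is to piggy-back on the proof of Theorem~\ref{thm:apx_NE}: that proof already derives, for every agent $u$, an upper bound on $\cost(u,G)$ whose ``denominators'' are in fact purely Euclidean lower bounds. Consequently the same chain of inequalities yields $\cost(u,G)\le\beta\cdot\norm{u,P}$ with the same value of $\beta$. Summing over $u\in P$ and relating $\sum_u\norm{u,P}$ to $\SC(\opt{P})$ then delivers the social-cost bound, while the $\beta$-NE half is already supplied by Theorem~\ref{thm:apx_NE}.

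Concretely, I would revisit the three cases of the proof of Theorem~\ref{thm:apx_NE} and substitute $\norm{u,P}$ (respectively $\norm{u,B_v}$) for $d_{G'}(u,P)$ (respectively $d_{G'}(u,B_v)$). In Case~1, where every $v\in P$ satisfies $|P\setminus B_v|\ge c$, the lower bound $d_{G'}(u,P)\ge c\cdot\maxDist/b$ came from counting at least $c$ Euclidean neighbours of $u$ at distance at least $\maxDist/b$, so $\norm{u,P}\ge c\cdot\maxDist/b$ as well; the estimate from Theorem~\ref{thm:apx_NE} then gives $\cost(u,G)\le\big(\tfrac{\alpha k b}{c}+t\big)\norm{u,P}$. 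In Case~2a ($u\in C_v$), the bound $d_{G'}(u,P)\ge\maxDist$ is equally valid with $\norm{u,P}$ in its place: for any diameter pair $(p,q)$ of $P$ the triangle inequality yields $\norm{u,p}+\norm{u,q}\ge\norm{p,q}=\maxDist$; combined with $d_G(u,P)\le(2t+1)\norm{u,P}$ this produces $\cost(u,G)\le\big(\tfrac{4\alpha k}{b}+2t+1\big)\norm{u,P}$. In Case~2b ($u\in P\setminus C_v$), the denominator $\tfrac12|B_v|\big(\norm{u,y}+\maxDist/b\big)$ appearing in Inequality~\eqref{PoA:metric:case:22:2} is already a Euclidean lower bound on $\norm{u,B_v}\le\norm{u,P}$: every $x\in B_v$ satisfies $\norm{u,x}\ge\norm{u,y}$ (since $y$ is $u$'s closest point in $C_v$) and $\norm{u,x}>\maxDist/b$ (since $u\notin C_v$), so the algebra from Theorem~\ref{thm:apx_NE} carries over unchanged and again bounds $\cost(u,G)/\norm{u,P}$ by $\beta$.

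Summing the per-agent inequality over $u\in P$ gives $\SC(G)\le\beta\sum_{u\in P}\norm{u,P}$. Since the length of any path in $\opt{P}$ is at least the Euclidean distance between its endpoints, $\norm{u,v}\le d_{\opt{P}}(u,v)$ for all $u,v\in P$, whence $\sum_u\norm{u,P}\le\sum_u d_{\opt{P}}(u,P)\le\SC(\opt{P})$. Combined with the $\beta$-NE statement of Theorem~\ref{thm:apx_NE}, this shows that $G$ is a $\apxNEopt{\beta}{\beta}$. The only non-routine step is the case-by-case verification that each denominator from the proof of Theorem~\ref{thm:apx_NE} is genuinely a Euclidean lower bound on some subset of $\norm{u,P}$; the three observations above handle exactly this, and everything else is bookkeeping.
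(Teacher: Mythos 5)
Your proposal is correct and follows essentially the same route as the paper: the paper's proof likewise reuses the analysis of Theorem~\ref{thm:apx_NE}, observing that the denominators there are lower bounds valid for \emph{any} network (so $G'$ can be replaced by $\opt{P}$), and then bounds $\SC(G)/\SC(\opt{P})$ by $\max_{v\in P}\cost(v,G)/\cost(v,\opt{P})\le\beta$. Your case-by-case verification that each denominator is a genuine Euclidean lower bound on (a subset of) $\norm{u,P}$ makes explicit exactly the point the paper states more tersely, so the two arguments coincide in substance.
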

\begin{proof}
	Let $G=G(\mathbf{s})$ be the network corresponding to the strategy profile computed by Algorithm~\ref{alg:apxNE}.
	By Theorem~\ref{thm:apx_NE}, $G$ is a $\beta$-NE.
	To complete the proof we need to show that the social cost of $G$ is at most $\beta$ times the social cost of $\opt{P}$.
	We evaluate the social cost of $G$ with respect to the social cost of the social optimum network $G^*$.
	Clearly, $\frac{\SC(G)}{\SC(G^*)}=\frac{\sum_{v\in P}\cost(v, G)}{\sum_{v\in P}\cost(v, G^*)}\leq \max\limits_{v\in P}\frac{\cost(v,G)}{\cost(v,G^*)}$.
	Therefore, we can repeat the analysis from the proof of Theorem~\ref{thm:apx_NE}.
	All upper bounds for the worst-case agent's improvement from the proof of Theorem~\ref{thm:apx_NE} provide the upper bounds for the ratio between the cost of the agent in $G$ and $G^*$. It holds because we did not assume for $G'$ that strategies of all other agents are the same as in $G$, i.e., we can replace $G'$ with $G^*$ in all inequalities.
	Hence, $\max_{v\in P}\frac{\cost(v,G)}{\cost(v,G^*)}\leq \beta$, and the statement follows.
\end{proof}

\begin{restatable}{corollary}{cor:big_n_ane}\label{cor:big_n_ane}
	Let $\alpha\leq n^x$ for some $x\in\mathbb{R}_{>0}$. Then we can construct in $\mathcal{O}(n^2)$ time a \mbox{\apxNEopt{$\beta$}{$\beta$}} with $\beta\in\mathcal{O}\left(\alpha^{1-\frac{1}{2x}}+1\right)$ for $x\ge 1$ and $\beta\in\mathcal{O}\left(\alpha^{\frac{3x-1}{4x}}+1\right)$ for $0< x < 1$. 
\end{restatable}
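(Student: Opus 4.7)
The plan is to apply Theorem~\ref{thm:apx_graph} with constants $k, t$ coming from a standard Euclidean bounded-degree spanner construction and with $b, c$ tuned as functions of $\alpha$ and $n$. For the spanner I will take any constant-degree $(1+\varepsilon)$-spanner (such as the $\Theta_k$-graph or a greedy spanner; see Narasimhan and Smid~\cite{narasimhan2007geometric}) which can be built in $\mathcal{O}(n \log n)$ time with $k, t \in \mathcal{O}(1)$. This makes the spanner constructions in lines~4 and~9 of Algorithm~\ref{alg:apxNE} subdominant, so the overall runtime is dominated by the $\mathcal{O}(n^2)$ pairwise distance computations in lines~1--2 and in line~7.

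With $k, t \in \mathcal{O}(1)$, Theorem~\ref{thm:apx_graph} reduces to minimising the maximum of the four quantities $b\alpha/c$, $\alpha/b$, $\alpha/(n-c)$ and $bc/(n-c)$ over $b \in \mathbb{R}_{\geq 1}$ and $c \in \{0, \ldots, n-1\}$. Under the substitution $\alpha = n^x$ one has the convenient identities $\alpha^{1-1/(2x)} = \alpha/\sqrt{n}$ and $\alpha^{(3x-1)/(4x)} = \alpha^{3/4}/n^{1/4}$, so the target bounds are in fact simple functions of $\alpha$ and $n$ alone, independent of $x$.

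For $x \geq 1$ (equivalently $\alpha \geq n$) I will set $b := \sqrt{n}$ and $c := \lfloor n/2 \rfloor$. A short calculation gives $b\alpha/c, \alpha/b \in \Theta(\alpha/\sqrt{n})$, $\alpha/(n-c) \in \Theta(\alpha/n)$ and $bc/(n-c) \in \Theta(\sqrt{n})$; since $\alpha \geq n$, the first pair dominates, so $\beta \in \mathcal{O}(\alpha/\sqrt{n}) = \mathcal{O}(\alpha^{1-1/(2x)})$. For $x < 1$ with $\alpha \leq n/4$ I will set $b := (n\alpha)^{1/4}$ and $c := \lceil \sqrt{n\alpha}\,\rceil$, which ensures $c \leq n/2$ and hence $n-c \geq n/2$. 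All four terms then evaluate to $\mathcal{O}(\alpha^{3/4}/n^{1/4})$ or smaller, giving $\beta \in \mathcal{O}(\alpha^{(3x-1)/(4x)} + 1)$.

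The main obstacle is the boundary subregime $n/4 < \alpha < n$: the natural sparse choice $c \approx \sqrt{n\alpha}$ pushes $c$ too close to $n$, making $n-c$ too small and blowing up the last two terms in the bound. My plan here is simply to reuse the parameters $b := \sqrt{n}$, $c := \lfloor n/2 \rfloor$ from the first case, giving $\beta \in \mathcal{O}(\sqrt{n})$. A short sanity check then shows $\sqrt{n} \in \mathcal{O}(\alpha^{(3x-1)/(4x)})$ whenever $\alpha \geq n/4$, because $\alpha^{3/4}/n^{1/4} \geq (n/4)^{3/4}/n^{1/4} = \sqrt{n}/4^{3/4}$. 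Taking the better of the two parameter choices therefore covers the full range $0 < x < 1$ and completes the proof.
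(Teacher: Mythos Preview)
Your proof is correct and follows essentially the same approach as the paper: both plug a constant-degree Euclidean $t$-spanner into Theorem~\ref{thm:apx_graph} and tune $b,c$ to balance the four terms. The only cosmetic difference is the parametrisation: the paper fixes the coupling $c=b^2/2$ throughout and chooses $b=\alpha^{1/(2x)}$ (for $x\ge 1$) resp.\ $b=\alpha^{(x+1)/(4x)}$ (for $x<1$), whereas you express $b,c$ directly in $n$ and $\alpha$---which, as you observe, is equivalent at the extremal point $\alpha=n^x$ since $\alpha^{1-1/(2x)}=\alpha/\sqrt{n}$ and $\alpha^{(3x-1)/(4x)}=\alpha^{3/4}/n^{1/4}$ there. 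Your extra boundary case $n/4<\alpha<n$ is a harmless bookkeeping artefact of this reparametrisation that the paper's $x$-based choice sidesteps; otherwise the two arguments coincide.
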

\begin{proof}
	Consider a \apxNEopt{$\beta$}{$\beta$} constructed by Algorithm~\ref{alg:apxNE} with parameters $b\leq \sqrt{2(n-1)}$ and $c=\frac{b^2}{2}$, some real number $t>1$ and $k\in \mathcal{O}\left((t-1)^{1-2d}\right)$. A $k$-degree $t$-spanner can be constructed in $\mathcal{O}(n\log n)$ time (\cite{narasimhan2007geometric}, Section 10.1).
	Hence, our algorithm outputs the $(\beta,\beta)$-NE in $\mathcal{O}(n^2)$ time with $$\beta =\max\left\{ \frac{4 k}{b}\alpha  + 2t+1, \frac{2\alpha}{n-\frac{b^2}{2}}+2, \frac{2b^2(b + 2t)}{n-\frac{b^2}{2}}+6t\right\}.$$
	
	We make a case distinction for $x$ when choosing $b$.
	If $x\geq 1$, we choose $b=\alpha^\frac{1}{2x}$.
	Then we get $\frac{4 k}{b}\alpha  + 2t+1=4k\alpha^{1-\frac{1}{2x}} + 2t+1\in\mathcal{O}(\alpha^{1-\frac{1}{2x}})$.
	Since $\alpha\leq n^x$, we have that $n-\frac{1}{2}b^2\ge \alpha^\frac{1}{x}-\frac{1}{2}\alpha^\frac{1}{x}\geq\frac{1}{2}\alpha^{\frac{1}{x}}$.
	Therefore, $\max\big\{\frac{2\alpha}{n-\frac{1}{2}b^2}+2, \frac{2b^2(b + 2t)}{n-\frac{1}{2}b^2}+6t\big\}\in \mathcal{O}\left(\max\{\alpha^{1-\frac{1}{x}}, \alpha^{\frac{1}{2x}}\}\right)$.
	Since we assume $x\geq 1$, we get $$\beta\in \mathcal{O}\left(\max\{\alpha^{1-\frac{1}{x}}, \alpha^{\frac{1}{2x}}, \alpha^{1-\frac{1}{2x}}\}\right) = \mathcal{O}\left(\alpha^{1-\frac{1}{2x}}\right).$$

	In case $0< x<1$, we choose $b=\alpha^\frac{x+1}{4x}$.
	Then we get $n-\frac{1}{2}b^2 \ge \alpha^\frac{1}{x}- \frac{1}{2}\alpha^\frac{x+1}{2x} \geq \alpha^\frac{1}{x}\left(1-\frac{1}{2}\alpha^\frac{x-1}{2x}\right)\geq \frac{1}{2}\alpha^\frac{1}{x}$ since $\alpha^\frac{x-1}{2x}<1$ for $0<x<1<\alpha$.
	Hence, we get for the approximation factor $\beta$ that $$\beta \in \mathcal{O}\left(\max\left\{\alpha^{1-\frac{x+1}{4x}}, \alpha^{1-\frac{1}{x}}, \alpha^{\frac{3(x+1)}{4x}-\frac{1}{x}} \right\}+1\right) = \mathcal{O}\left(\alpha^{\frac{3x-1}{4x}}+1 \right). \qedhere$$
\end{proof}

\noindent Corollary~\ref{cor:big_n_ane} claims that for $1\leq \alpha\leq \sqrt[3]{n}$, there is a $(\beta, \beta)$-NE with constant $\beta$, while for the other values of $\alpha$, the approximation is better than the one obtained for a clique (see Theorem~\ref{thm:apxNE_clique}).
However, when $x$ tends to infinity, the value of $\beta$ approaches $\alpha$.
For this case we show that a minimum spanning tree provides a better approximation.

\begin{restatable}{theorem}{thmMstAne}
	\label{thm:mst-ane}
	Any minimum spanning tree $MST(P)$ on a set of points $P$ is a $(n-1,n-1)$-network.
\end{restatable}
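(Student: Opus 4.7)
The plan is to establish both the $(n-1)$-approximate NE property and the $(n-1)$-social-cost approximation of $MST(P)$ via two classical MST properties: the \emph{cycle property}, which yields a spanner-like bound, and the \emph{cut property}, which controls the cost of any unilateral deviation.

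I would first prove the key inequality $d_{MST(P)}(u,v)\leq (n-1)\norm{u,v}$ for all $u,v\in P$. This follows from the MST cycle property: every edge on the unique $u$-$v$ path in $MST(P)$ has length at most $\norm{u,v}$, and the path contains at most $n-1$ edges. Summing over $v\in P$ then gives $d_{MST(P)}(u,P)\leq (n-1)\norm{u,P}$.

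For the social-cost bound, let $\Sigma$ denote the total edge weight of $MST(P)$. Since $\opt{P}$ must be connected, its total edge weight is at least $\Sigma$, and by the triangle inequality $d_{\opt{P}}(u,v)\geq\norm{u,v}$, so $\SC(\opt{P})\geq \alpha\Sigma+\sum_u\norm{u,P}$. Combined with $\SC(MST(P))\leq \alpha\Sigma+(n-1)\sum_u\norm{u,P}$ from the spanner inequality, the ratio is at most $n-1$ by the elementary bound $\frac{a+(n-1)b}{a+b}\leq n-1$, valid for $a,b\geq 0$ and $n\geq 2$.

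For the approximate-NE part, the ownership assignment is decisive. I would root $MST(P)$ at an arbitrary node $\rho$ and let every non-root agent $u$ own precisely the single edge $\{u,q_u\}$ to her parent $q_u$. Consider any deviation of $u$ that produces a network $G'$. Removing $\{u,q_u\}$ splits $P$ into $u$'s subtree $T_u$ and the complement $R$; since no other agent owned any edge across this cut, $u$ must purchase at least one edge $\{u,v\}$ with $v\in R$ to keep the network connected. By the MST cut property applied to $(T_u,R)$, $\{u,q_u\}$ is a minimum-weight crossing edge, so $\norm{u,v}\geq\norm{u,q_u}$. Hence $\cost(u,G')\geq\alpha\norm{u,q_u}+\norm{u,P}$, while the spanner inequality gives $\cost(u,MST(P))\leq\alpha\norm{u,q_u}+(n-1)\norm{u,P}$, so the same elementary bound caps the ratio at $n-1$. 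The root $\rho$ owns no edges, so her MST cost is at most $(n-1)\norm{\rho,P}$ by the spanner inequality, while $\cost(\rho,G')\geq\norm{\rho,P}$ for any deviation by the triangle inequality.

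The main obstacle I anticipate is the ownership choice: a naive assignment in which some agent owns many MST edges would let her drop ownership and save a large edge-cost term while keeping the MST structure, destroying the $n-1$ bound. The rooted-parent ownership is engineered so that each non-root agent owns exactly one edge, and the cut property guarantees that this edge is the cheapest way for her to bridge the corresponding MST cut, which makes the lower bound on her best-response cost match the spanner-based upper bound on her MST cost.
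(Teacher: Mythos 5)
Your write-up shares the paper's skeleton --- the $(n-1)$-spanner bound via the cycle property, a lower bound on a deviating agent's edge cost, the comparison of $\SC(\opt{P})$ against $\alpha\sum_{uv\in MST(P)}\norm{u,v}+\sum_{u\in P}\norm{u,P}$, and the elementary ratio bound --- and all of these steps are correct. The substantive discrepancy is scope. By the paper's convention, a network with unspecified ownership is a $(\beta,\gamma)$-network only if the bound holds for \emph{every} assignment of edge ownership, and the paper's proof accordingly fixes no assignment: for an arbitrary owned set $S_v$ it argues that any deviation $S_v'$ keeping the network connected satisfies $\norm{v,S_v'}\ge\norm{v,S_v}$, ``since no agent can delete any edge without buying new ones that are as expensive.'' You instead engineer a specific rooted parent-edge ownership and apply the cut property to the single cut $(T_u,R)$. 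That establishes only that \emph{some} ownership makes the MST an $(n-1,n-1)$-network, which is weaker than the theorem as stated.

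The restriction is also unnecessary, and the obstacle you anticipate does not actually occur: an agent owning several MST edges cannot ``drop ownership \dots while keeping the MST structure,'' because in this model an edge exists only if some endpoint lists the other in her strategy, so dropping an edge deletes it. The clean fix is the global exchange argument. After $v$ switches from $S_v$ to $S_v'$, the resulting edge set must be spanning and connected (else her distance cost is infinite) and has total weight at most $\bigl(\sum_{uv\in MST(P)}\norm{u,v}\bigr)-\norm{v,S_v}+\norm{v,S_v'}$; minimality of the MST among connected spanning subgraphs forces this to be at least $\sum_{uv\in MST(P)}\norm{u,v}$, hence $\norm{v,S_v'}\ge\norm{v,S_v}$. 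This is exactly your cut-property step applied simultaneously to all cuts defined by the edges an agent owns, and substituting it for the rooted-ownership argument upgrades your proof to the full statement; the remainder of your argument then goes through verbatim.
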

\begin{proof}
	Let $v,w \in P$, and let $p = (v =  v_1, \dots, v_k =  w)$ be the path connecting them in the minimum spanning tree.
	Since $MST(P)$ is a minimum spanning tree, $\norm{v,w} \ge \norm{v_i, v_{i+1}}$ for any $i = 1, \dots, k-1$.
	Thus, we get
	\begin{align*}
			\dist_{MST(P)}(v,w) &= \sum_{i=1}^{k-1} \norm{v_i, v_{i+1}} \le \sum_{i=1}^{k-1} \norm{v,w} \\
			&= (k-1)\norm{v,w} \le (n-1)\norm{v,w}
	\end{align*}
	and therefore $MST(P)$ is a $(n-1)$-spanner.

	Let $S_v$ be the strategy of agent $v$ in $T$.
	Consider the network $G'=(P,E')$, where $v$ plays a better strategy $S_v'$, i.e., any strategy that decreases agent $v$'s cost.
	The edge cost for $v$ in $G'$ is at least the same as in $MST(P)$ since no agent can delete any edge without buying new ones that are as expensive, due to $MST(P)$ being a minimum spanning tree.
	Thus, we get
	\[
		\frac{\cost(v,MST(P))}{ \cost(v,G')}\le \frac{\alpha\cdot \norm{v,S_v} + (n-1) \norm{v,P} }{\alpha\cdot \norm{v,S_v} + \norm{v,P}} \le (n-1) .
	\]
	Finally, if $G^*=(P,E^*)$ is a social optimum, then the total edge cost in $G^*$ is at least $\alpha\sum_{uv\in MST(P)}\norm{u,v}$, while the distance cost is at least $\sum_{v\in P}\norm{v,P}$.
	Since $MST(P)$ is a $(n-1)$-spanner, we get  \sloppy$\SC(MST(P))\leq (n-1)\cdot\SC(G^*)$, analogously to the first case.
	Hence, $MST(P)$ is a $\apxNEopt{n-1}{n-1}$.
\end{proof}

\noindent Finally, we show that using the better of the networks obtained by Algorithm~\ref{alg:apxNE} and the MST yields a $(\mathcal{O}(\alpha^\frac{2}{3}),\mathcal{O}(\alpha^\frac{2}{3}))$-network. See Figure~\ref{fig:apx_plot} for an illustration.
\begin{restatable}{corollary}{cor:apx_NE_all_alpha}
	A \apxNEopt{$\beta$}{$\beta$} with $\beta\in\mathcal{O}(\alpha^\frac{2}{3})$ can be constructed in $\mathcal{O}(n^2)$ time.
\end{restatable}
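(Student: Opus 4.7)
The plan is to combine the two algorithmic results already established, namely Algorithm~\ref{alg:apxNE} (via Corollary~\ref{cor:big_n_ane}) and the MST construction (Theorem~\ref{thm:mst-ane}), through a single case distinction on the size of $\alpha$ relative to $n^{3/2}$. Both constructions can be carried out in $\mathcal{O}(n^2)$ time, so the overall running time is preserved regardless of which one is selected, and we simply return the one that applies.

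First, I would handle the regime $\alpha \leq n^{3/2}$. Here I apply Corollary~\ref{cor:big_n_ane} with the parameter choice $x = 3/2$. Since $3/2 \geq 1$, the corollary produces in $\mathcal{O}(n^2)$ time a $(\beta,\beta)$-network whose approximation factor satisfies $\beta \in \mathcal{O}\bigl(\alpha^{1 - 1/(2 \cdot 3/2)} + 1\bigr) = \mathcal{O}(\alpha^{2/3} + 1) = \mathcal{O}(\alpha^{2/3})$, as required.

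Second, for the regime $\alpha > n^{3/2}$, I would return any minimum spanning tree on $P$. By Theorem~\ref{thm:mst-ane} this is an $(n-1,n-1)$-network, and it can be computed in $\mathcal{O}(n^2)$ time. The assumption $\alpha > n^{3/2}$ yields $\alpha^{2/3} > n > n-1$, so the MST is a $(\beta,\beta)$-network with $\beta \in \mathcal{O}(\alpha^{2/3})$.

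I anticipate no real obstacle here. The threshold $\alpha = n^{3/2}$ is precisely the crossover point where the two building-block bounds meet (both evaluate to $\Theta(n)$, and indeed $\alpha^{2/3} = n$ at $\alpha = n^{3/2}$), so the case split is tight and no additional construction or argument is needed.
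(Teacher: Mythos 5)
Your proposal is correct and matches the paper's own argument: both split at the threshold $\alpha = n^{3/2}$, invoking Corollary~\ref{cor:big_n_ane} (with exponent $x = 3/2$) below it and the MST bound from Theorem~\ref{thm:mst-ane} above it, where $n-1 < \alpha^{2/3}$. The crossover observation you add is exactly why the paper picks that threshold.
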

\begin{proof}
	Let $x\in\mathbb{R}^+$ such that $\alpha=n^x$. Then $\alpha^\frac{1}{x}=n$. Applying Theorem \ref{thm:mst-ane} for $x\ge\frac{3}{2}$ and Corollary \ref{cor:big_n_ane} for $x<\frac{3}{2}$ yields the result.
\end{proof}

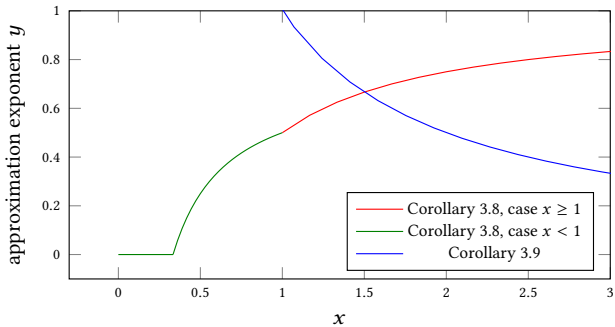
\begin{figure}[ht]
\centering
\begin{minipage}{\linewidth} 		
		\resizebox {\textwidth}{!}{
			\begin{tikzpicture}
				\begin{axis}[
            width=\textwidth,
            height=5cm,
					legend pos = south east ,
					legend style={nodes={scale=0.7, transform shape}}, 
					xlabel= {$x$},
					ylabel= {approximation exponent $y$},
					label style={font=\small},
					tick label style={font=\tiny},
					ytick = {0,0.2,0.4,0.6,0.8,1},
					ymax = 1,
					xmax = 3
					]
					
					\addplot[red, domain = 1:5]{1-(1/(2*x))};
					\addlegendentry{Corollary~\ref{cor:big_n_ane}, case $x\geq 1$};

					\addplot[green!50!black, domain = 0.3333:1]{(3*x-1)/(4*x))};
					\addlegendentry{Corollary~\ref{cor:big_n_ane}, case $x<1$};
					
					\addplot[blue, domain = 0.9:5]{1/x};
					\addlegendentry{Corollary~\ref{thm:mst-ane}};
					\addplot[green!50!black, domain = 0:0.3333]{0};
				\end{axis}
			\end{tikzpicture}
		}
		\end{minipage}
		\caption{Approximation factor $\beta$ obtained by Corollary \ref{cor:big_n_ane} and Theorem \ref{thm:mst-ane} dependent on the relation between $\alpha$ and $n$. For $x\in \mathbb{R}_{>0}$ such that $\alpha=n^x$, we can construct a \apxNEopt{$\beta$}{$\beta$} with $\beta\in\mathcal{O}(\alpha^y)$.}
	\label{fig:apx_plot}
	\end{figure}%

\noindent We now use Algorithm~\ref{alg:apxNE} to obtain a \apxNEopt{$1+\varepsilon$}{$1+\varepsilon$} for $\varepsilon>0$ if $P$ is chosen uniformly at random from the unit square.  We partition the unit square into four quadrants $C=\{a,b,c,d\}$ each containing a length-$\frac{1}{4}$-square from $C'=\{a', b', c', d'\}$, see Figure~\ref{fig:ane_beta_random_partition}. 
\begin{figure}[h]
	\centering
		\includegraphics[width=0.45\linewidth]{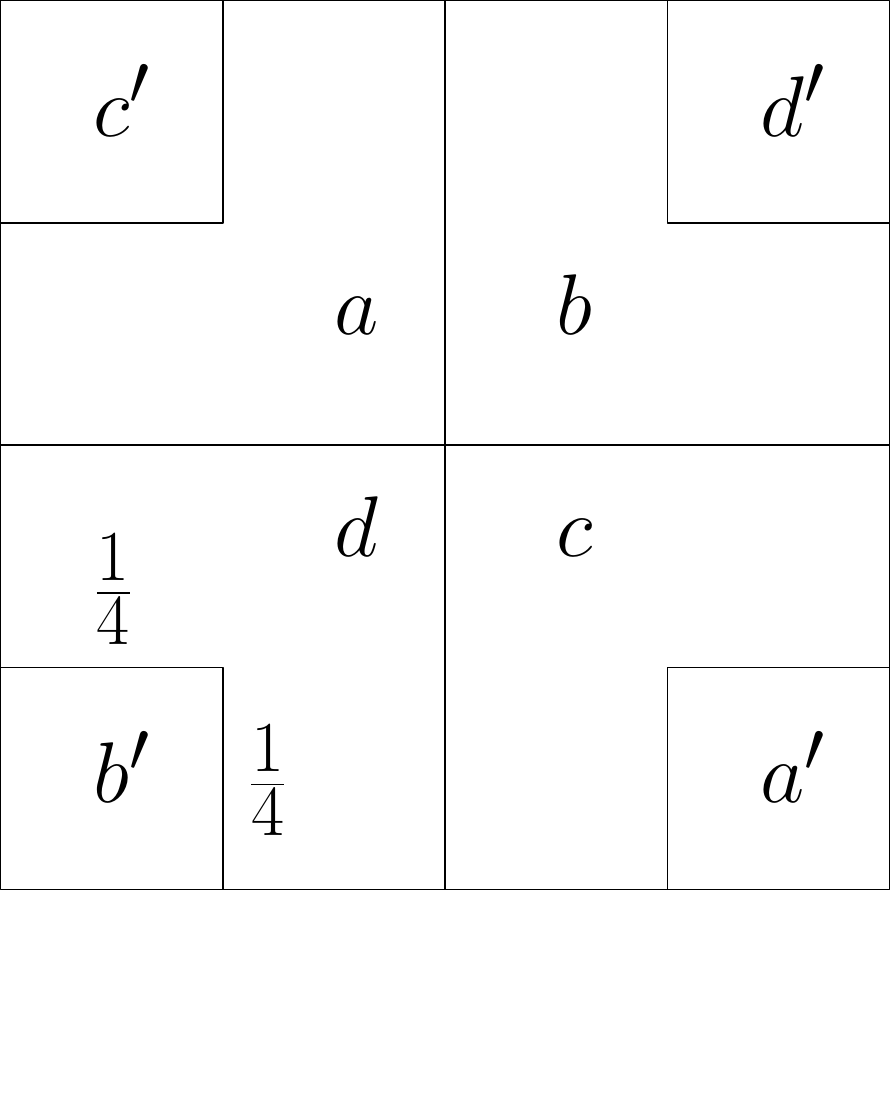}
\caption{Illustration of the partition of $[0,1]^2$ used in Lemma~\ref{lemma:random_chernoff} and Theorem~\ref{thm:apxNE_random}.}
	\label{fig:ane_beta_random_partition}
\end{figure}

The following Lemma shows that in this case with high probability the second case of Algorithm~\ref{alg:apxNE} occurs.

\begin{restatable}{lemma}{lemmaRandomChernoff} \label{lemma:random_chernoff}
	Let $P_n = \lbrace v_1, \dots, v_n\rbrace \subseteq [0,1]^2$ be  $n$ points chosen uniformly at random.
	Then $$
	\mathbb{P}\left(\bigcup_{\rho' \in C'} \left\lbrace\lvert  P_n \cap \rho'\rvert < (1-\delta)\frac{n}{16}\right\rbrace \right) \le 4	\exp\left(- \frac{\delta^2n}{32}\right).$$ 
\end{restatable}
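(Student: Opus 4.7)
The plan is to reduce the statement to a standard Chernoff lower-tail bound applied independently to each of the four subsquares in $C'$, followed by a union bound.

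First, I would observe that each square $\rho' \in C'$ has side length $\frac{1}{4}$, so it has area exactly $\frac{1}{16}$. Because each point $v_i$ is drawn uniformly at random from the unit square $[0,1]^2$, this gives $\mathbb{P}[v_i \in \rho'] = \frac{1}{16}$ independently across $i$. Fixing some $\rho' \in C'$, I would define indicator variables $Y_i := \mathbf{1}[v_i \in \rho']$, so that $X_{\rho'} := |P_n \cap \rho'| = \sum_{i=1}^n Y_i$ is a sum of $n$ i.i.d.\ Bernoulli$(1/16)$ variables with mean $\mu := \mathbb{E}[X_{\rho'}] = \frac{n}{16}$.

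Second, I would apply the standard multiplicative Chernoff bound for the lower tail of a sum of independent Bernoullis: for any $\delta \in (0,1)$,
$$\mathbb{P}\bigl[X_{\rho'} < (1-\delta)\mu\bigr] \;\le\; \exp\!\left(-\frac{\delta^2 \mu}{2}\right) \;=\; \exp\!\left(-\frac{\delta^2 n}{32}\right).$$
This inequality is the only nontrivial ingredient; all else is bookkeeping. Since this bound holds for every individual $\rho' \in C'$, a union bound over the four squares in $C'$ yields
$$\mathbb{P}\!\left(\bigcup_{\rho' \in C'} \left\lbrace |P_n \cap \rho'| < (1-\delta)\tfrac{n}{16}\right\rbrace\right) \;\le\; |C'| \cdot \exp\!\left(-\frac{\delta^2 n}{32}\right) \;=\; 4\exp\!\left(-\frac{\delta^2 n}{32}\right),$$
which is the claim.

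There is no real obstacle here: the argument is entirely routine once the areas are computed. The only subtlety to watch is that the Bernoulli indicators $Y_i$ (for a fixed $\rho'$) are indeed mutually independent, which follows immediately from the i.i.d.\ uniform sampling assumption on the points.
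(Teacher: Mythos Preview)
Your proposal is correct and mirrors the paper's own proof almost exactly: define Bernoulli indicators for membership in a fixed $\rho'\in C'$, note the mean is $n/16$, apply the multiplicative Chernoff lower-tail bound to get $\exp(-\delta^2 n/32)$, and finish with a union bound over the four squares. The paper presents the same steps in the same order; you are slightly more explicit about the form of the Chernoff bound and the independence, but there is no substantive difference.
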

\begin{proof}
	By the Union-Bound, showing that for all $\rho' \in C'$  $$\mathbb{P}\left(\lvert( P_n \cap \rho'\rvert < (1-\delta)\frac{n}{16} \right) \le \exp \left(- \frac{\delta^2n}{32} \right)$$ suffices to prove the statement.
	
	Let $X_i = 1$ if $v_i\in \rho'$ and $X_i = 0$, otherwise, for $1\leq i \leq n$.
	Obviously, $\sum_{i=1}^n X_i = \lvert P_n \cap \rho'\rvert$. Note that the area of each $\rho'\in C'$ is $\frac{1}{16}$, and therefore $\mathbb E\big[\sum_{i=1}^n X_i\big] = \frac{n}{16}$, by linearity of expectation. Thus, by Chernoff's inequality we get
	\[
	\mathbb{P}\left(\lvert P_n \cap \rho'\rvert < (1-\delta)\frac{n}{16} \right) \le \exp\left(- \frac{\delta^2n}{32}\right) . \qedhere
	\]
\end{proof}

\noindent By Theorem 10.1.3 from \cite{narasimhan2007geometric}, for any $\varepsilon> 0$, there is a $(1+\varepsilon)$-spanner with maximum degree only depending on $\varepsilon$. We use this construction to provide a $(1+\varepsilon, 1+\varepsilon)$-NE. 

\begin{restatable}{theorem}{thmApxNERandom}\label{thm:apxNE_random}
	Let $\varepsilon > 0$ and $P_n =\lbrace v_1, \dots, v_n\rbrace\subseteq [0,1]^2$ be a set of $n$ points chosen uniformly at random. 
Then if $\alpha_n \in o(n)$, there exists a $\apxNEopt{1+\varepsilon}{1+\varepsilon}$ for $P_n$, for any $\alpha < \alpha_n$ asymptotically almost surely.
\end{restatable}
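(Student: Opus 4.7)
The plan is to invoke Algorithm~\ref{alg:apxNE} with parameters tuned to $\varepsilon$, show that the ``else''-branch is executed a.a.s., and then read off the resulting approximation from the sparse-case line of the analyses underpinning Theorems~\ref{thm:apx_NE} and~\ref{thm:apx_graph}.

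\emph{Parameter choice.} Set $t = 1 + \varepsilon/2$, and let $k = k(\varepsilon)$ denote the maximum degree of a $t$-spanner furnished by Theorem~10.1.3 of \cite{narasimhan2007geometric}, so that $k$ is a constant depending only on $\varepsilon$. Fix an arbitrary constant $\delta \in (0,1)$, take $b = 9$, and set $c = \lfloor (1-\delta)\, n/16 \rfloor$.

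\emph{Sparse case holds a.a.s.} First I show that with probability $1-o(1)$ every agent $v \in P$ satisfies $|P \setminus B_v| \geq c$, so Algorithm~\ref{alg:apxNE} enters its else-branch. By Lemma~\ref{lemma:random_chernoff}, each inner sub-square $\rho' \in C'$ contains at least $(1-\delta)\, n/16$ points with probability at least $1 - 4\exp(-\delta^2 n/32) \to 1$. On this event, fix any $v \in P$ and consider the inner sub-square $\rho^\star \in C'$ lying in the quadrant of $[0,1]^2$ diagonally opposite to the one containing $v$. An elementary geometric computation shows that every point of $\rho^\star$ is at Euclidean distance at least $\sqrt{2}/8$ from $v$ (the worst case being $v = (1/2,1/2)$, where the closest corner of $\rho^\star$ sits at distance $\sqrt{2}/8$). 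Since $P \subseteq [0,1]^2$ we have $\maxDist \leq \sqrt{2}$, hence $\maxDist/b \leq \sqrt{2}/9 < \sqrt{2}/8$. Therefore $P \cap \rho^\star \subseteq P \setminus B_v$, giving $|P \setminus B_v| \geq (1-\delta)\, n/16 \geq c$.

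\emph{Approximation.} In the else-branch the algorithm outputs a $k$-degree $t$-spanner on $P$. The first part of the proof of Theorem~\ref{thm:apx_NE} shows that in this situation the spanner is a $\beta$-NE with $\beta = \frac{kb}{c}\alpha + t$, and the extension in the proof of Theorem~\ref{thm:apx_graph} gives that the same quantity upper-bounds the social-cost ratio. Substituting our parameters yields
\[
\beta \;\leq\; \frac{16\, k\, b}{(1-\delta)\, n}\,\alpha + 1 + \tfrac{\varepsilon}{2}.
\]
Because $k$, $b$, and $\delta$ are constants independent of $n$, while $\alpha < \alpha_n \in o(n)$, the first summand is $o(1)$ and in particular drops below $\varepsilon/2$ for all sufficiently large $n$. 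Hence $\beta \leq 1 + \varepsilon$ a.a.s., which is exactly the claim.

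The only non-routine step is the geometric lemma that for every $v \in [0,1]^2$ the diagonally opposite inner sub-square of the partition in Figure~\ref{fig:ane_beta_random_partition} lies outside $B_v$; once that is verified, Lemma~\ref{lemma:random_chernoff} reduces the argument to a deterministic instance of the sparse case of Theorem~\ref{thm:apx_graph}, after which only elementary estimates using $\alpha \in o(n)$ remain.
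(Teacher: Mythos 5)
Your proof is correct and follows essentially the same route as the paper: invoke Algorithm~\ref{alg:apxNE} with a constant-degree $(1+\varepsilon/2)$-spanner, use Lemma~\ref{lemma:random_chernoff} to show the else-branch fires a.a.s., and read off the bound $\frac{kb}{c}\alpha + t$ from the sparse case of Theorems~\ref{thm:apx_NE} and~\ref{thm:apx_graph}. The only difference is cosmetic: the paper sets $c = 2k_\varepsilon b\alpha/\varepsilon$ (with $b=4$) so the error term equals exactly $\varepsilon/2$, whereas you take $c = \Theta(n)$ (with $b=9$ and an explicit verification of the opposite-sub-square geometry) and let $\alpha \in o(n)$ drive the term to zero; both are valid.
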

\begin{proof} 
	Consider a network computed by Algorithm \ref{alg:apxNE} with parameters $b = 4$, $c = 2k_{\varepsilon}b\frac{\alpha }{\varepsilon}$, and $k_{\varepsilon}$-degree $\left(1+\frac{\varepsilon}{2}\right)$-spanner as constructed in \cite{narasimhan2007geometric}. 
	 
	Because $\alpha_n \in o(n)$, we can assume that $\alpha_n \le \frac{\varepsilon}{16 k_{\varepsilon}}\frac{n}{16}$, because it holds for all but finitely many $n$. 
	Since $$
	c = 2k_{\varepsilon}b\frac{\alpha }{\varepsilon} \le 2k_{\varepsilon}b\frac{1}{\varepsilon}\cdot\frac{\varepsilon}{16 k_{\varepsilon}}\cdot\frac{n}{16} = \left(1-\frac{1}{2}\right)\frac{n}{16},$$ then, by Lemma~\ref{lemma:random_chernoff}, each $\rho' \in C'$ has at least $c$ points with probability $1-4\exp(- \frac{n}{128})$. 
	Thus, for any point which is part of a quadrant $\rho\in C$, there are at least $c$ points in $\rho'$ within the distance at least $\frac{\maxDist}{b} = \frac{1}{4}$.
	Because we have a $k_{\varepsilon}$-degree $\left(1 + \frac{\varepsilon}{2}\right)$-spanner, by Theorem~\ref{thm:apx_NE}  and Theorem~\ref{thm:apx_graph}, we get, with probability $1-4\exp(- \frac{n}{128})$,  a \apxNEopt{$\beta$}{$\beta$} with 
	$$
	\beta = \frac{k_{\varepsilon}b}{c}\alpha+ 1 + \frac{\varepsilon}{2} =k_{\varepsilon}b\frac{\varepsilon}{2k_{\varepsilon}b\alpha}\alpha+ 1 + \frac{\varepsilon}{2} = 1 + \varepsilon
	. \qedhere$$
\end{proof}

\noindent Finally, we study integer grids in $\Rd$ and show that Algorithm~\ref{alg:apxNE} computes a $\apxNEopt{2d}{ 2d}$, if the grid itself is selected as spanner. 
\begin{restatable}{theorem}{theoGrids} \label{theo:d_dim_grids}
	Let $(b_1,\cdots,b_d) \in \mathbb{N}^d$ and $B = [0,b_1] \times \cdots \times [0,b_d]$ the corresponding hyperrectangle. Let $P \coloneqq \mathbb{Z}^d \cap B$. Then, there exists a $\apxNEopt{2d}{2d}$  for the nodes in $P$.
\end{restatable}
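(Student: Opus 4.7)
The plan is to designate $G=(P,E)$ as the grid graph on $P$, i.e.\ $\{u,v\}\in E$ iff $\|u-v\|_2=1$, and to orient every edge $\{v,v+e_i\}$ away from $v$, so that each agent owns at most $d$ unit-length edges.

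I would first bound the social cost. The handshake lemma combined with the maximum degree $2d$ gives $|E|\le dn$, so the total edge cost is at most $dn\alpha$. Since $\|\cdot\|_1\le\sqrt{d}\,\|\cdot\|_2$, the grid graph is a $\sqrt{d}$-spanner, which gives $\sum_u d_G(u,P)\le\sqrt{d}\sum_u\|u,P\|$. Hence $\SC(G)\le dn\alpha+\sqrt{d}\sum_u\|u,P\|$. On the other hand, any connected network has at least $n-1$ edges of length at least $1$ (the minimum Euclidean distance between distinct grid points), and its distance cost lower-bounds $\sum_u\|u,P\|$ via the triangle inequality, so $\SC(\opt{P})\ge\alpha(n-1)+\sum_u\|u,P\|$. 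Applying the standard inequality $\tfrac{A+B}{C+D}\le\max\{A/C,B/D\}$ then yields $\SC(G)/\SC(\opt{P})\le\max\{dn/(n-1),\sqrt{d}\}\le 2d$ whenever $n\ge 2$.

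Next, for the NE approximation I would upper bound $\cost(u,G)\le d\alpha+\sqrt{d}\|u,P\|$. For an arbitrary deviation $S'_u$ yielding $G'$, split on the total length $\|u,S'_u\|$ of edges $u$ buys. If $\|u,S'_u\|\ge\tfrac12$, applying the max-trick to $\cost(u,G')\ge\alpha\|u,S'_u\|+\|u,P\|$ already gives a ratio of at most $\max\{2d,\sqrt{d}\}=2d$. If $\|u,S'_u\|<\tfrac12$, which forces $S'_u=\emptyset$ since any two grid points are at Euclidean distance at least $1$, every outgoing edge of $u$ is removed; the key geometric fact is then that in $G'$ the shortest path from $u$ to $u+e_i$ has length at least $3$, which together with the Euclidean bound $d_{G'}(u,v)\ge\|u-v\|_2$ on the remaining vertices should lower-bound $d_{G'}(u,P)$ strongly enough to close the $2d$ ratio.

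The main obstacle will be this second case, where the $d\alpha$ term of $\cost(u,G)$ must be absorbed entirely into detour-based contributions to $\cost(u,G')$. Carefully combining the per-immediate-neighbor lower bound $\min(\alpha+1,3)$ with the fact that $u$ has $d$ positive-direction grid neighbors and retains $d$ incoming edges from its $u-e_i$ neighbors is the lever I expect to use. Should the detour accounting fail for extremely large $\alpha$, I would fall back on Lemma~\ref{lem:start_NE} and instead exhibit a center-sponsored star rooted at a central grid point, which is a genuine NE once $\alpha$ is sufficiently large, and bound its social cost ratio by $2d$ using the regular structure of $P$.
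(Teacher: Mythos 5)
Your social-cost bound and the $\sqrt{d}$-spanner argument via Cauchy--Schwarz coincide with the paper's. The genuine gap is in your choice of edge ownership and, consequently, in your ``case 2.'' Orienting every edge away from $v$ in the positive coordinate directions means that an interior agent $u$ who sells all $d$ of her owned edges \emph{remains connected} through the incoming edges from $u-e_1,\dots,u-e_d$, so her deviation cost is finite and contains no $\alpha$-term at all. Your detour accounting then has to absorb $k\alpha$ (with $k\le d$) into a distance increase of only $O(1)$ per affected node, which works only while $\alpha = O(\norm{u,P})$. For larger $\alpha$ the sell-everything move improves by an unbounded factor: on a $2\times(m+1)$ grid with $\alpha=n^{3}$, an interior node saves $2\alpha=2n^{3}$ in edge cost while its distance cost $\Theta(n^{2})$ grows by only $O(n)$, an improvement factor of order $n\gg 2d$. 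Your fallback does not rescue this: a center-sponsored star is indeed a NE once $\alpha\gtrsim\maxDist$, but its edge cost is $\alpha\sum_{v}\norm{n_0,v}=\Omega(\alpha\, n\,\maxDist)$ against an optimum of $O(\alpha n+\sum_u\norm{u,P})$, so on elongated grids its social-cost ratio is $\Omega(\maxDist)=\Omega(n)$ for large $\alpha$, not $2d$.

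The paper closes exactly this hole by a different ownership assignment: the grid is bipartite, and one color class $L$ buys \emph{all} of its incident edges (up to $2d$ per agent) while the other class $R$ buys none. Then every agent in $L$ is incident only to edges she herself owns, so any deviation must retain at least one edge of length at least $1$ or leave her isolated; hence her post-deviation cost is at least $\alpha+\norm{u,P}$, and the ratio $\bigl(2d\alpha+\sqrt{d}\,\norm{u,P}\bigr)/\bigl(\alpha+\norm{u,P}\bigr)\le 2d$ holds for \emph{every} $\alpha$, while agents in $R$ trivially achieve $\sqrt{d}$. If you want to salvage your proof, replace your orientation by this parity-based one; the rest of your argument (spanner factor, social-cost comparison, the max-of-ratios trick) then goes through essentially as in the paper.
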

\begin{proof}
We now construct such a \apxNEopt{$2d$}{$2d$} for $P$. Let $N$ be the set of all nearest neighbor edges along the grid and $G=(P,N)$ the corresponding network. This corresponds to choosing $c = 0$ in Algorithm \ref{alg:apxNE} and letting the algorithm choose the grid as a spanner. In order to obtain better bounds, we  redo the analysis for this case.  Since $G$ is bipartite, we can assign the edges, such that one part $L$ of the bipartition buys all their edges to their respective $2d$ neighbors in partition $R$.  
First we prove that $G$ is a $\sqrt d$-spanner. Let $p,q \in P$, then using the Cauchy-Schwarz inequality we obtain  $$\left(\frac{d_G(p,q)}{\norm{p,q}}\right)^2 = \left(\frac{\sum\limits_{i=1}^{d} \lvert p_i - q_i \rvert}{\sqrt{\sum\limits_{i=1}^{d} ( p_i - q_i )^2}}\right)^2 \le\sum\limits_{i=1}^{d}\frac{\lvert p_i - q_i \rvert^2}{ ( p_i - q_i )^2}  = d$$ and thus $\frac{d_G(p,q)}{\norm{p,q}} \le \sqrt d$.
Now consider any agent $p \in L$. Since $p$ buys all edges to her neighbors in the grid, she has edge cost of at most $\alpha 2 d$. In every improving move, $p$ must keep at least one edge, since otherwise the network would get disconnected. Thus we get an approximation factor $\beta \le \frac{\alpha 2 d + \sqrt{d}\norm{p, P}}{\alpha + \norm{p,P}} \le 2d $. 

If $p \in R$, the agent does not buy any edges and we get $\beta \le \frac{ \sqrt{d}\norm{p, P}}{\norm{p,P}} \le \sqrt{d}$. Thus,  we have a $2d$-NE. 

Analogously, for the social optimum approximation we get  
$\alpha (n-1) + \sum_{p\in P} \norm{p, P}$ as a trivial lower bound for the social cost of the optimum and thus, $\frac{\alpha dn +\sqrt{d}\sum_{p\in P}\norm{p, P} }{\alpha(n-1) + \sum_{p\in P}\norm{p, P}} \le 2d.$ 
\end{proof}

\section{Price of Anarchy and Price of Stability}\label{sec:PoA}
In this section we will provide lower bounds on the Price of Anarchy and on the Price of Stability in the Euclidean case.
It was shown that in the 1-norm space, the lower bound for the PoA approaches the upper bound of $\frac{\alpha+2}{2}$ when the dimension $d$ tends to infinity \cite{bilo2019geometric}. 
We show that also in Euclidean space, there is an instance that asymptotically almost meets the upper bound.

\begin{restatable}{theorem}{theoremDInfty}
The Price of Anarchy in the \game \xspace is at least $\min\big\{\frac{\alpha + 1}{\sqrt 2},\frac{\alpha^2 + 2\alpha + 2}{2\alpha + 2}\big\}$ as $d\to \infty$. \label{theorem:d_to_infty}
\end{restatable}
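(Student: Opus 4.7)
The plan is to prove both lower bounds in the minimum by exhibiting explicit families of Euclidean point configurations in $\mathbb{R}^d$ as $d\to\infty$, each witnessing one of the two terms. Both families exploit the freedom of high-dimensional embeddings to engineer point sets with only a handful of distinct pairwise distances, so that the induced game closely resembles the classical network creation game on a weighted complete graph.

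For the term $(\alpha+1)/\sqrt 2$, I would work with the configuration consisting of the origin $c=0$ together with the $n$ standard basis vectors $e_1,\dots,e_n$ in $\mathbb{R}^n$, so that $\|c-e_i\|=1$ and $\|e_i-e_j\|=\sqrt 2$ for all $i\neq j$, and let $n\to\infty$. Since the aspect ratio is exactly $\sqrt 2$, Lemma~\ref{lem:start_NE} certifies that a center-sponsored star at any single vertex is a NE for $\alpha\ge\sqrt 2$. The candidate bad NE is the star centered at a basis vertex (say $e_1$); it is forced to buy $n-2$ ``diagonal'' edges of length $\sqrt 2$ and one edge of length $1$. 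Comparing its social cost to the cheaper star centered at the origin (which uses $n$ edges of length $1$) inflates both the edge-cost and distance-cost contributions by a $\sqrt 2$ factor, so that tuning $n$ relative to $\alpha$ in the computed ratio produces the claimed bound $(\alpha+1)/\sqrt 2$ in the limit.

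For the term $\frac{\alpha^2+2\alpha+2}{2\alpha+2}=\frac{\alpha+1}{2}+\frac{1}{2(\alpha+1)}$, the structure of the expression (denominator $2(\alpha+1)$, plus a small correction $1/(2(\alpha+1))$) suggests augmenting the above construction with a constant-size gadget whose unique tight swap deviation balances an edge cost $\alpha$ against a constant-sized distance improvement. Concretely I would attach a small detour forcing one designated agent's best-response inequality to hold with equality in the $d\to\infty$ limit, and bulk out the remaining points using near-equidistant copies so that the distance cost contributions of the gadget and the bulk align in the right ratio. Computing the social cost of this NE against the strategy profile that shortcuts the detour and optimizing over the gadget's parameters then yields the exact expression.

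The main obstacle is Nash-equilibrium verification: for each candidate stable profile one must rule out every single-edge addition, deletion, and swap for every agent. The star-based constructions make this largely mechanical because every non-center agent plays the empty strategy, so only leaf-side deviations need to be checked and these reduce to the aspect-ratio condition of Lemma~\ref{lem:start_NE}. A secondary difficulty is extracting the tight multiplicative constants in front of $\alpha$: because the NE and OPT have edge costs of the same leading order in $n$, the precise factor emerges only from a careful $1/n$ expansion of both social costs followed by a final optimization over $n$ (and, for the second bound, over the gadget parameters) to realize the two minima separately.
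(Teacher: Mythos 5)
Your first construction cannot yield the claimed bound, and this is the crux of the problem. With the origin plus the standard basis vectors, every pairwise distance is $1$ or $\sqrt 2$, so the star centered at $e_1$ has edge cost at most $\sqrt 2$ times that of the star centered at the origin \emph{and} distance cost at most $\sqrt 2$ times that of the origin-centered star (each leaf-to-leaf distance is $2\sqrt 2$ versus $2$, each center-to-leaf distance at most $1+\sqrt 2$ versus $1$). Hence the social cost ratio of the two stars is bounded by $\sqrt 2$ for \emph{every} choice of $n$ and $\alpha$ --- no tuning of $n$ relative to $\alpha$ can push it to $\frac{\alpha+1}{\sqrt 2}$, which grows linearly in $\alpha$. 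The missing idea is that the bad star must be centered at a point placed \emph{far} from the bulk, at a distance that scales with $\alpha$. The paper takes the cross-polytope $T=\{\pm e_1,\dots,\pm e_{d-1}\}$ around the origin $m$, plus one extra point $u$ at distance $x$ along the remaining axis, and centers the bad NE star at $u$. Then every edge and every leaf-to-leaf distance in $S_n(u)$ is inflated by a factor $\sqrt{1+x^2}$ relative to the star $S_n(m)$, so the cost ratio tends to $\sqrt{1+x^2}$ as $n\to\infty$. Stability of $S_n(u)$ holds precisely because adding a direct unit edge (e.g., $mt$ with $t\in T$) costs $\alpha+1$ while saving a detour of length $x+\sqrt{1+x^2}$, which forces $x\le\frac{\alpha^2+2\alpha}{2\alpha+2}$, and adding a $\sqrt 2$-edge inside $T$ forces $x\le\sqrt{\tfrac12(\alpha^2+2\alpha-1)}$; substituting whichever constraint binds into $\sqrt{1+x^2}$ gives exactly the two terms of the minimum. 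In particular, the $\min$ in the statement does not call for two separate constructions as you assume: it records which of the two stability constraints is tight for the given $\alpha$ within a single family of instances. Your second, gadget-based construction is too underspecified to evaluate, but since it is built on top of the first one it inherits the same obstruction: without a point at distance $\Theta(\alpha)$ from the cluster, the two-star comparison cannot produce a ratio that grows with $\alpha$.
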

\begin{proof}
To provide the lower bound for the PoA we consider the following set of points.
	Let $x > 0$, and $n = 2d$. We define a set of $n$ points $P = \lbrace m, u\rbrace \cup T$, where $m \coloneqq (0, \dots, 0) \in \mathbb{R}^d$ is the central point, $u \coloneqq (0, \dots, 0, x) \in \mathbb{R}^d$, and $T \coloneqq \lbrace (\delta_{i,j})_{j = 1}^d,  (-\delta_{i,j})_{j = 1}^d \mid i \in \lbrace 1, \dots, d-1\rbrace\rbrace$, with $\delta_{i,j}=1$ if $i=j$, and $\delta_{i,j}=0$, otherwise. For an illustration of $P$, refer to Figure~\ref{fig:lower_bound_d_infty}.
	\begin{figure}[h]
	\centering
	\includegraphics[scale=0.8]{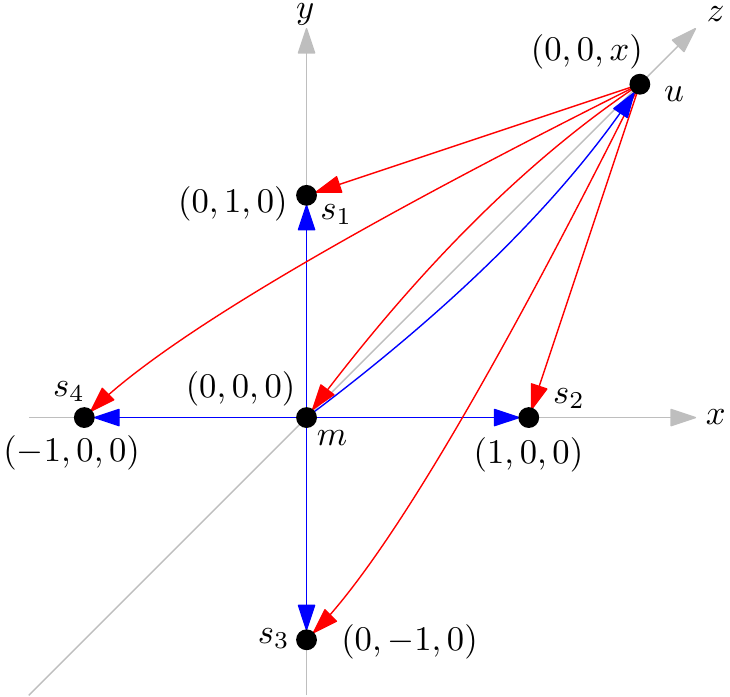}
	\caption{Illustration of the construction in the proof of Theorem~\ref{theorem:d_to_infty}.}
	\label{fig:lower_bound_d_infty}
\end{figure}

	Consider a star $S_n(u)$ centered at $u$ and a star $S_n(m)$ centered at $m$. 
First, we show that $S_n(u)$ is a NE, when all edges are owned by the central agent $u$. 
	Since $S_n(u)$ is a star, $u$ can neither buy nor sell edges to improve her strategy.
	The agent $m$ can change her strategy only by buying edges to the agents in $T$. 
	Note that buying an edge to a node in $T$ does not change the distance to any other node, hence we only have to show that it is not beneficial to buy a single edge, say an edge $mt$, where $t\in T$.
	Since $\norm{m,t}=1$, and the distance $d_{S_n(u)}(m,t)=\norm{m,u} + \norm{u,t}=x+\sqrt{1+x^2}$, the cost of $m$ after buying the edge changes by $\alpha + 1 - (\sqrt{1+x^2}+x)$.
	Thus, $m$ has no improving move if $x \le \frac{\alpha^2 + 2\alpha}{2\alpha + 2}$ holds.
	
	Analogously, any agent $t\in T$ cannot improve her strategy by buying the edge $tm$ if the inequality above holds.
	It remains to examine the case when $t$ buys an edge to another node $w\in T$.
	The length of the edge is at least $\sqrt{2}$ and the distance between nodes in $S_n(u)$ is $2\sqrt{x^2+1}$. 
	Thus, the cost changes by at least \sloppy${\sqrt{2}\alpha+\sqrt{2}-2\sqrt{x^2+1}}$. 
	Therefore the edge is not bought if $x\le \sqrt{\frac{1}{2}(\alpha^2 + 2\alpha - 1)}$.
	
	For $\alpha \geq \sqrt{1+\sqrt{2}}-1$, we have $\sqrt{\frac{1}{2}(\alpha^2 + 2\alpha - 1)} \geq \frac{\alpha^2 + 2\alpha}{2\alpha + 2}$.
	In that case, with ${x=\frac{\alpha^2 + 2\alpha}{2\alpha + 2}} $ the above inequalities that are necessary for $S_n(u)$ be stable are satisfied, and the corresponding star $S_n(u)$ is in NE. 
	In case $\alpha < \sqrt{1+\sqrt{2}}-1$,  $S_n(u)$ is a NE for $x=\sqrt{\frac{1}{2}(\alpha^2 + 2\alpha - 1)}$.

We proved that $S_n(u)$ is in NE. Next, we evaluate its  social cost.
	The edge cost of the star $S_n(u)$ equals  $\alpha \norm{u,T} + \alpha \norm{m,u}=(n-2)\alpha \sqrt{1+x^2} + \alpha x$, while the distance cost is $(2n-2)x+  (2n^2-6n+4)\sqrt{1+x^2}$.
	
	Since our final aim is to provide a lower bound for the PoA, we need an upper bound for the social cost of the optimum network. 
	For this, we consider a star $S_n(m)$ centered at the node $m$.\footnote{It is easy to verify that the star $S_n(m)$ is the social optimum for $\alpha\geq 1$.} 
	The social cost of $S_n(m)$ is $(n-2)\alpha + \alpha x + (2n-2)x+  (2n^2-6n+4)$.
	Then we get:
\begin{align*}	
	\frac{\cost(S_n(u))}{\cost(S_n(m))} &= \frac{\sqrt{1+x^2}((n-2)\alpha +2n^2-6n+4) +(2n-2+\alpha)x}{(n-2)\alpha +2n^2-6n+4+ (2n-2+\alpha)x}\\
	&\xrightarrow[n\to\infty]{}\sqrt{1+x^2}.
\end{align*}
	In case $\alpha \geq \sqrt{1+\sqrt{2}}-1$, it was shown above that $S_n\left(u\right)$ is a NE for ${x=\frac{\alpha^2 + 2\alpha}{2\alpha + 2}} $, the last coordinate of the node $u$.
	Thus, for sufficiently large $n$, the ratio between the social cost of the NE and the optimum approaches $\frac{\alpha^2 + 2\alpha+2}{2\alpha + 2}$.
	In case $\alpha < \sqrt{1+\sqrt{2}}-1$, $S_n(u)$ is a NE for $x = \sqrt{\frac{1}{2}(\alpha^2 + 2\alpha - 1)}$.
	Therefore, the ratio between $\cost(S_n(u))$ and $\cost(S_n(m))$ tends to $\frac{\alpha+1}{\sqrt{2}}$ as $n$ tends to infinity.
	This completes the proof.
	\end{proof}

\noindent Next, we now show that the PoA is super-constant in $\alpha$, even when the underlying space is the $\mathbb{R}^1$. For this we significantly improve the analysis of a construction of Bilò et al. (\cite{bilo2019geometric}, Theorem 3.27). We start with technical lemma.

\begin{lemma}\label{lemma:weird_sum}
	Let $\alpha>0$ and $n\in\mathbb{N}_+$. Then
	\begin{align*}
		&2n+\sum_{i=1}^{n-1}\frac{4}{\alpha}\left(1+\frac{2}{\alpha}\right)^{i-1}(i+1)(n-i)\\
		&=(\alpha n-\alpha^2)\left(1+\frac{2}{\alpha}\right)^n+\alpha^2+\alpha n.
	\end{align*}
\end{lemma}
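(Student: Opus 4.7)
My plan is to prove the identity by induction on $n$, with the main work being reduced to a simpler geometric-type sum.

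\textbf{Base case and setup.} For $n=1$, the sum on the left is empty, so the LHS equals $2$. For the RHS, $(\alpha - \alpha^2)(1 + 2/\alpha) + \alpha^2 + \alpha = \alpha + 2 - \alpha^2 - 2\alpha + \alpha^2 + \alpha = 2$, so the base case checks out. To streamline the algebra, I would set $q \coloneqq 1 + 2/\alpha$ (so $q - 1 = 2/\alpha$ and $\frac{4}{\alpha} = 2(q-1)$) and denote the left-hand side by $L_n$ and the right-hand side by $R_n$.

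\textbf{Reduction to an auxiliary identity.} Using $(n+1-i) = (n-i) + 1$ and the fact that the term $i = n$ of $\sum_{i=1}^{n} q^{i-1}(i+1)(n-i)$ vanishes, I would write
\begin{equation*}
L_{n+1} - L_n = 2 + \frac{4}{\alpha}\sum_{i=1}^{n} q^{i-1}(i+1).
\end{equation*}
On the right-hand side, using $q^{n+1} = q^n + (2/\alpha)q^n$, a direct expansion gives
\begin{equation*}
R_{n+1} - R_n = (2n+2-\alpha)\,q^n + \alpha.
\end{equation*}
Hence the induction step reduces to proving the single auxiliary identity
\begin{equation*}
\frac{4}{\alpha}\sum_{i=1}^{n} q^{i-1}(i+1) \;=\; (2n+2-\alpha)\,q^n + \alpha - 2. \tag{$\ast$}
\end{equation*}

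\textbf{Proof of the auxiliary identity.} I would split the sum as $\sum_{i=1}^{n} i q^{i-1} + \sum_{i=1}^{n} q^{i-1}$. The second piece is the geometric sum $(q^n - 1)/(q - 1)$, and the first is the well-known derivative sum $\bigl(nq^{n+1} - (n+1)q^n + 1\bigr)/(q-1)^2$, obtained by differentiating $\sum_{i=0}^n q^i$. Combining over the common denominator $(q-1)^2$ yields
\begin{equation*}
\sum_{i=1}^{n} q^{i-1}(i+1) \;=\; \frac{(n+1)q^{n+1} - (n+2)q^n - q + 2}{(q-1)^2}.
\end{equation*}
Since $(q-1)^2 = 4/\alpha^2$, multiplying by $4/\alpha$ leaves a factor of $\alpha$; after substituting $q = 1 + 2/\alpha$, the coefficient of $q^n$ works out to $(n+1)(\alpha+2) - \alpha(n+2) = 2n+2-\alpha$, and the constant term becomes $-\alpha(1+2/\alpha) + 2\alpha = \alpha - 2$. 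This establishes $(\ast)$ and completes the induction.

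\textbf{Expected obstacle.} There is no real conceptual difficulty; the only risk is bookkeeping. The slightly tricky step is recognizing that the problem decouples cleanly into $(\ast)$, which only depends on $n$ through an elementary differentiated-geometric-series identity; once that observation is made, the remainder is routine substitution of $q = 1 + 2/\alpha$. An alternative would be a generating-function argument, but the inductive route above is shorter and avoids introducing extra machinery.
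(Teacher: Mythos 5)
Your proof is correct and follows essentially the same route as the paper's: induction on $n$, reduction of the inductive step via $(n+1-i)=(n-i)+1$ to the residual sum $\frac{4}{\alpha}\sum_{i=1}^{n} q^{i-1}(i+1)$, and evaluation of that sum by the geometric and differentiated-geometric series formulas. The only cosmetic difference is that you phrase the step as matching $L_{n+1}-L_n$ against $R_{n+1}-R_n$, whereas the paper substitutes the induction hypothesis directly and then simplifies; the underlying algebra is identical.
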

\begin{proof}
	We proof the statement by induction over $n$. For $n=1$ the statement clearly holds.
	
	Now let $n>1$ such that the statement holds. We show that the statement also holds for $n+1$. We have
	\begin{align*}
		&2(n+1)+\sum_{i=1}^{n}\frac{4}{\alpha}\left(1+\frac{2}{\alpha}\right)^{i-1}(i+1)(n+1-i)\\
		=&2n+2+\sum_{i=1}^{n}\frac{4}{\alpha}\left(1+\frac{2}{\alpha}\right)^{i-1}(i+1)(n-i)+\sum_{i=1}^{n}\frac{4}{\alpha}\left(1+\frac{2}{\alpha}\right)^{i-1}(i+1)
	\end{align*}
	Applying the induction hypothesis, we have $$2n+\sum_{i=1}^{n-1}\frac{4}{\alpha}\left(1+\frac{2}{\alpha}\right)^{i-1}(i+1)(n-i)=(\alpha n-\alpha^2)\left(1+\frac{2}{\alpha}\right)^n+\alpha^2+\alpha n.$$ For the rest of the term we first split the sum to get 
	\begin{align*}
		&2 +\sum_{i=1}^{n}\frac{4}{\alpha}\left(1+\frac{2}{\alpha}\right)^{i-1}(i+1)\\
		=&2 +\sum_{i=0}^{n-1}\frac{4}{\alpha}\left(1+\frac{2}{\alpha}\right)^{i} \cdot i +2\sum_{i=0}^{n-1}\frac{4}{\alpha}\left(1+\frac{2}{\alpha}\right)^{i}.
    \end{align*}
    Now by applying theorems for geometric series this equals
    $$2 +\frac{4}{\alpha}\cdot\frac{(n-1)\left(1+\frac{2}{\alpha}\right)^{n+1}-n\left(1+\frac{2}{\alpha}\right)^{n}+1+\frac{2}{\alpha}}{\left(1+\frac{2}{\alpha}-1\right)^2}
		+\frac{8}{\alpha}\cdot\frac{\left(1+\frac{2}{\alpha}\right)^{n}-1}{1+\frac{2}{\alpha}-1} $$
    and we can simplify this further to
    \begin{align*}
		&2+\alpha\left((n-1)\left(1+\frac{2}{\alpha}\right)^{n+1}-n\left(1+\frac{2}{\alpha}\right)^{n}+1+\frac{2}{\alpha}\right)
		+4\left(1+\frac{2}{\alpha}\right)^{n}-4\\
		=&\alpha\left((n-1)\left(1+\frac{2}{\alpha}\right)^{n}+\frac{2(n-1)}{\alpha}\left(1+\frac{2}{\alpha}\right)^{n}-n\left(1+\frac{2}{\alpha}\right)^{n}+1+\frac{2}{\alpha}\right)\\
		&+4\left(1+\frac{2}{\alpha}\right)^{n}-2\\
		=&(\alpha (n-1)+ 2(n-1)-\alpha n+4)\left(1+\frac{2}{\alpha}\right)^{n}+\alpha+2-2\\
		=&(2n-\alpha+2)\left(1+\frac{2}{\alpha}\right)^{n}+\alpha.
	\end{align*}
	Together we yield
	\begin{align*}
		&(\alpha n-\alpha^2)\left(1+\frac{2}{\alpha}\right)^n+\alpha^2+\alpha n+(2n-\alpha+2)\left(1+\frac{2}{\alpha}\right)^{n}+\alpha\\
		=&(\alpha n -\alpha^2 +2n -\alpha +2)\left(1+\frac{2}{\alpha}\right)^n+\alpha^2+\alpha (n+1)\\
		=&(\alpha n+\alpha-\alpha^2)\left(1+\frac{2}{\alpha}\right)\left(1+\frac{2}{\alpha}\right)^{n}+\alpha^2+\alpha (n+1)\\
		=&(\alpha (n+1)-\alpha^2)\left(1+\frac{2}{\alpha}\right)^{n+1}+\alpha^2+\alpha (n+1).
	\end{align*}
	Therefore the statement also holds for $n+1$ and the lemma follows by induction.
\end{proof}

\begin{restatable}{theorem}{RLBPOA}\label{thm:R1_PoA_LB}
	The PoA in the $\mathbb{R}^1$ and thus in the $\mathbb{R}^d$ is lower bounded by $\frac{3}{5}\alpha^{\frac{2}{3}}\pm{o}(\alpha^\frac{2}{3})$.
\end{restatable}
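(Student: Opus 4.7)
The plan is to place $n+1$ collinear points in geometric progression, exhibit a star-shaped Nash equilibrium with large social cost, and upper-bound the social optimum by the path through these points; Lemma~\ref{lemma:weird_sum} will convert the path's sum of pairwise Euclidean distances into a convenient closed form.

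First, I would set $\rho = 1 + \tfrac{2}{\alpha}$ and place $v_0, v_1, \ldots, v_n$ on $\mathbb{R}^1$ at positions $p_0 = 0$ and $p_i = \rho^{i-1}$ for $1 \le i \le n$. The claim is that the star $S$ in which $v_0$ owns every edge $v_0 v_i$ is a Nash equilibrium. The center $v_0$ strictly loses by removing an edge (disconnection) and gains nothing by adding one. For any leaf $v_i$ a purchase of $v_i v_j$ with $j > i$ only shortens the graph distance to $v_j$ itself (from $p_i + p_j$ to $p_j - p_i$), saving $2 p_i$ in distance cost at edge cost $\alpha(p_j - p_i)$; since $p_j/p_i = \rho^{j-i} \ge \rho = 1+2/\alpha$, this trade is never strictly profitable, and multi-edge purchases decompose into such individual trades.

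Next I would compute both social costs in closed form. Using $\sum_{i=1}^n p_i = \tfrac{\alpha}{2}(\rho^n - 1)$, the star-distance structure ($d(v_0,v_i) = p_i$ and $d(v_i,v_j) = p_i + p_j$ for $i,j \ge 1$) gives $\SC(S) = \tfrac{\alpha(\alpha + 2n)}{2}(\rho^n - 1)$. To bound $\SC(\opt{P})$ I use the path $\pi$ through $v_0, \ldots, v_n$ as witness: being the MST of a collinear point set, it realizes the Euclidean distance between every pair, so $\SC(\opt{P}) \le \SC(\pi) = \alpha \rho^{n-1} + 2 D_{\mathrm{Eucl}}$ with $D_{\mathrm{Eucl}} = \sum_{0 \le i < j \le n}(p_j - p_i)$. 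Rewriting by how often each gap $\delta_k$ (with $\delta_0 = 1$ and $\delta_k = \rho^{k-1}(\rho - 1)$ for $k \ge 1$) is crossed gives $2 D_{\mathrm{Eucl}} = 2n + \tfrac{4}{\alpha}\sum_{i=1}^{n-1}(i+1)(n-i)\rho^{i-1}$, which by Lemma~\ref{lemma:weird_sum} equals $(\alpha n - \alpha^2)\rho^n + \alpha^2 + \alpha n$. Hence $\SC(\pi) = \alpha \rho^{n-1} + (\alpha n - \alpha^2)\rho^n + \alpha^2 + \alpha n$.

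Finally, I would set $n = c\alpha^{2/3}$ for a constant $c$ to be optimized and let $\alpha \to \infty$. From $n \ln \rho = \tfrac{2c}{\alpha^{1/3}} - \tfrac{2c}{\alpha^{4/3}} + O(\alpha^{-7/3})$ I obtain $\rho^n = 1 + \tfrac{2c}{\alpha^{1/3}} + \tfrac{2c^2}{\alpha^{2/3}} + \tfrac{4c^3}{3\alpha} + O(\alpha^{-4/3})$, hence $\SC(S) = c\alpha^{5/3} + 3c^2 \alpha^{4/3} + \tfrac{8c^3}{3}\alpha + O(\alpha^{2/3})$. The decisive cancellation happens in $\SC(\pi)$: the three leading contributions at orders $\alpha^2$, $\alpha^{5/3}$, and $\alpha^{4/3}$ all vanish, leaving $\SC(\pi) = \bigl(1 + \tfrac{2c^3}{3}\bigr)\alpha + O(\alpha^{2/3})$. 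The ratio is thus $\tfrac{3c}{3 + 2c^3}\,\alpha^{2/3} + o(\alpha^{2/3})$, which is maximized at $c = (3/4)^{1/3}$ with value $(2/9)^{1/3} > \tfrac{3}{5}$, establishing the bound; the statement for general $\mathbb{R}^d$ follows by embedding $\mathbb{R}^1$. The main technical obstacle is precisely tracking these cancellations—three leading orders in $\alpha$ must vanish exactly—which is made tractable only because Lemma~\ref{lemma:weird_sum} provides the exact closed form for $2D_{\mathrm{Eucl}}$.
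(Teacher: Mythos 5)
Your proposal is correct and follows essentially the same route as the paper: the same geometric-progression point set on $\mathbb{R}^1$, the same center-sponsored star as the equilibrium, the same path as the witness for the optimum, and the same use of Lemma~\ref{lemma:weird_sum} to obtain the closed form for the path's distance cost. The only differences are that you verify the Nash property of the star directly rather than citing Bilò et al., and that by optimizing the constant in $n=c\alpha^{2/3}$ you obtain the marginally better leading constant $(2/9)^{1/3}\approx 0.606$ in place of the paper's $3/5$, which is exactly your ratio $\frac{3c}{3+2c^3}$ evaluated at $c=1$.
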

\begin{proof}
	
	We construct a set of $n+1$ points $P=\{p_0,\dots,p_n\}$ in the $\mathbb{R}^1$ with coordinates $p_0=0$ and for $1\le i\le n\colon p_i=\left(1+\frac{2}{\alpha}\right)^{i-1}$. For an illustration of the construction, refer to Figure~\ref{fig:LB_PoA2}.
	\begin{figure}[h]
		\centering
		\includegraphics[width=\linewidth]{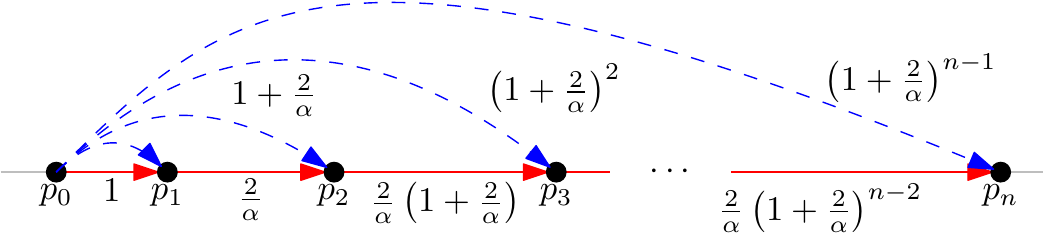}
		\caption{Illustration of the lower bound construction in the $\mathbb{R}^1$-GNCG. The blue dashed edges are in NE, the red solid edges are a social optimum.}
		\label{fig:LB_PoA2}
	\end{figure}
	We have $\norm{p_1,p_0}=1$ and for $2\le i \le n$ we have $\norm{p_i,p_{i-1}}=\frac{2}{\alpha}\cdot\left(1+\frac{2}{\alpha}\right)^{i-2}$. Let $\mathbf{s}=(\{p_1,\dots,p_n\},\varnothing,\dots,\varnothing)$ be the strategy-profile of the star with $p_0$ as the center node buying all edges. Let furthermore $\mathbf{s^*}=(\{p_1\},\dots,\{p_n\},\varnothing)$ be the strategy-profile where each point buys only the edge to the next point. Bilo et al. showed, that $G(\mathbf{s})$ is an NE and $G(\mathbf{s^*})$ is a social optimum and that the social costs of $G(\mathbf{s})$ are $\SC(G(\mathbf{s}))=\left(\frac{\alpha}{2}\left(1+\frac{2}{\alpha}\right)^n-\frac{\alpha}{2}\right)(2n+\alpha)=\alpha\left(\left(1+\frac{2}{\alpha}\right)^n-1\right)\left(n+\frac{\alpha}{2}\right)$ \cite{bilo2019geometric}. For the distance cost of $G(\mathbf{s^*})$ we count for each edge in how many shortest paths it contains. We get
	\begin{align*}
		2\cdot&\sum_{i=0}^{n-1}\norm{p_{i+1},p_i}(i+1)(n-i)\\
		=&2n+\sum_{i=1}^{n-1}\frac{2}{\alpha}\left(1+\frac{2}{\alpha}\right)^{i-1}(i+1)(n-i)\cdot 2\\
		=&\alpha\left((n-\alpha)\left(1+\frac{2}{\alpha}\right)^n+\alpha+n\right).
	\end{align*}
	The correctness of the last step can be verified by an induction over $n$. See Lemma~\ref{lemma:weird_sum} for details.
	Adding the buying cost yields $\SC(G(\mathbf{s^*}))=\alpha\left((n-\alpha)\left(1+\frac{2}{\alpha}\right)^n+\alpha+n+\left(1+\frac{2}{\alpha}\right)^{n-1}\right)$.
	
	We now bound the PoA as
	\begin{equation}
		PoA\ge\frac{\SC(\mathbf{s})}{\SC(\mathbf{s^*})}=\frac{\left(\left(1+\frac{2}{\alpha}\right)^n-1\right)\left(n+\frac{\alpha}{2}\right)}{(n-\alpha)\left(1+\frac{2}{\alpha}\right)^n+\alpha+n+\left(1+\frac{2}{\alpha}\right)^{n-1}}\label{poa}.
	\end{equation}
	Next we choose $n=\alpha^\frac{2}{3}$. Note that $n$ needs to be an integer. Since the statement is asymptotic, we can assume it without spoiling the result. With the Binomial Theorem, we obtain
	\begin{align*}
		\left(1+\frac{2}{n^\frac{3}{2}}\right)^n&=\sum_{i=0}^n \binom{n}{i}\frac{2^i}{n^\frac{3i}{2}}\\
		&=1+\frac{2n}{n^\frac{3}{2}}+\frac{4n(n-1)}{2n^3}+\frac{8n(n-1)(n-2)}{6n^\frac{9}{2}}+\sum_{i=4}^{n}\binom{n}{i}\frac{2^i}{n^\frac{3i}{2}}\\
		&=1+\frac{2}{n^\frac{1}{2}}+\frac{2}{n}+\frac{4}{3n^\frac{3}{2}}\pm\mathcal{O}\left(\frac{1}{n^2}\right)\\
		&=1+\frac{2}{\alpha^\frac{1}{3}}+\frac{2}{\alpha^\frac{2}{3}}+\frac{4}{3\alpha}\pm\mathcal{O}\left(\frac{1}{\alpha^\frac{4}{3}}\right).
	\end{align*}
	We observe that $\lim\limits_{n\rightarrow\infty}\left(1+\frac{2}{n^\frac{3}{2}}\right)^{n-1}=1$.
	Inserting into the denominator of \eqref{poa} yields
	\begin{align*}
		&(\alpha^\frac{2}{3}-\alpha)\left(1+\frac{2}{n^\frac{3}{2}}\right)^n+\alpha+\alpha^\frac{2}{3}+\left(1+\frac{2}{n^\frac{3}{2}}\right)^{n-1}\\
		&=(\alpha^\frac{2}{3}-\alpha)\left(1+\frac{2}{\alpha^\frac{1}{3}}+\frac{2}{\alpha^\frac{2}{3}}+\frac{4}{3\alpha}\pm\mathcal{O}\left(\frac{1}{\alpha^\frac{4}{3}}\right)\right)\\
		&+\alpha+\alpha^\frac{2}{3}+\left(1+\frac{2}{n^\frac{3}{2}}\right)^{n-1}\\
		&=\alpha^\frac{2}{3}-\alpha+2\alpha^\frac{1}{3}-2\alpha^\frac{2}{3}+2-2\alpha^\frac{1}{3}-\frac{4}{3}\pm\mathcal{O}\left(\frac{1}{\alpha^\frac{1}{3}}\right)\\
		&+\alpha+\alpha^\frac{2}{3}+\left(1+\frac{2}{n^\frac{3}{2}}\right)^{n-1}\\
		&=2-\frac{4}{3}\pm\mathcal{O}\left(\frac{1}{\alpha^\frac{1}{3}}\right)+\left(1+\frac{2}{n^\frac{3}{2}}\right)^{n-1}
		\xrightarrow[\alpha\to\infty]{}\frac{5}{3}.
	\end{align*}
	For the numerator of \eqref{poa}, we get
	\begin{align*}
	&\left(\left(1+\frac{2}{n^\frac{3}{2}}\right)^n-1\right)\left(\alpha^\frac{2}{3}+\frac{\alpha}{2}\right)\\
		&=\left(\frac{2}{\alpha^\frac{1}{3}}+\frac{2}{\alpha^\frac{2}{3}}+\frac{4}{3\alpha}\pm\mathcal{O}\left(\frac{1}{\alpha^\frac{4}{3}}\right)\right)\left(\alpha^\frac{2}{3}+\frac{\alpha}{2}\right)\\
		&=\alpha^\frac{2}{3}\pm\mathcal{O}\left(\alpha^\frac{1}{3}\right).
		\end{align*}
	Together we get $\frac{3}{5}\alpha^\frac{2}{3}\pm {o}(\alpha^\frac{2}{3})$ as a lower bound for the PoA.
\end{proof}

\noindent Finally, we show that the PoS is strictly larger than $1$. 
\begin{theorem}\label{thm:PoS}
The PoS in the $\Rd$ with $d\geq 2$ is greater than $1$ if $\alpha > 2$. 
\end{theorem}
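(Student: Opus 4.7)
The plan is to exhibit a witness instance for which no Nash equilibrium achieves the optimum social cost, thereby forcing PoS to be strictly greater than $1$. I reuse the triangle construction from Theorem~\ref{thm:social_improv}: let $k \coloneqq \lfloor \sqrt{\alpha}+1\rfloor$ and place $k$ co-located points at each of the three corners of an equilateral triangle of side length $1$, giving $|P| = 3k$ points, which embeds into $\mathbb{R}^2 \subseteq \mathbb{R}^d$ for any $d \geq 2$. The goal is to show that the unique socially optimal network is the ``triangle network'' $T$ consisting of exactly the three inter-cluster length-$1$ edges (plus arbitrary intra-cluster edges of length $0$, which are irrelevant to cost), and that no such $T$ is an NE under any assignment of edge ownership.

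First, I would identify the unique optimum by enumerating inter-cluster edge structures. Any connected configuration needs at least two inter-cluster edges. Two such edges yield a path of social cost $2\alpha + 8k^2$, while the triangle $T$ yields $3\alpha + 6k^2$. Any other three-edge configuration (two parallel edges between a single cluster pair), any configuration with four or more edges, or any configuration in which some edge is double-bought by both endpoints produces strictly higher cost. Since $k > \sqrt{\alpha}$, we have $\alpha < k^2 < 2k^2$, so $3\alpha + 6k^2 < 2\alpha + 8k^2$, and thus $T$ is the unique optimum.

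Next, I would show that $T$ is not an NE for any assignment of ownership of its three inter-cluster edges (and in an optimum each such edge has exactly one owner). Pick any owner $u$ of, say, the $A$-$B$ edge with $u$ in cluster $A$; her cost in $T$ is $\alpha + 2k$ (edge cost $\alpha$ plus distance $1$ to each of the $k$ agents in $B$ and each of the $k$ agents in $C$ via the $A$-$C$ edge). If $u$ drops this edge, the remaining $A$-$C$ and $B$-$C$ edges still form a connected path and her distance cost becomes $1$ to each agent in $C$ and $2$ to each agent in $B$, totalling $3k$. Hence $u$ strictly improves iff $\alpha > k$. A short case analysis shows that $\lfloor \sqrt{\alpha}+1\rfloor < \alpha$ for all $\alpha > 2$ (the borderline case being $\alpha = 2$ with $k = 2$), so every edge owner has a strict improving move and $T$ is not an NE.

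Since the unique optimum fails to be an NE under any ownership, every NE has strictly higher social cost than the optimum, proving PoS $> 1$. The delicate points will be (i) the exhaustive argument that $T$ is the unique optimum, requiring separate checks against the path, non-triangular three-edge configurations, and four-or-more-edge configurations, and (ii) verifying $\alpha > \lfloor \sqrt{\alpha}+1\rfloor$ for every $\alpha > 2$ by splitting into the ranges $k = 2$ (which handles $2 < \alpha < 4$) and $k \geq 3$ (using $\sqrt{\alpha} \leq \alpha - 1$). As a minor technical remark, if co-location is disallowed one can replace each cluster by $k$ points inside a ball of radius $\varepsilon$ and let $\varepsilon \to 0$; all inequalities used above are strict, so they remain valid for sufficiently small $\varepsilon$.
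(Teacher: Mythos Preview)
Your proposal is correct and follows essentially the same approach as the paper: use the equilateral-triangle-with-clusters instance, verify that the triangle network with all three length-$1$ edges is the (unique) social optimum, and show that any owner of a length-$1$ edge strictly improves by dropping it, so the optimum is never a NE. The only cosmetic difference is the cluster size: the paper picks $k=\lceil\alpha\rceil-1$, which makes the instability check $k<\alpha$ immediate and pushes the work into verifying $2k^2>\alpha$, whereas your choice $k=\lfloor\sqrt{\alpha}+1\rfloor$ (inherited from Theorem~\ref{thm:social_improv}) does the reverse; both work for all $\alpha>2$. Your explicit attention to uniqueness of the optimum is a slight strengthening over the paper's sketch, which tacitly assumes it.
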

\begin{proof}[Proofsketch]
Consider the construction from the proof of Theorem~\ref{thm:social_improv}.  After connecting the nodes within each cluster there are only two networks that connect the network; one with two length-1-edges and one with three length-1-edges. We set $n = 3 (\lceil \alpha\rceil -1)$ and observe that buying all three length-1-edges results in a social optimum network, as $2 \cdot 2 \cdot \left(\frac{n}{3}\right)^2 > \alpha +  2 \cdot \left(\frac{n}{3}\right)^2$. Selling a length-1-edge leads to a NE as $2 \cdot \frac{n}{3} < \alpha + 1\cdot \frac{n}{3}$ . This means that the social optimum network is not a NE, showing that PoS $ > 1$.
\end{proof}

\section{Outlook: Efficiency and Stability on a Host Network}\label{sec:GNCG}
We consider a more general model, i.e., the \emph{Generalized Network Creation Game (GNCG)} by Bilò et al.~\cite{bilo2019geometric}, where a complete host network $H = (V,E)$ with arbitrary edge weights $w\colon V \times V\rightarrow\mathbb{R}_+$ is given. The price of an edge $uv\in E$ then is $\alpha\cdot w(u,v)$. For an edge subset $E'\subseteq E(H)$, we denote $w(E')\coloneqq \sum_{uv\in E'}w(u,v)$. 
 
 Our main contribution of this section indicates that the geometric and non-geometric versions of the GNCG behave very similarly. Clearly, the hardness results carry over from the special case to the more general case. But we also extend the approximation results for stable and optimum networks to the GNCG, and we show that the PoA is linear in $\alpha$, as conjectured in \cite{bilo2019geometric}. Thus it matches the bounds for the metric version.

\subsection{Approximation}
Theorem~\ref{thm:apxNE_clique} and Theorem~\ref{thm:mst-ane} can be directly generalized to the case with arbitrary, even non-metric, edge weights if we consider a spanning sub-network $H'$ of the host network $H$ such that each edge in $H'$ participates in at least one shortest path.

\begin{corollary}
	Let $H=(V,E)$ be a host network in the GNCG. A spanning sub-network $H'=(V, E')$ of $H$, where $E'=\{(u,v) \mid w(u,v)=\dist_H(u,v)\}$, is a $\left(\alpha+1, \frac{\alpha}{2}+1\right)$-NE.
\end{corollary}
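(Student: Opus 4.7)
The plan is to mimic the structure of the proof of Theorem~\ref{thm:apxNE_clique}, replacing two uses of the triangle inequality (which is only available in the metric case) with the stronger property that edges of $H'$ encode exact shortest-path distances. The key preliminary observation, which I would establish first, is that $H'$ preserves all pairwise shortest-path distances of the host network $H$, i.e., $d_{H'}(u,v)=d_H(u,v)$ for every $u,v\in V$. This follows by taking any shortest $u$-$v$ path in $H$ and noting that for every edge $xy$ on that path, $w(x,y)=d_H(x,y)$ (otherwise one could shortcut the path), hence $xy\in E'$. Together with $H'\subseteq H$, this gives equality of distances.

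With this in hand, the stability argument works as follows. Fix any agent $u$ and assume the worst case in which she owns every one of her incident edges in $H'$. For each such edge $uv\in E'$, we have $w(u,v)=d_H(u,v)=d_{H'}(u,v)$, so the total edge cost of $u$ is at most $\alpha\sum_{v\colon uv\in E'}d_{H'}(u,v)\le \alpha\cdot d_{H'}(u,V)$, giving $\cost(u,H')\le(\alpha+1)\,d_{H'}(u,V)$. For any deviation, the resulting network $H''$ is still a subnetwork of $H$, so distances cannot shrink below those in $H$: $d_{H''}(u,V)\ge d_H(u,V)=d_{H'}(u,V)$, and in particular $\cost(u,H'')\ge d_{H'}(u,V)$. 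The ratio is at most $\alpha+1$.

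For the social cost, I would split $\SC(H')=\alpha\cdot w(E')+\sum_{u\in V}d_{H'}(u,V)$ and bound the two summands separately against $\SC(\opt{V})$. The distance part uses $\sum_u d_{H'}(u,V)=\sum_u d_H(u,V)\le\sum_u d_{\opt{V}}(u,V)\le \SC(\opt{V})$, since $\opt{V}$ is a subnetwork of $H$. For the edge part, each edge $uv\in E'$ satisfies $w(u,v)=d_H(u,v)\le d_{\opt{V}}(u,v)$, and since $E'$ is a subset of all unordered pairs, $w(E')\le\sum_{\{u,v\},u\neq v}d_{\opt{V}}(u,v)=\tfrac12\sum_u d_{\opt{V}}(u,V)\le\tfrac12\SC(\opt{V})$. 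Adding the two bounds yields $\SC(H')\le(\tfrac{\alpha}{2}+1)\SC(\opt{V})$.

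The only place where one might stumble is in the stability part, because deviations can both buy new edges and delete owned ones, so one must not argue only about the deleted subnetwork. The clean resolution is the observation already made above: any created network sits inside $H$, hence inherits $H$'s lower bound on distances, which is exactly the distance the agent already attained in $H'$; the rest of the argument is then a direct translation of the metric proof.
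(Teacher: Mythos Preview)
Your proof is correct and follows essentially the same approach as the paper, which simply states that the corollary is a direct generalization of Theorem~\ref{thm:apxNE_clique}. You have supplied the details the paper omits: the key observation that $d_{H'}=d_H$ replaces the triangle inequality, and your handling of arbitrary deviations via the bound $d_{H''}(u,V)\ge d_H(u,V)=d_{H'}(u,V)$ cleanly subsumes both edge deletions and additions, whereas Theorem~\ref{thm:apxNE_clique} only needed to consider deletions because the network there was complete.
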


\begin{corollary}
Consider a host network $H=(V,E)$. Any minimum spanning tree $MST(H)$ of $H$ is a $\apxNEopt{n-1}{n-1}$.
\end{corollary}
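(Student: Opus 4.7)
The plan is to adapt the proof of Theorem~\ref{thm:mst-ane} by replacing every appearance of the Euclidean distance $\norm{u,v}$ by the host shortest-path distance $\dist_H(u,v)$, which in the non-metric setting is the correct lower bound on any distance in a spanning sub-network $G' \subseteq H$. Write $T := MST(H)$. The two main ingredients will be a spanner inequality $\dist_T(u,v) \le (n-1)\,\dist_H(u,v)$ and an edge-cost monotonicity statement: $\sum_{u \in S_v'} w(v,u) \ge \sum_{u \in S_v} w(v,u)$ for every alternative strategy $S_v'$ that leaves the network connected.

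To prove the spanner inequality I would take a shortest $u$-$v$ path $u = v_0, v_1, \dots, v_k = v$ in $H$ with $k \le n-1$ and $\dist_H(u,v) = \sum_{i=0}^{k-1} w(v_i, v_{i+1})$. For each consecutive pair $(v_i,v_{i+1})$, the cycle property of the MST applied to the cycle consisting of the host edge $v_iv_{i+1}$ together with the $v_i$-$v_{i+1}$ path in $T$ implies that every tree edge on that path has weight at most $w(v_i, v_{i+1})$; since that tree path has at most $n-1$ edges, this yields $\dist_T(v_i, v_{i+1}) \le (n-1)\,w(v_i, v_{i+1})$. Summing along the shortest $u$-$v$ path then gives $\dist_T(u,v) \le (n-1)\,\dist_H(u,v)$.

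For edge-cost monotonicity, fix any ownership of $T$ and let $v$ own $\{vu_1,\dots,vu_m\}$. Deleting these edges from $T$ yields a forest whose components are $C_0 \ni v$ and $C_1,\dots,C_m$ with $u_i \in C_i$, and for any alternative strategy $S_v'$ to produce a connected graph, $S_v'$ must include at least one vertex of each $C_i$ with $i \ge 1$. The MST exchange property gives $w(v,u) \ge w(v, u_i)$ for every $u \in C_i$, since otherwise swapping $vu_i$ with $vu$ in $T$ would yield a spanning tree of smaller total weight, contradicting minimality; summing over $i$ yields the claim. Combining both ingredients with $\dist_{G'}(v, x) \ge \dist_H(v, x)$ for every $x \in V$, the equilibrium ratio reduces to a fraction of the form $\frac{A + (n-1)B}{A+B} \le n-1$, where $A = \alpha \sum_{u \in S_v'} w(v,u)$ and $B = \dist_{G'}(v, V)$.

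The social-cost bound follows the same pattern: any spanning connected subgraph of $H$ has total edge weight at least $w(T)$ and every distance in it is at least $\dist_H$, so $\SC(\opt{V}) \ge \alpha\,w(T) + \sum_v \dist_H(v, V)$, while the spanner inequality gives $\SC(T) \le \alpha\,w(T) + (n-1)\sum_v \dist_H(v, V)$; the same $\frac{A+(n-1)B}{A+B}$ calculation closes the argument. The main obstacle in transplanting the Euclidean proof is precisely the first step: in the non-metric setting $w(u,v)$ is no longer a lower bound on $\dist_{G'}(u,v)$, so the spanner inequality has to be stated and proved with respect to $\dist_H$ rather than $w$, which forces the detour through a shortest host-path decomposition.
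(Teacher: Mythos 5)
Your proof is correct and follows the route the paper intends: the paper states this corollary without a separate proof, remarking only that Theorem~\ref{thm:mst-ane} generalizes once one restricts attention to edges that lie on shortest paths of $H$, and your argument is exactly that generalization carried out in detail. The two ingredients you make explicit --- the spanner bound $\dist_T(u,v)\le (n-1)\,\dist_H(u,v)$ obtained by decomposing along a host shortest path and applying the cycle property edge by edge, and the component-wise exchange argument for edge-cost monotonicity --- are precisely the steps that the paper's Euclidean proof asserts in a single line each, so your write-up fills in the details rather than taking a different path.
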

\noindent The key idea of the two results above is to remove long edges from $H$ to obtain an approximation similar to the case with metric edge weights. 
We can extend this idea to make our Algorithm 1 work for the GNCG with arbitrary edge weights as follows. 

Consider a host network $H = (V,E)$. 
Update $H$ as follows: starting from the longest edge $uv\in E$, if $d_H(u,v)<w(u,v)$ remove $uv$ from $H$. 
Repeat the procedure until all edges are checked. 
Denote the final network as $H_M = (V,E_M)$. 
Note that $H_M$ is connected and has metric edge weights, i.e., for any edge $uv \in E_M$ we have $w(u,v)=d_{H_M}(u,v)$. 
Now we can apply Algorithm~\ref{alg:apxNE} to $H_M$ with the only modification on step 7: connect node $u$ with the closest node $u'\in C_v$ by the shortest path $\pi_{H_M}(u,u')$. We assume $w_{\max}$ is the length of the longest shortest path in $H_M$.
\begin{corollary}
Consider a host network $H=(V,E)$, and let $b\in \mathbb{R}_{\geq 1}, k, t\in \mathbb{R}_{>1}, c\in \{0,\ldots, n-1\}$ such that a $k$-distributable $t$-spanner\footnote{See Footnote~\ref{fn:k_distr_spanner} for a definition of a $k$-distributable $t$-spanner.} exists for $H_M$.
	Algorithm~\ref{alg:apxNE} computes  a $\apxNEopt{\beta}{\beta}$ for $H$ with $$\beta = \max\left\{ \frac{k b}{c}\alpha +t, \frac{4 k}{b}\alpha  + 2t+1, \frac{2\alpha}{n-c}+2, \frac{4c(b + 2t)}{n-c}+6t\right\}.$$
\end{corollary}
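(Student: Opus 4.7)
The plan is to reduce the general case to the metric case by exploiting two key properties of the preprocessed host network $H_M$: first, that $d_H(u,v) = d_{H_M}(u,v)$ for every pair $u,v \in V$, since each edge removed from $H$ was strictly dominated by an alternative path already present; and second, that $w(u,v) = d_{H_M}(u,v)$ for every remaining edge $uv \in E_M$, so that $H_M$ effectively has metric edge weights. Both properties follow from the edge-removal procedure, and I would verify them by an induction on the order in which edges are removed (processing from the longest edge downwards ensures that dominated edges have no effect on shortest-path distances).

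Given these two properties, I would rerun the proofs of Theorem~\ref{thm:apx_NE} and Theorem~\ref{thm:apx_graph} with $\norm{u,v}$ replaced everywhere by the shortest-path weight $d_{H_M}(u,v)$, and with $\maxDist$ replaced by the length of the longest shortest path in $H_M$. The triangle inequalities relied upon in those proofs (e.g.\ when bounding $\dist_G(u,x)$ in terms of $\norm{u,y}$ and $\norm{y,x}$) now hold by property~(ii), and the $k$-distributable $t$-spanner bound on $H_M$ is given by assumption. Consequently, each of the four cost-ratio estimates yielding the four terms inside the maximum defining $\beta$ remains valid line-by-line.

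The only genuine change concerns step~7 of the algorithm: an agent $u \in P \setminus C_v$ now purchases the entire shortest path $\PG{H_M}{u,u'}$ to her closest node $u' \in C_v$, rather than a single edge. Since the total weight of this path equals $d_{H_M}(u,u')$, the edge cost charged to $u$ is still $\alpha \cdot d_{H_M}(u,u')$, exactly the analog of the term $\alpha\cdot\norm{u,y}$ in the original proof, and the paths through $u$ in the resulting network have the same structure as in the metric case. I expect the main obstacle to be the bookkeeping of edge ownership along these bought shortest paths, together with verifying that no alternative deviation by $u$ can cut cost by more than a factor of $\beta$ via a shorter non-path purchase; but property~(ii) guarantees $w(u,u') \le d_{H_M}(u,u')$ for any candidate direct edge, so such deviations are dominated by the path purchase already considered.

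Finally, as in the proof of Theorem~\ref{thm:apx_graph}, I would observe that $\SC(G)/\SC(G^*) \le \max_{v \in V} \cost(v,G)/\cost(v,G^*)$, and note that the four per-agent cost-ratio bounds derived above do not depend on the assumption that only agent $v$ deviates. Hence the same maximum bounds the social-cost ratio, yielding the claimed $\apxNEopt{\beta}{\beta}$ for $H$.
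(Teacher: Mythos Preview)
Your proposal is correct and follows exactly the route the paper itself sketches: the paper does not give a formal proof of this corollary but instead, in the paragraph preceding it, explains that $H_M$ has metric edge weights, that Algorithm~\ref{alg:apxNE} is to be run on $H_M$ with the step-7 modification you describe, and that the proofs of Theorems~\ref{thm:apx_NE} and~\ref{thm:apx_graph} then carry over verbatim with $\norm{\cdot,\cdot}$ replaced by $d_{H_M}(\cdot,\cdot)$ and $\maxDist$ by the longest shortest path in $H_M$. Your outline matches this precisely.

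One small slip: in your last remark you write that property~(ii) gives $w(u,u') \le d_{H_M}(u,u')$, but the inequality goes the other way (edges surviving in $H_M$ satisfy $w(u,u') = d_{H_M}(u,u')$, and removed edges satisfy $w(u,u') > d_H(u,u') = d_{H_M}(u,u')$), which is exactly what you need for the path purchase to dominate the direct-edge purchase.
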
 
\noindent However, the above statement relies on the existence of a $k$-dist\-ri\-but\-able $t$-spanner for an incomplete weighted host network, which is, to the best of our knowledge, still open. Hence, finding an efficient algorithm that computes a $k$-distributable $t$-spanner with low $k$ and $t$ would enable Algorithm $1$ to obtain $(\beta,\beta)$-networks with low $\beta$ for the GNCG.

\subsection{Price of Anarchy} We show a $\mathcal{O}(\alpha)$ upper bound on the PoA. This asymptotically matches the $\Omega(\alpha)$ lower bound from Bilò et al.~\cite{bilo2019geometric}.

\begin{restatable}{theorem}{thmPoAGNCG}\label{thm:PoA_gncg}
In the GNCG the PoA is at most $2(\alpha+1)$.
\end{restatable}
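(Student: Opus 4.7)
The plan is to bound the social cost of any NE $G$ by comparing it to a social optimum $G^*$ via a two-part argument: a distance-cost bound and an edge-cost bound, each aiming at a factor of $(\alpha+1)$ relative to $\SC(G^*)$.

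The core technical ingredient I would establish first is the lemma: for any NE $G$ and any social optimum $G^*$,
\[
    d_G(u,v) \;\le\; (\alpha+1)\, d_{G^*}(u,v) \quad \text{for all } u,v \in V.
\]
I would prove this by induction along the hops of a shortest $G^*$-path $u = x_0, x_1, \ldots, x_k = v$. For one hop $x_i x_{i+1}$: if either endpoint owns $x_i x_{i+1}$ in $G$, then $d_G(x_i,x_{i+1}) \le w(x_i,x_{i+1})$ trivially; otherwise, the NE condition (agent $x_i$ does not want to additionally buy the edge to $x_{i+1}$) gives $\alpha\, w(x_i,x_{i+1}) \ge d_G(x_i,x_{i+1}) - w(x_i,x_{i+1})$, hence $d_G(x_i,x_{i+1}) \le (\alpha+1)\, w(x_i,x_{i+1})$. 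Chaining along the path via the triangle inequality for $d_G$ yields the lemma. Summing over all ordered pairs gives $\SC_d(G) \le (\alpha+1)\SC_d(G^*) \le (\alpha+1)\SC(G^*)$.

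For the edge-cost half, my plan is to use the deviation in which each agent $u$ adopts her $G^*$-strategy $S_u^*$. The NE condition gives $\cost(u,G) \le \alpha\, w(S_u^*) + d_{G^u}(u,V)$, where $G^u$ is obtained from $G$ by replacing $u$'s strategy with $S_u^*$. Summing over $u$ produces $\SC(G) \le \SC_e(G^*) + \sum_u d_{G^u}(u,V)$, so the theorem reduces to showing $\sum_u d_{G^u}(u,V) \le (2\alpha+1)\,\SC(G^*)$. I would establish this by a refined variant of the inductive argument above, splitting the shortest $G^*$-path from $u$ to $v$ into its first hop (where $u$'s own $G^*$-strategy gives a direct edge to every $G^*$-neighbor) and the remaining tail (which stays within the unchanged $G$-edges of the other agents, where the NE distance lemma from Step~1 still applies).

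The main obstacle is precisely this bound on $d_{G^u}(u,V)$: the hybrid network $G^u$ is neither in equilibrium nor socially optimal, and edges on a $G^*$-shortest path may be owned by agents whose strategies are taken from $G$ rather than $G^*$, so the path need not literally be present in $G^u$. The fix I expect is a careful case analysis on the ownership of each edge $x_i x_{i+1}$ in both $G$ and $G^*$, so that in every case the per-hop distance in $G^u$ is at most $(\alpha+1)\, w(x_i,x_{i+1})$, either because the edge survives in $G^u$ or because the NE bound from Step~1 (applied to the still-NE sub-network on the other agents) supplies the detour length. Putting the two halves together yields $\SC(G) \le 2(\alpha+1)\SC(G^*)$, matching the $\Omega(\alpha)$ lower bound of Bilò et al.~\cite{bilo2019geometric} up to constant factors.
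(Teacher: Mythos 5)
Your first half is fine: the per-hop argument giving $d_G(u,v)\le(\alpha+1)\,d_{G^*}(u,v)$ for every NE $G$, and the resulting bound $\sum_{u,v}d_G(u,v)\le(\alpha+1)\sum_{u,v}d_{G^*}(u,v)$, is exactly the $(\alpha+1)$-spanner property of equilibria (Lemma~2.2 of Bilò et al.) that the paper also uses to obtain Inequality~\eqref{poa:gncg:thirdsummand}.

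The edge-cost half has a genuine gap. Your plan hinges on bounding $\sum_u d_{G^u}(u,V)$, where $G^u$ is $G$ with $S_u$ replaced by $S_u^*$. This deviation \emph{deletes} every edge $u$ owns in $G$ and substitutes the (possibly entirely different) edges $u$ owns in $G^*$. Nothing guarantees that $u$ stays well connected in $G^u$: if $u$ owns a bridge of $G$ that is not an edge of $E^*$ (or is an edge of $E^*$ owned by its other endpoint), then $G^u$ can be disconnected and $d_{G^u}(u,V)=+\infty$, so the NE inequality $\cost(u,G)\le\alpha\,w(S_u^*)+d_{G^u}(u,V)$ is vacuous. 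The repair you sketch---invoking the Step-1 bound ``on the still-NE sub-network of the other agents''---does not close this: the spanner property certifies a detour for each hop $x_ix_{i+1}$ \emph{inside $G$}, and that detour may route precisely through the edges owned by $u$ that $G^u$ no longer contains; the sub-network formed by the other agents' edges carries no equilibrium guarantee of its own and need not even be connected. The paper sidesteps this by using a deletion-only deviation with a globally fixed set of kept edges: it defines $B$ as the union, over all $xy\in E^*$, of a shortest $G$-path between $x$ and $y$, sets $R=E\setminus B$, and lets each agent $u$ delete only $R_u$. Since $G-R\supseteq B$ regardless of which agent deviates, every node remains within $(\alpha+1)\,d_{G^*}(u,v)$ of every other node after the deletion, which yields $\alpha\,w(R)\le(\alpha+1)\sum_u d_{G^*}(u,V)$, while $w(B)$ is charged against $w(E^*)$ directly via the spanner property. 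To salvage your swap-based deviation you would have to augment $S_u^*$ with enough of $u$'s $B$-edges to preserve those detours---at which point you have essentially rebuilt the paper's argument.
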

\begin{proof}
Consider a host network $H=(V,E(H))$, a stable network $G=(V,E)$ and an optimum network $G^*=(V,E^*)$. 
Since every NE is a $(\alpha+1)$-spanner (by Lemma 2.2 in \cite{bilo2019geometric}) the distance cost of $G$ is
\begin{align}
		\sum_{u,v\in V}d_G(u,v)\leq(\alpha+1)\sum_{u,v\in V}d_H(u,v)\leq (\alpha+1)\sum_{u,v\in V}d_{G^*}(u,v).\label{poa:gncg:thirdsummand}
\end{align}
Now we evaluate the edge cost of $G$. 
We partition the edges in $E$ concerning the edges in the optimum and analyze each set's total cost separately. 
Let $B\coloneqq\bigcup_{uv\in E^*}\{\pi_G(u,v)\}$ be a set of edges in $G$ appearing in some shortest $u$-$v$ path in $G$ for each $uv\in E^*$, i.e., for every edge $uv$ in the social optimum, $B$ contains all edges from a shortest path between $u$ and $v$ in $G$. 
We denote the rest of the edges in $G$ as $R\coloneqq E\setminus B$.
	
Since $G$ is a $(\alpha+1)$-spanner, we can evaluate the cost of the edges in $B$ as follows:
\begin{align}
		\alpha\cdot w(B) & =\alpha\cdot w\left(\bigcup_{uv\in E^*} \pi_G(u,v)\right)
		 \leq\alpha\sum_{uv\in E^*}d_G(u,v)\nonumber \\
           &\leq \alpha (\alpha+1)\sum_{uv\in E^*}d_{H}(u,v)      
		             \leq \alpha (\alpha+1)w(E^*).\label{poa:gncg:firstsummand}
	\end{align}
	
	Next, we compute the cost of edges in $R$. 
	Consider an agent $u\in V$. 
	Let $R_u$ be a set of edges from $R$ that $u$ buys in $G$. 
	Since $G$ is a NE, deleting all edges from $R_u$ is not an improving move for $u$. 
	Hence, $$\alpha w(R_u) + \sum_{v\in V}d_{G}(u,v) \le \sum_{v\in V}d_{G-R_u}(u,v),$$ where $G-R_u$ is the network obtained after the deletion.
	Thus, $$\alpha w(R_u)\leq  \sum_{v\in V}d_{G-R_u}(u,v) - d_{G}(u,v) \leq d_{G-R_u}(u,V) \leq d_{G-R}(u,V)$$
	
	To evaluate the distance $d_{G-R}(u,v)$, note that it is equal to the distance between $u$ and $v$ in the network $G$ restricted on the edge set $B$. 
	Consider a shortest path $\pi_{G^*}(u,v)$ in the optimum network $G^*$. 
	By definition, for each edge $xy\in \pi_{G^*}(u,v)$, set $B$ contains a shortest path $\pi_G(x,y)$ of length $$d_{G-R}(x,y)\leq (\alpha+1)d_H(x,y)\leq (\alpha+1)d_{G^*}(x,y).$$
	Thus,  \begin{align*}
	d_{G-R}(u,v)&\leq \sum\limits_{xy\in \pi_{G^*}(u,v)}d_{G-R}(x,y)\\
	&\leq (\alpha+1)\cdot\sum\limits_{xy\in \pi_{G^*}(u,v)}d_{G^*}(x,y)=(\alpha+1)d_{G^*}(u,v).
	\end{align*}
	The total cost of set $R$ then is $$\alpha w(R)=\alpha\sum\limits_{u\in V}w(R_u)\leq \sum\limits_{u\in V}d_{G-R}(u,V)\leq(\alpha+1)\sum\limits_{u\in V}d_{G^*}(u,V).$$
	In combination with Inequality~(\ref{poa:gncg:firstsummand}) and the upper bound for the distance cost in Inequality~(\ref{poa:gncg:thirdsummand}), we get
	\begin{align*}
		\frac{\SC(G)}{\SC(G^*)}  &=\frac{\alpha w(B)+\alpha w(R) + \sum_{u\in V}d_G(u,V)}{\alpha w(E^*)+\sum_{u\in V}d_{G^*}(u,V)}                                        \\
		    & \leq\frac{\alpha(\alpha+1)w(E^*)+2(\alpha+1)\sum_{u\in V}d_{G^*}(u,V)}{\alpha w(E^*)+\sum_{u\in V}d_{G^*}(u,V)} \\
		    & \leq\frac{2(\alpha+1)(\alpha w(E^*)+\sum_{u\in V}d_{G^*}(u,V))}{\alpha w(E^*)+\sum_{u\in V}d_{G^*}(u,V)}                                 
		     =2(\alpha+1).\qedhere
	\end{align*} 
\end{proof}

\begin{corollary}\label{cor:PoA_GNCG}
$PoA\in\Theta(\alpha)$ in the GNCG.
\end{corollary}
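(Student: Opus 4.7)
The plan is to derive Corollary~\ref{cor:PoA_GNCG} by combining the upper bound just established with a matching asymptotic lower bound taken from the literature. Since the statement has both an $\mathcal{O}(\alpha)$ and an $\Omega(\alpha)$ component, I would split the argument into these two directions and treat each in a single line.

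For the upper bound, Theorem~\ref{thm:PoA_gncg} directly yields $PoA \leq 2(\alpha+1)$ in the GNCG, which is $\mathcal{O}(\alpha)$. No further work is needed here; this is precisely the content of the preceding theorem.

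For the lower bound, I would appeal to the results of Bilò et al.~\cite{bilo2019geometric}, discussed in the related work section: they exhibit a tight bound of $\frac{\alpha+2}{2}$ for the tree metric variant of the GNCG. Since any tree metric instance is a special case of a general (arbitrary edge weight) GNCG instance, the same construction gives a lower bound of $\frac{\alpha+2}{2} \in \Omega(\alpha)$ for the PoA of the GNCG. Combining both bounds yields $PoA \in \Theta(\alpha)$.

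The main (minor) obstacle is only bookkeeping: making sure that the lower bound instance cited really lives inside our model as formulated (complete host network with arbitrary, possibly non-metric weights), so that embedding a tree metric instance and arguing that no additional shortcut edges can help an equilibrium is immediate. Since tree metrics are in particular metrics, this embedding is direct, so the corollary follows with no further computation.
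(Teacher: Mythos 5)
Your proposal is correct and matches the paper's (implicit) argument exactly: the upper bound is Theorem~\ref{thm:PoA_gncg}, and the lower bound is the $\Omega(\alpha)$ bound inherited from the tree-metric construction of Bilò et al.~\cite{bilo2019geometric}, which is a valid GNCG instance since lower-bound instances for a special case carry over to the general model. Nothing further is needed.
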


\section{Conclusion}
We studied the problem of designing networks that are both efficient in terms of social cost and stable in terms of being close to a Nash equilibrium state. For this, we focus on studying $(\beta, \gamma)$-networks that are in $\beta$-approximate Nash equilibrium and have a social cost of at most $\gamma$ times the cost of the social optimum. In particular, we considered $(\beta,\gamma)$-networks in the Euclidean version of the Generalized Network Creation Game by Bilò et al.~\citep{bilo2019geometric}, where agents are points in $\mathbb{R}^d$, and each agent aims to maximize her centrality by creating costly edges. This version has the natural feature that the cost of each edge is proportional to the Euclidean distance between the endpoints. Hence, this model captures many real-world settings for the decentralized creation of communication networks.   

Our main contribution is a $\mathcal{O}(n^2)$ time algorithm for computing $(\beta,\beta)$-networks with low $\beta$. First of all, this result is interesting since it is one of the very few algorithmic results for constructing (approximate) Nash equilibria in the realm of network creation games. Such a centralized algorithm is valuable in a setting with strategic agents since a central designer could propose a network to the strategic agents which then may selfishly deviate from the proposed solution. If this proposed network is (almost) stable, then the agents have no (or only a very low) incentive for deviating. If additionally the proposed network has other beneficial properties like (almost) optimal social cost then this is another compelling reason for accepting the centrally designed proposal.

Second, our algorithm is simple but non-trivial and relies on techniques from the well-studied $t$-spanner problem. 
Moreover, our algorithm creates $(\mathcal{O}(1),\mathcal{O}(1))$-networks if $\alpha \leq \sqrt[3]{n}$, i.e., for the case where edges are comparably cheap or, even more realistic, where the number of nodes in the network is large. The same holds true for networks on random point sets or on grids, both of which seem to be natural topologies.

In contrast to these positive results, we observed that none of the extreme cases of $(\beta.\gamma)$-networks could guarantee a constant approximation. Namely, a social optimum network can be very unstable and may be NP-hard to compute, while a Nash equilibrium can have a much higher social cost than the optimal network. Moreover, we have shown that the finite improvement property does not hold for our model and it was shown by Bilò et al.~\cite{bilo2019geometric} that computing best response strategies is NP-hard. This indicates that there is no efficient way of finding an (almost) stable state in a decentralized way. Hence, computing such a state via a centralized algorithm and then proposing it to the agents could be a way to circumvent this hard problem.

As another important result of the paper, we have shown that the PoA depends only on the parameter $\alpha$ and not on the dimension of the Euclidean space or on metric edge weights. Although conjectured by Bilò et al.~\cite{bilo2019geometric}, this is surprising because it contradicts the intuition that the PoA should be lower for the metric case, especially for low dimensions.

We focused on three extreme cases of the bicriteria optimization, i.e., when one of the approximation factors is 1 or both approximation factors are equal. Of course, it would be interesting to map the whole Pareto frontier precisely. 
Another promising direction for future work is to extend our approximation results for the non-metric case. For example, any solution for a $k$-distributable $t$-spanner for weighted networks would make our approximation algorithm work even for the Generalized Network Creation Game.

\bibliographystyle{abbrv}
\bibliography{paper}

\end{document}